\newcommand{\deq}{\coloneqq}
\theoremstyle{plain}%
\newtheorem{theorem}{Theorem}[section]
\newtheorem{claim}[theorem]{Claim}
\newtheorem{lemma}[theorem]{Lemma}
\newtheorem{corollary}[theorem]{Corollary}
\theoremstyle{remark}%
\newtheorem*{defn:unnumbered}[FakeCounter]{Definition}
\newtheorem*{remark:unnumbered}[FakeCounter]{Remark}%
\newtheorem{remark}[theorem]{Remark}%
\newcommand{\myqedsymbol}{\rule{2mm}{2mm}}
\theoremstyle{nonumberplain}%
\newtheorem{proof}{Proof:}%
\numberwithin{equation}{section}
\numberwithin{figure}{section}
\numberwithin{table}{section}
\newcommand{\etal}{\textit{et~al.}\xspace}
\renewcommand{\th}{\ensuremath{\hphantom{}^{\,\mathrm{th}}}\xspace}
\newcommand{\HLinkShort}[2]{\hyperref[#2]{#1\ref*{#2}}}
\newcommand{\HLink}[2]{\hyperref[#2]{#1~\ref*{#2}}}
\newcommand{\HLinkPage}[2]{\hyperref[#2]{#1~\ref*{#2}%
      $_\text{p\pageref{#2}}$}}
\newcommand{\HLinkPageOnly}[1]{\hyperref[#1]{Page~\refpage*{#1}%
      $_\text{p\pageref{#1}}$}}
\newcommand{\HLinkSuffix}[3]{\hyperref[#2]{#1\ref*{#2}{#3}}}
\newcommand{\HLinkPageSuffix}[3]{\hyperref[#2]{#1\ref*{#2}%
      #3$_\text{p\pageref{#2}}$}}
\newcommand{\figlab}[1]{\label{fig:#1}}
\newcommand{\figref}[1]{\HLink{Figure}{fig:#1}}
\newcommand{\seclab}[1]{\label{sec:#1}}
\newcommand{\secref}[1]{\HLink{Section}{sec:#1}}
\newcommand{\clmlab}[1]{\label{claim:#1}}
\newcommand{\clmref}[1]{\HLink{Claim}{claim:#1}}
\newcommand{\itemlab}[1]{\label{item:#1}}
\newcommand{\itemref}[1]{\HLinkSuffix{}{item:#1}{}}
\newcommand{\remlab}[1]{\label{rem:#1}}
\newcommand{\remref}[1]{\HLink{Remark}{rem:#1}}%
\newcommand{\lemlab}[1]{\label{lemma:#1}}
\newcommand{\lemref}[1]{\HLink{Lemma}{lemma:#1}}%
\newcommand{\tablab}[1]{\label{table:#1}}%
\newcommand{\tabref}[1]{\HLink{Table}{table:#1}}%
\newcommand{\alglab}[1]{\label{Algorithm:#1}}%
\newcommand{\algref}[1]{\HLink{Algorithm}{Algorithm:#1}}%
\newcommand{\thmlab}[1]{{\label{theo:#1}}}
\newcommand{\thmref}[1]{\HLink{Theorem}{theo:#1}}
\providecommand{\eqlab}[1]{}%
\renewcommand{\eqlab}[1]{\label{equation:#1}}
\newcommand{\Eqref}[1]{\HLinkSuffix{Eq.~(}{equation:#1}{)}}
\providecommand{\Mh}[1]{{#1}}%
\renewcommand{\Mh}[1]{{\textcolor{red}{#1}}}%
\providecommand{\si}[1]{#1}
\newcommand{\poly}{\mathrm{poly}}
\newcommand{\calU}{\mathcal{U}}
\newcommand{\pbrcx}[1]{\left[ {#1} \right]}%
\newcommand{\cardin}[1]{\left| {#1} \right|}%
\newcommand{\pth}[2][\!]{\mleft({#2}\mright)}%
\newcommand{\brc}[1]{\left\{ {#1} \right\}}
\newcommand{\eps}{{\varepsilon}}%
\newcommand{\epsA}{\Mh{\xi}}%
\newcommand{\tm}{\Mh{\widetilde{e}}}%
\newcommand{\idx}{\Mh{\alpha}}%
\newcommand{\ExChar}{\mathbb{E}}%
\newcommand{\Ex}[1]{\ExChar\mleft[ #1 \mright]}
\newcommand{\Var}[1]{\mathop{\mathbb{V}}\mleft[ #1 \mright]}%
\newcommand{\ExCond}[2]{\ExChar\mleft[ #1 \;\middle\vert\; #2 \mright]}
\newcommand{\ProbCond}[2]{\ProbChar\mleft[%
       #1 \;\middle\vert\; #2 \mright]}
\newcommand{\ProbChar}{\mathbb{P}}
\newcommand{\Prob}[1]{\ProbChar\!\pbrcx{#1}}
\newcommand{\e}{\varepsilon}
\newcommand{\ce}{c_{\mathsf{ce}}}
\newcommand{\wt}{\widetilde}%
\newcommand{\wtE}{\Mh{\widetilde{e}}}%
\newcommand{\sq}{\subseteq}
\newcommand{\estm}{\Mh{\widetilde{\nEdges}}}
\newcommand{\est}[1]{\widetilde{#1}}
\newcommand{\polylog}{\mathrm{polylog}}
\newcommand{\IntRange}[1]{\mleft\llbracket #1 \mright\rrbracket}
\newcommand{\IRX}[1]{\IntRange{#1}}%
\newcommand{\IRY}[2]{\left\llbracket #1:#2 \right\rrbracket}
\newcommand{\ColX}[1]{\Mh{n}_{#1}}
\newcommand{\ColLTY}[2]{\Mh{C}_{\!<#1}^{\,#2}}
\newcommand{\ColGTY}[2]{\Mh{C}_{>#1}^{#2}}
\newcommand{\Term}[1]{\textsf{#1}}
\newcommand{\IS}{\Term{IS}\xspace}
\newcommand{\BIS}{\Term{BIS}\xspace}
\newcommand{\hrefb}[3][black]{\href{#2}{\color{#1}{#3}}}%
\definecolor{blue25}{rgb}{0, 0, 11}
\newcommand{\emphic}[2]{%
   \textcolor{blue25}{%
      \textbf{\emph{#1}}}%
   \index{#2}}
\newcommand{\emphi}[1]{\emphic{#1}{#1}}
\newlist{compactitem}{itemize}{4}
\setlist[compactitem,1]{nolistsep,label=$\star$,leftmargin=0.6cm}
\newlist{compactenumA}{enumerate}{5}%
\setlist[compactenumA]{topsep=0pt,itemsep=-1ex,partopsep=1ex,parsep=1ex,%
   label=(\Alph*)}%
\newlist{compactenuma}{enumerate}{5}%
\setlist[compactenuma]{topsep=0pt,itemsep=-1ex,partopsep=1ex,parsep=1ex,%
   label=(\alph*)}%
\newlist{compactenumi}{enumerate}{5}%
\setlist[compactenumi]{topsep=0pt,itemsep=-1ex,partopsep=1ex,parsep=1ex,%
   label=(\roman*)}%
\newcommand{\BadEvent}{\mathcal{B}}%
\newcommand{\Set}[2]{\left\{ #1 \;\middle\vert\; #2 \right\}}
\newcommand{\remove}[1]{}
\newcommand{\nSA}{\Mh{\mathcal{N}}}%
\newcommand{\SA}{\ensuremath{\Mh{U}}}%
\newcommand{\eA}{\ensuremath{\Mh{u}}}%
\newcommand{\ncols}{\Mh{k}}%
\newcommand{\SB}{\Mh{V}}
\newcommand{\eB}{\Mh{v}}%
\newcommand{\SC}{\Mh{B}}
\newcommand{\SX}{\Mh{X}}
\newcommand{\SQ}{\Mh{Q}}
\newcommand{\SD}{\Mh{W}}
\newcommand{\ds}{\displaystyle}%
\newcommand{\AlgorithmI}[1]{{%
      \textcolor[named]{AlgorithmColor}{\texttt{\bf{#1}}}%
   }}
\newcommand{\Algorithm}[1]{{%
      \AlgorithmI{#1}%
      \index{algorithm!#1@{\AlgorithmI{#1}}}%
   }}
\newcommand{\CoarseEstimator}{\Algorithm{CoarseEstimator}\xspace}%
\newcommand{\CheckEstimate}{\Algorithm{Check{}Estimate}\xspace}%
\newcommand{\NiceFrac}[2]{%
   {\raisebox{0.2em}{\ensuremath{#1}}\!\!\!}%
   {\rotatebox[origin=c]{-20}{\Big/}}%
   {\! \raisebox{-0.2em}{\ensuremath{#2}} }}%
\newcommand{\Graph}{\Mh{G}}%
\newcommand{\Edges}{\Mh{E}}%
\newcommand{\AllEdges}{\Mh{Z}}%
\newcommand{\DS}{\Mh{\mathcal{D}}}%
\newcommand{\acc}{\Mh{\varphi}}%
\newcommand{\wtp}{\Mh{\mathsf{w}}}%
\newcommand{\EdgesX}[1]{\Mh{E}\pth{#1}}%
\newcommand{\EdgesY}[2]{\Mh{E}\pth{#1, #2}}%
\newcommand{\nEdges}{\Mh{\ensuremath{m}}}%
\newcommand{\mG}{\Mh{\nEdges(\Graph)}}%
\newcommand{\wX}[1]{\Mh{\overline{\mathsf{w}}}\pth{#1}}%
\newcommand{\mX}[1]{\nEdges\pth{#1}}%
\newcommand{\mY}[2]{\nEdges\pth{#1, #2}}%
\newcommand{\mAX}[1]{\nEdges_{\mathrm{active}}\pth{#1}}%
\newcommand{\ceil}[1]{\left\lceil {#1} \right\rceil}
\newcommand{\BadProb}{\Mh{\gamma}}%
\newcommand{\Sample}{\Mh{R}}%
\newcommand{\Entity}{\mathcal{H}}
\newcommand{\cA}{\Mh{{c}_1}}%
\newcommand{\cC}{\Mh{{c}_3}}%
\newcommand{\cE}{\Mh{{c}_5}}%
\newcommand{\Lsmall}{\Mh{L_{\mathrm{small}}}}%
\newcommand{\Llen}{\Mh{L_{\mathrm{len}}}}%
\newcommand{\Lbase}{\Mh{L_{\mathrm{base}}}}%
\newcommand{\RT}{\mathcal{R}}%
\newcommand{\degX}[1]{\Mh{\mathrm{deg}\pth{#1}}}%
\newcommand{\ctau}{\Mh{\varsigma}}
\newcommand{\pcdot}{\,}
\newcommand{\CC}{\mathcal{C}}%
\newcommand{\BFS}{breadth-first search\xspace}%
\newcommand{\VRT}{V}%
\newcommand{\VX}[1]{\VRT\pth{#1}}%
\let\c@table\c@figure
\begin{document}

\title{Edge Estimation with Independent Set Oracles%
   \thanks{A preliminary version of this paper appeared in the proceedings of \si{ITCS}
      2018 \cite{bhrrs-eeiso-18}.}%
}

\date{}%

\author{%
   Paul Beame%
   \thanks{Paul G. Allen School Computer Science \& Engineering,
      University of Washington, Seattle.
      \url{beame@cs.washington.edu}}%
   \and%
   Sariel Har-Peled%
   \thanks{Department of Computer Science, University of Illinois,
      Urbana-Champaign. %
   }
   \and %
   Sivaramakrishnan Natarajan Ramamoorthy%
   \thanks{Paul G. Allen School Computer Science \& Engineering,
   	University of Washington, Seattle.%
      \url{sivanr@cs.washington.edu}%
   }%
   \and %
   Cyrus Rashtchian%
   \thanks{Department of Computer Science \& Engineering, UC San Diego. %
      \url{crashtchian@eng.ucsd.edu} }%
   \and %
   Makrand Sinha%
   \thanks{Centrum Wiskunde \& Informatika, Amsterdam, The Netherlands.  
      \url{makrand@cs.washington.edu}}%
}

\date{\today}%

\maketitle

\begin{abstract}  
    We study the task of estimating the number of edges in a graph,
    where the access to the graph is provided via an independent set
    oracle. Independent set queries draw motivation from group testing
    and have applications to the complexity of decision versus
    counting problems. We give two algorithms to estimate the number
    of edges in an $n$-vertex graph, using (i) $\polylog(n)$ bipartite
    independent set queries, or (ii) ${n}^{2/3} \pcdot\polylog(n)$
    independent set queries.
\end{abstract}

\section{Introduction}

We investigate the problem of estimating the number of edges in a
simple, unweighted, undirected graph $\Graph=(\IRX{n},\Edges)$, where
$\IRX{n} \deq \brc{1,2,\dots,n }$ and $\nEdges=|\Edges|$. Here, the
only access to the graph is provided via an oracle that answers
independent set queries. For a parameter $\e > 0$, we wish to output
an estimate $\estm$ satisfying
\begin{math}
(1-\e)\nEdges \leq \estm \leq (1+ \eps)\nEdges
\end{math}
with high probability. We consider randomized, adaptive algorithms
with access to one of the two following oracles:
\begin{compactitem}
	\smallskip%
	\item \emphi{\BIS} (Bipartite independent set) oracle: Given
	disjoint subsets $\SA, \SB \sq \IRX{n}$, a \BIS query answers
	whether there is no edge between $\SA$ and $\SB$ in
	$\Graph$. Formally, the oracle returns whether
	$\mY{\SA}{\SB} = 0$, where $\mY{\SA}{\SB}$ denotes the number of
	edges with one endpoint in $\SA$ and the other in~$\SB$.
	
	\smallskip%
	\item \emphi{\IS} (Independent set) oracle: Given a subset
	$\SA \sq \IRX{n}$, an \IS query answers whether $\SA$ satisfies
	$\mX{\SA}=0$, where $\mX{\SA}$ denotes the number of edges with
	both endpoints in $\SA$.
\end{compactitem}
\smallskip%
Previous work on graph parameter estimation has primarily focused on
{\em local} queries, such as
\begin{enumerate*}[label=(\roman*)]
	\item \emph{degree} queries (which output the degree of a vertex
	$v$),
	\item \emph{edge existence} queries (which answer whether a pair
	$\{u,v\}$ forms an edge), or
	\item \emph{neighbor} queries (which provide the $i$\th neighbor
	of a vertex $v$).
\end{enumerate*}
Feige~\cite{f-sirvu-06} and Goldreich and Ron~\cite{gr-aapg-08}
prove that there are cases where polynomial number of such  local queries are
required.%

These queries can only obtain \emph{local} information about the
graph.  This motivates an investigation of other types of natural
queries that may enable efficient parameter estimation. The
independent set queries described above generalize an {edge existence}
query, and their non-locality opens the door for sub-polynomial query
algorithms for various graph parameter estimation tasks.

\subsection{Motivation and related work}
\seclab{related}

The most relevant motivation for \BIS and \IS queries comes from the
area of sub-linear time algorithms for graph parameter
estimation. \BIS and \IS queries also have interesting connections to
the classical area of group testing, to emptiness versus counting
questions in computational geometry, and to the complexity of decision
versus counting problems.

\paragraph{Graph parameter estimation.}

Feige \cite{f-sirvu-06} showed how to use
$O\left({\sqrt{n}/\e}\right)$ degree queries to output $\estm$ that
satisfies $\nEdges \leq \estm \leq (2+\e)\nEdges$, where
$\nEdges = \cardin{\Edges}$. Moreover, he showed that any algorithm
achieving better than a 2-approximation must use a nearly linear
number of degree queries. Goldreich and Ron~\cite{gr-aapg-08} showed
that by using both degree and neighbor queries, the approximation
improves to
\begin{math}
(1-\e)\nEdges \leq \estm \leq (1+\e)\nEdges
\end{math}
by using
$O\bigl( (n/ \sqrt{\nEdges})\pcdot\poly(\log \nEdges,1/\e) \bigr)$
queries. It is worth noting that Feige \cite{f-sirvu-06} and Goldreich
and Ron \cite{gr-aapg-08} have identified certain hard instances
showing that these upper bounds cannot be improved, up to $\polylog$
factors.  Aliakbarpour \etal \cite{abgpry-stacs-18} do better than
the lower bound of Goldreich and Ron \cite{gr-aapg-08} by allowing one to
sample edges randomly.

Related work approximates the number of stars~\cite{grs-csoss-11}, the
minimum vertex cover~\cite{orrr-nosta-12}, the number of
triangles~\cite{s-ssaat-15,elrs-actst-17}, and the number of
$k$-cliques~\cite{ers-ankcs-17}. A special case of \BIS query (where
one of the bipartition sets is a singleton) has been used for testing
$k$-colorability of graphs \cite{bkkr-csqtp-13}, and high degree
vertex discovery \cite{wly-imcvh-13}.

\paragraph{Group testing.}
A classic estimation problem involves efficiently approximating the
number of defective items or infected individuals in a certain
collection or population~\cite{cs-ugtep-90, d-ddmlp-43,
	s-gteir-85}. To query a population, a small group is formed, and
all the individuals in the group are tested in one shot. For example,
in genome-wide association studies, combined pools of DNA may be
tested as a group for certain variants~\cite{kzcts-cfhpa-05}. In group
testing, the result of a test often indicates only whether there is at
least one infected or defective unit, or if there is none. Such a
dichotomous outcome resembles the \IS/\BIS queries.  In the graph
setting, group testing suggests testing pairwise interactions between
many items or individuals, instead of singular events.

\paragraph{Computational geometry.}
Certain geometric applications exhibit the phenomenon that emptiness
queries have more efficient algorithms than counting queries.  For
example, in three dimensions, for a set $P$ of $n$ points, half-space
counting queries (i.e., what is the size of the set $|P \cap h|$, for
a query half-space $h$), can be answered in $O(n^{2/3})$ time, after
near-linear time preprocessing.  On the other hand, emptiness queries
(i.e., is the set $P \cap h$ empty?) can be answered in $O( \log n)$
time. Aronov and Har-Peled \cite{ah-adrp-08} used this to show how to
answer approximate counting queries (i.e., estimating $|P \cap h|$),
with polylogarithmic emptiness queries.

As another geometric example, consider the task of counting edges in
disk intersection graphs using GPUs \cite{f-dgss-03}. For these
graphs, \IS queries decide if a subset of the disks have any
intersection (this can be done using sweeping in $O(n \log n)$
time~\cite{cj-spigu-15}). Using a GPU, one could quickly draw the
disks and check if the sets share a common pixel. In cases like this
-- when \IS and \BIS oracles have fast implementations -- algorithms
exploiting independent set queries may be useful.

\paragraph{Decision versus counting complexity.}
A generalization of \IS and \BIS queries previously appeared in a line
of work investigating the relationship between decision and counting
problems \cite{s-cac-83, s-aasp-85,
	dl-fgrac-18}. Stockmeyer~\cite{s-cac-83, s-aasp-85} showed how to
estimate the number of satisfying assignments for a circuit with
queries to an $\mathsf{NP}$ oracle.  Ron and Tsur~\cite{rt-pehss-16}
observed that Stockmeyer implicitly provided an algorithm for
estimating set cardinality using \emph{subset} queries, where a subset
query specifies a subset $X \sq \calU$ and answers whether
$|X \cap S| = 0$ or not.  Subset queries are significantly more
general and flexible than \IS and \BIS queries because $S$ corresponds
to the set of edges in the graph and $X$ is any subset of pairs of
vertices. Namely, \IS and \BIS queries can be interpreted as
restricted subset queries. In particular, the algorithms mentioned can
not be implemented directly using \IS or \BIS queries.

Indeed, consider subset queries in the context of estimating the
number of edges in a graph. To this end, fix $|S| = \nEdges$ (i.e.,
the number of edges in the graph) and $|\calU| = \binom{n}{2}$ (the
number of possible edges).  Stockmeyer provided an algorithm using
only $O(\log \log \nEdges \pcdot \poly(1/\eps))$ subset queries to
estimate $\nEdges$ within a factor of $(1 + \eps)$ with a constant
success probability. Note that for a high probability bound, which is
what we focus on in this paper, the algorithm would naively require
$O(\log n \cdot \log \log \nEdges \pcdot \poly(1/\eps))$ queries to
achieve success probability at least $1-1/n$.  Falahatgar \etal
\cite{fjops-endgt-16} gave an improved algorithm that estimates
$\nEdges$ up to a factor of $(1+\e)$ with probability $1-\delta$ using
$2\log \log \nEdges + O\pth{(1/\eps^2) \log(1/\delta)}$ subset
queries. Nearly matching lower bounds are also known for subset
queries~\cite{s-cac-83, s-aasp-85, rt-pehss-16, fjops-endgt-16}. Ron
and Tsur \cite{rt-pehss-16} also study a restriction of subset
queries, called {\em interval queries}, where they assume that the
universe $\calU$ is ordered and the subsets must be intervals of
elements.  We view the independent set queries that we study as
another natural restriction of subset queries.

Analogous to Stockmeyer's results, a recent work of Dell and Lapinskas
\cite{dl-fgrac-18} provides a framework that relates edge estimation
using \BIS and edge existence queries to a question in fine-grained
complexity. They study the relationship between decision and counting
versions of problems such as 3SUM and Orthogonal Vectors.  They proved
that, for a bipartite graph, using $O(\e^{-2}\log^6 n)$ \BIS queries,
and $\eps^{-4} n \pcdot\polylog(n)$ edge existence queries, one can
output a number $\estm$, such that, with probability at least
$1 - 1/n^2$, we have
\begin{math}
(1-\e)\nEdges\leq \estm \leq (1+\e)\nEdges.
\end{math}

Dell and Lapinskas \cite{dl-fgrac-18} used edge estimation to obtain
approximate counting algorithms for problems in fine-grained
complexity. For instance, given an algorithm for 3SUM with runtime~$T$, they obtain an algorithm that estimates the number of YES
instances of 3SUM with runtime
$ {O( T \eps^{-2}\log^6 n)} + {\eps^{-4} n\pcdot \polylog(n)}$. The
relationship is simple. The decision version of 3SUM corresponds to
checking if there is at least one edge in a certain bipartite
graph. The counting version then corresponds to counting the edges in
this graph. We note that in their application, the large number
$O(n \pcdot \polylog(n))$ of edge existence queries does not affect
the dominating term in the overall time in their reduction; the larger
term in the time is a product of the time to decide 3SUM and the
number of \BIS queries.

\subsection{Our results}
We describe two new algorithms.  Let $\Graph = (\IRX{n},\Edges)$ be a
simple graph with $\nEdges = \cardin{\Edges}$ edges.

\paragraph{The Bipartite Independence Oracle.}
We present an algorithm that uses \BIS queries and computes an
estimate $\estm$ for the number of edges in $\Graph$, such that
\begin{math}
(1- \eps)\nEdges \leq \estm \leq (1+\eps)\nEdges.
\end{math}
The algorithm performs $O( \eps^{-4} \log^{14} n )$ \BIS queries, and
succeeds with high probability (see \thmref{bisq} for a precise
statement). Ignoring the cost of the queries, the running time
is near linear (we mostly ignore running times in this paper, since query complexity is our main resource).  Since $\polylog(n)$ \BIS
queries can simulate a degree query (see \secref{d:est}), one can
obtain a $(2+\eps)$-approximation of~$\nEdges$ by using Feige's
algorithm~\cite{f-sirvu-06}, which uses degree queries.  This gives an
algorithm that uses
$O\pth{\sqrt{n} \pcdot \polylog(n)/\mathrm{poly}(\e)}$ \BIS
queries. Our new algorithm provides significantly better guarantees,
in terms of both the approximation and number of \BIS queries.

The result is somewhat more general than stated above. One can use
the algorithm to estimate the number of edges in any induced
subgraph of the original graph. Similarly, one can estimate the
number of edges in the graph between any two disjoint subsets of
vertices $\SA,\SB \subseteq \IRX{n}$. That is, the algorithm can
estimate the size of
$\EdgesY{\SA}{\SB}= \Set{ \eA \eB \in \Edges}{\eA \in \SA, \eB \in
	\SB}$.%

Compared to the result of Dell and Lapinskas \cite{dl-fgrac-18}, our
algorithm uses exponentially fewer queries, since we do not spend
$n \pcdot \polylog(n)$ edge existence queries. Our improvement does
not seem to imply anything for their applications in fine-grained
complexity. We leave open the question of finding problems where a
more efficient \BIS algorithm would lead to new decision versus
counting complexity results.

\paragraph{The Ordinary Independence Oracle.}
We also present a second algorithm, using only \IS queries to compute
a $(1+\eps)$-approximation. It performs
\begin{math}
O( \eps^{-4} \log^5 n + \min(n^2 / \nEdges, \sqrt{\nEdges}) \cdot
\eps^{-2} \log^2 n )
\end{math}
\IS queries (see \thmref{is:approx}). In particular, the number of \IS
queries is bounded by
\begin{math}
O( \eps^{-4} \log^5 n + \eps^{-2} n^{2/3} \log^2 n ).
\end{math}
The first term in the minimum (i.e., $\approx n^2/ \nEdges$) comes
from a folklore algorithm for estimating set cardinality using
membership queries (see \secref{subset:size}).  The second term in the
minimum (i.e., $\approx \sqrt{\nEdges}$) is the number of queries used
by our new algorithm.

We observe that \BIS queries are surprisingly more effective for
estimating the number of edges than \IS queries. Shedding light on
this dichotomy is one of the main contributions of this work.

\begin{table}[t]
	\centering
	\begin{tabular}{@{}lcll@{}}
		\toprule
		Query Types
		& Approximation
		& \# Queries {\footnotesize (up to \si{const.} factors) }
		& Reference \\
		\midrule
		Edge existence
		&     $1+\e$
		&  $\displaystyle\Bigl. (n^2/\nEdges)\ \poly(\log n, 1/\e)$
		& Folklore (see \secref{isq})
		\\
		Degree
		&     $2+\e$
		&  $\Bigl. \displaystyle{\sqrt{n}} \log n/{\e}$
		& \cite{f-sirvu-06}   %
		\\
		Degree + neighbor
		&     $1+\e$
		&  $\Bigl.\sqrt{n} \  \poly(\log n, 1/\e)$
		&  \cite{gr-aapg-08} %
		\\
		Subset
		& $1+\e$
		& $\Bigl.\poly( \log n,  1/\e)$
		& \cite{s-aasp-85, fjops-endgt-16}
		\\
		\BIS %
		& $1+\e$
		& $\Bigl. n \  \poly(\log n, 1/\e)$
		& \cite{dl-fgrac-18} %
		\\
		\midrule
		\BIS
		&     $1+\e$
		&  $\Bigl.\poly(\log n, 1/\e)$
		& This Work
		\\
		\IS
		&     $1+\e$
		&  $\Bigl. \min \pth{ \sqrt{\nEdges},\  \displaystyle {n^2}/{\nEdges} }
		\pcdot\poly(\log n, 1/\e)$
		& This Work
		\\ \bottomrule
	\end{tabular} 
	\caption{Comparison of the best known algorithms using a variety
		of queries for estimating the number of edges $m$ in a graph
		with $n$ vertices. The bounds stated are for high probability
		results, with error probability at most $1/n$. Constant factors
		are suppressed for readability.}
	\tablab{queries}
\end{table}

\paragraph{Comparison with other queries.}

\tabref{queries} summarizes the results for estimating the number of
edges in a graph in the context of various query types. Given some of
the results in \tabref{queries} on edge estimation using other types
of queries, a natural question is how well \BIS and \IS queries can
simulate such queries. In \secref{d:est}, we show that
$O( \eps^{-2} \log n)$ \BIS queries are sufficient to simulate degree
queries. On the other hand, we do not know how to simulate a neighbor
query (to find a specific neighbor) with few \BIS queries, but a
random neighbor of a vertex can be found with $O(\log n)$ \BIS queries
(see \cite{bkkr-csqtp-13}). For \IS queries, it turns out that
estimating the degree of a vertex $v$ up to a constant factor requires
at least $\Omega\left({n}/{\degX{v}}\right)$ \IS queries (see
\secref{is-degree}).

\paragraph{Notation.}
Throughout, $\log$ and $\ln$ denotes the logarithm taken in base $2$
and~$e$, respectively. For integers, $u, k$, let
$\IRX{k} = \{1,\dots,k\}$ and $\IRY{u}{k} = \{u,\dots,k\}$. The
notation $x = \polylog(n)$ means $x=O(\log^c n)$ for some constant
$c > 0 $. A collection of disjoint sets $\SA_1,\dots,\SA_k$ such that
$\bigcup_i \SA_i = \SA$, is a \emphi{partition} of the set $\SA$, into
$k$ \emphi{parts} (a part $\SA_i$ might be an empty set). In
particular, a (uniformly) \emphi{random partition} of $\SA$ into $k$
parts is chosen by coloring each element of $\SA$ with a random number
in $\IRX{k}$ and identifying $\SA_i$ with the elements colored with
$i$.

Throughout, we use $\Graph=(\IRX{n}, \Edges)$ to the denote the input
graph. The number of edges in $\Graph$ is denoted by
$\nEdges = \cardin{\Edges}$.  For a set $\SA \subseteq \IRX{n}$, let
$\EdgesX{\SA} = \Set{ uv \in \Edges}{u,v \in \SA}$ be the set of edges
between vertices of $\SA$ in $\Graph$.  For two disjoint sets
$\SA,\SB \subseteq \IRX{n}$, let $\EdgesY{\SA}{\SB}$ denote the set of
edges between $\SA$ and $\SB$:
$\EdgesY{\SA}{\SB}= \Set{ \eA \eB \in \Edges}{\eA \in \SA, \eB \in
	\SB}$. Let $\mX{\SA}$ and $\mY{\SA}{\SB}$ denote the number of
edges in $\EdgesX{\SA}$ and $\EdgesY{\SA}{\SB}$, respectively. We also
abuse notation and let $\nEdges(H)$ be the number of edges in a
subgraph $H$ (e.g., $\mG = \nEdges$).

\paragraph{High probability conventions.}
Through the paper, the randomized algorithms presented would succeed
with high probability; that is, with probability
$\geq 1 -1/n^{\Omega(1)}$.  Formally, this means the probability of
success is $\geq 1 - 1/n^c$, for some arbitrary constant $c > 0$. For
all these algorithms, the value of $c$ can be increased to any
arbitrary value (i.e., improving the probability of success of the
algorithm) by increasing the asymptotic running time of the algorithm
by a constant factor that depends only on $c$. For the sake of
simplicity of exposition, we do not explicitly keep track of these
constants (which are relatively well-behaved).

\subsection{Overview of the algorithms}
\seclab{overview}

\subsubsection{The \BIS algorithm}

Our discussion of the \BIS algorithm follows \figref{algorithm}, which
depicts the main components of one level of our recursive
algorithm. Our algorithms rely on several building blocks, as
described next.

\begin{figure}[p]
    \centering{\includegraphics[width=.99\textwidth]{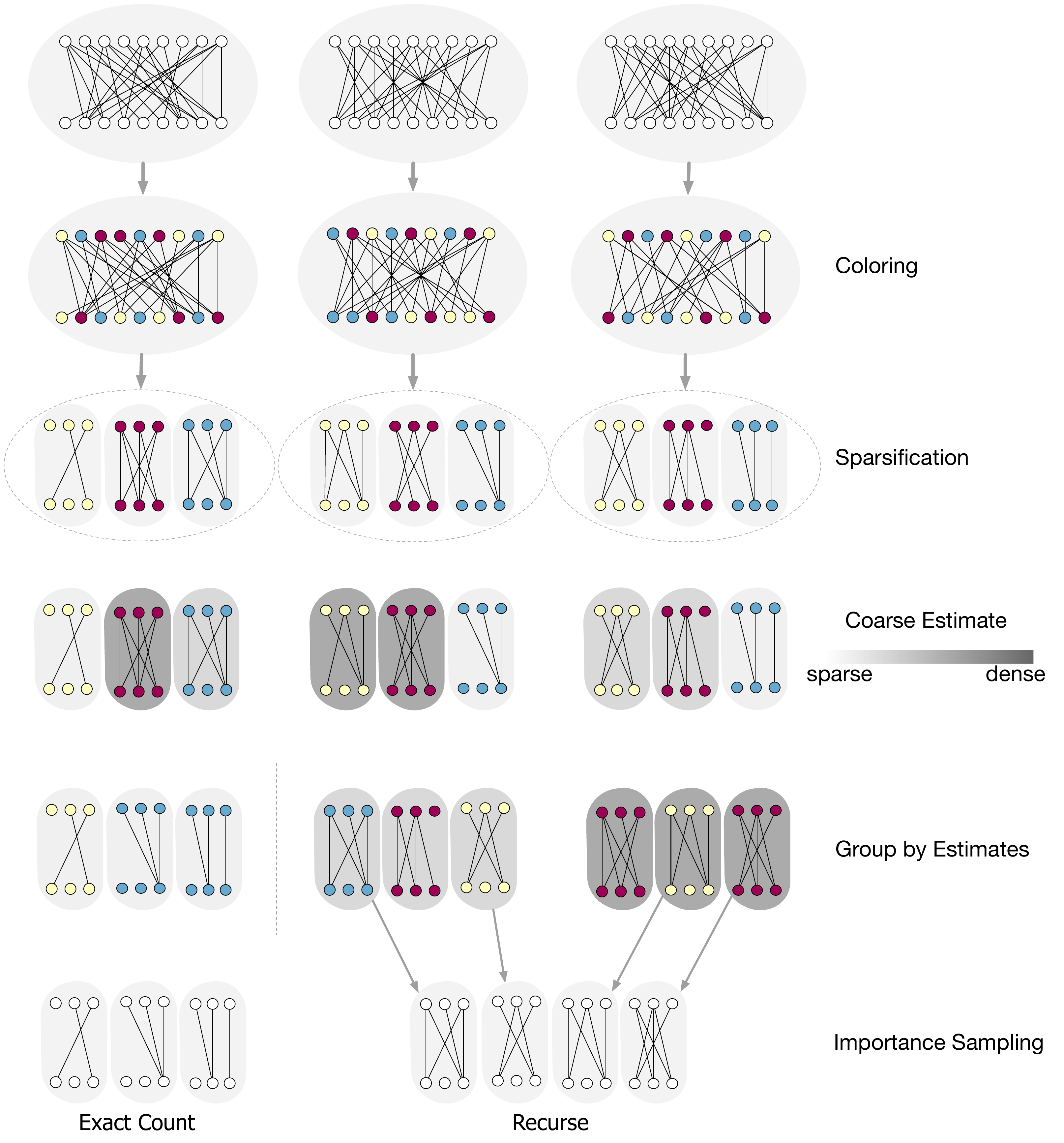}}
    \caption{A depiction of one level of the \BIS algorithm. In the
       first step, we color the vertices and sparsify the graph by
       only looking at the edges between vertices of the same
       color. In the second step, we coarsely estimate the number of
       edges in each colored subgraph. Next, we group these subgraphs
       based on their coarse estimates, and we subsample from the
       groups with a relatively large number of edges. In the final
       step, we exactly count the edges in the sparse subgraphs, and
       we recurse on the dense subgraphs.}
	\figlab{algorithm}
\end{figure}

\paragraph{Exactly count edges.}

One can exactly count the edges between two subsets of vertices, with
a number of queries that scales nearly linearly in the number of such
edges. Specifically, a simple deterministic divide and conquer
algorithm to compute $\mY{\SA}{\SB}$ using $O(\mY{\SA}{\SB} \log n)$
\BIS queries is described below in \lemref{bisq:exact}.

\paragraph{Sparsify.}
The idea is now to sparsify the graph in such a way that the number of
remaining edges is a good estimate for the original number of edges
(after scaling). Consider sparsifying the graph by coloring the
vertices of graph, and only looking at the edges going between certain
pairs of color classes (in our algorithm, these pairs are a matching
of the color classes). We prove that it suffices to only count the
edges between these color classes, and we can ignore the edges with
both endpoints inside a single color class.

For any $k$ satisfying $1 \leq k \leq \lfloor n/2\rfloor$, let
$\SA_1,\dots,\SA_k, \SB_1,\dots,\SB_k$ be a uniformly random partition
of~$\IRX{n}$. Then, we have
\begin{equation}
\Prob{ \biggl. \Bigl| \smash{2k \sum_{i=1}^{k} \mY{\SA_i}{\SB_i} -
		\nEdges }\Bigr|  \geq
	c k  \sqrt{\nEdges} \log n } \leq \frac{1}{n^4}, 
\smallskip%
\end{equation}
\noindent%
where $c$ is some constant.  For the proof of this inequality see
\secref{sparsification}.  Specifically, if we set $\Graph_i$ to be the
induced bipartite subgraph on $\SA_i$ and $\SB_i$, then
$2k \sum_{i} \mX{G_i}$ is a good estimate for $\mG$.

\paragraph{Now the graph is bipartite.}
The above sparsification method implies that we can assume without
loss of generality that the graph is bipartite. Indeed, invoking the
lemma with $k=1$, we see that estimating the number of edges between
the two color classes is equivalent to estimating the total number of
edges, up to a factor of two. For the rest of the discussion, we will
consider colorings that respect the bipartition.

\paragraph{Coarse estimator.}
We give an algorithm that coarsely estimates the number of edges in a
(bipartite) subgraph, up a $O(\log^2 n)$ factor, using only
$O(\log^3 n)$ \BIS queries.

\paragraph{The subproblems.}
After coloring the graph, we have reduced the problem to estimating
the total number of edges in a collection of (disjoint) bipartite
subgraphs. However, certain subgraphs may still have a large number of
edges, and it would be too expensive to directly use the exact
counting algorithm on them.

\paragraph{Reducing the number of subgraphs in a collection, via
	importance sampling.}
Using the coarse estimates we can form $O(\log n)$ groups of bipartite
subgraphs, where each group contains subgraphs with a comparable
number of edges.  For the groups with only a polylogarithmic number of
edges, we can exactly count edges using $\polylog(n)$ \BIS queries via
the exact count algorithm mentioned above. For the remaining groups,
we subsample a polylogarithmic number of subgraphs from each group.
This new estimate is a good approximation to the original quantity,
with high probability. This corresponds to the technique of {\em
	importance sampling} that is used for variance reduction when
estimating a sum of random variables that have comparable magnitudes.

\paragraph{Sparsify and reduce.}
We use the sparsification algorithm on each graph in our
collection. This increases the number of subgraphs while reducing (by
roughly a factor of $k$) the total number of edges in these
graphs. The number of edges in the new collection is a reliable
estimate for the number of edges in the old collection. We will choose
$k$ to be a constant so that every sparsification round reduces the
number of edges by a constant factor.

If the number of graphs in the collection becomes too large, then we
reduce it in one of two ways. For the subgraphs with relatively few
edges, we exactly count the number of edges using only $\polylog(n)$
queries. For the dense subgraphs, we can apply the above importance
sampling technique and retain only $\polylog(n)$ subgraphs. Every
basic operation in this scheme requires $\polylog(n)$ \BIS queries,
and the number of subgraphs is $\polylog(n)$. Therefore, a round can
be implemented using $\polylog(n)$ \BIS queries. Now, since every
round reduces the number of edges by a constant factor, the algorithm
terminates after $O( \log n)$ rounds, resulting in the desired
estimate for $\nEdges$ using only $\polylog(n)$ queries in
total. \figref{algorithm} depicts the main components of one round.

\medskip%

We have glossed over some details regarding the reweighting of
intermediate estimates, as both the sparsfication and importance
sampling steps involve subsampling and rescaling. To handle this, the
algorithm will maintain a weight value for each subgraph in the
collection (starting with unit weight). Then, these weights will be
updated throughout the execution, and they will be used during coarse
estimation. For the final estimate, the algorithm will output a
weighted sum of the estimates for the remaining subgraphs, in addition
to the weighted version of the exactly counted subgraphs. By using
these weights to properly rescale estimates and counts, the algorithm
will achieve a good estimate for $\nEdges$ with high probability.

\subsubsection{The \IS algorithm}
We move on to describe our second algorithm, based on $\IS$
queries. As with the \BIS algorithm, the main building block for the
\IS algorithm is an efficient way to exactly count edges using \IS
queries. The exact counting algorithm works by first breaking the
vertices of the graph into independent sets in a greedy fashion, and
then grouping these independent sets into larger independent sets
using (yet again) a greedy algorithm. The resulting partition of the
graph into independent sets has the property that every two sets have
an edge between them, and this partition can be computed using a
number of queries that is roughly $\nEdges$. This is beneficial,
because when working on the induced subgraph on two independent sets,
the \IS queries can be interpreted as \BIS queries. As such, edges
between parts of the partition can be counted using the exact counting
algorithm, modified to use \IS queries.  The end result is, that for a
given set $\SA \subseteq \IRX{n}$, one can compute $\mX{\SA}$, the
number of edges with both endpoints in $\SA$, using
$O( \mX{\SA} \log n)$ \IS queries. This algorithm is described in
\secref{quadtree}.

Now, we can sparsify the graph to reduce the overall number of \IS
queries. In contrast to the \BIS queries, we do not know how to design
a coarse estimator using only \IS queries (see \secref{limit}). This
prohibits us from designing a similar algorithm. Instead, we estimate
the number of edges in one shot, by coloring the graph with a large
number of colors and estimating the number of edges going between a
matching of the color classes.

somewhat counter intuitive.  An initial sparsification attempt might
be to count only the edges going between a single pair of colors. If
the total number of colors is $2k$, then we expect to see
${\nEdges}/{\binom{2k}{2}}$ edges between this pair. Therefore, we
could set $k$ to be large and invoke \lemref{isq:exact}. Scaling by a
factor of ${\binom{2k}{2}}$, we would hope to get an {\em unbiased}
estimator for $\nEdges$.

Unfortunately, a star graph demonstrates that this approach does not
work, due to the large variance of this estimator. If we randomly
color the vertices of the star graph with $2k$ colors, then out of the
$\binom{2k}{2}$ pairs of color classes, only $2k-1$ pairs have any
edge going between color classes. So, if we only chose one pair of
color classes, then with high probability one of the following two
cases occurs: either (i) there is no edge crossing the color pair, or
(ii) the number of edges crossing the pair is $\approx \nEdges/2k$. In
both cases our estimate after scaling by a factor of $\binom{2k}{2}$
will be far from the truth.

At the other extreme, the vast majority of edges will be present if we
look at the edges crossing {\em all pairs} of color classes. Indeed,
the only edges we miss have both endpoints in a color class, and this
accounts for only a $1/2k$ fraction of the total number of
edges. Thus, this does not achieve any substantial sparsification.

By using a matching of the color classes, we simultaneously get a
reliable estimate of the number of edges and a sufficiently sparsified
graph (see \lemref{sparsification_general}).  Let
$\SA_1,\ldots,\SA_k,\SB_1,\ldots,\SB_k$ be a random partition of the
vertices into $2k$ color classes.  This implies that with high
probability, the estimator $2k \sum_{i=1}^{k} \mY{\SA_i}{\SB_i}$ is in
the range $\nEdges \pm O(k\sqrt{\nEdges}\log n)$. Hence, as long as we
choose $k$ to be less than $ \e \sqrt{\nEdges}/\polylog(n)$, we
approximate $\nEdges$ up to a factor of $(1 + O(\e))$.  We use
geometric search to find such a $k$ efficiently.

To get a bound on the number of \IS queries, we claim that we can
compute $\sum_{i=1}^{k} \mY{\SA_i}{\SB_i}$ using \lemref{isq:exact},
with a total of
$\left(k + \frac{\nEdges}{k}\right) \pcdot \polylog (n)$ \IS
queries. The first term arises since we have to make at least one
query for each of the $k$ color pairs (even if there are no edges
between them). For the second term, we pay for both (i) the edges
between the color classes and (ii) the total number of edges with both
endpoints within a color class (since the number of \IS queries in
\lemref{isq:exact} scales with $\mX{\SA \cup \SB})$. By the
sparsification lemma, we know that (i) is bounded by
$O({\nEdges}/{k})$ with high probability and we can prove an analogous
statement for (ii). Hence, plugging in a value of
$k \approx { \e \sqrt{\nEdges}}/{\polylog(n)}$, the total number of
\IS queries is bounded by $\sqrt{\nEdges}\pcdot \polylog(n) / \e$.

\subsection{Subsequent work after initial publication}

After the initial publication of our results~\cite{bhrrs-eeiso-18},
there has been some follow-up work \cite{bbgm-teuti-19, bbgm-heups-19,
	clw-noeei-20, dlm-acssw-20}.

Answering one of the open questions of \cite{bhrrs-eeiso-18}, Chen, Levy, and Waingarten~\cite{clw-noeei-20} provide nearly-matching upper and lower
bounds on the number of \IS queries for edge estimation. More
precisely, they show that
\begin{math}
O\bigl( \min ( n/\sqrt{\nEdges}, \ \sqrt{\nEdges} \, ) \cdot \poly(\log(n),
1/\e) \bigr)
\end{math}
\IS queries are sufficient (the term $n^2/\nEdges$ is the new
result). They also prove that
$\Omega\bigl( \min ( n/\sqrt{\nEdges},\ \sqrt{\nEdges} \, )/ \polylog(n)
\bigr)$ \IS queries are necessary for a certain family of graphs.

Dell, Lapinksas, and Meeks~\cite{dlm-acssw-20} provide new connections between
decision and approximate counting results for problems such as
$k$-SUM, $k$-Orthogonal-Vectors, and $k$-Clique, by relating the
complexity to edge estimation using certain queries. In particular,
their work extends the previous work of Dell and
Lapinskas~\cite{dl-fgrac-18} to the case of $k$-hypergraphs, and they
consider a generalization of \BIS queries to $k$-partite set
queries. As one of their technical results, they improve the
dependence on $\e$ in \thmref{bisq} from $\e^{-4}$ down to $\e^{-2}$.

Bhattacharya, Bishnu, Ghosh, and Mishra~\cite{bbgm-teuti-19, bbgm-heups-19} also consider
the generalization of \BIS queries to tripartite set queries, where
they use such queries to estimate the number of triangles in a graph.

\subsection{Outline}
The rest of the paper is organized as follows.
We start at \secref{prelim} by reviewing some necessary tools --
concentration inequalities, importance sampling, and set size
estimation via membership queries.
In \secref{sparsification}, we prove our sparsification result
(\lemref{sparsification_general}).

In \secref{bisq} we describe the algorithm for edge estimation for the
\BIS case. \secref{bis:exact} describes the exact counting algorithm.
In \secref{coarse}, we present the algorithm that uses \BIS queries to
coarsely estimate the number of edges between two subsets of vertices
(\lemref{coarse_estimate}). We combine these building blocks to
construct our edge estimation algorithm using \BIS queries in
\secref{bisalgo}.

The case of \IS queries is tackled in \secref{is:e:e}.  In
\secref{quadtree}, we formally present the algorithms to exactly count
edges between two subsets of vertices (\lemref{isq:exact}).  In
\secref{isq}, we present our algorithm using \IS queries.  In
\secref{limit} we provide some discussion of why the \IS case seems to
be harder than the \BIS case.
We conclude in \secref{conclusion} and discuss open questions.

\section{Preliminaries}
\seclab{prelim}

Here we present some standard tools that we need later on.

\subsection{Concentration bounds}

For proofs of the following concentration bounds, see the book by
Dubhashi and Panconesi~\cite{dp-cmara-09}.

\begin{lemma}[Hoeffding's inequality]
	\lemlab{Hoeffding:inequality}%
	Let $X_1, \dots, X_r$ be independent random variables satisfying
	$X_i \in [a_i, b_i]$ for $i \in [r]$.  Then, for
	$ X = X_1 + \cdots + X_r$ and any $s > 0$, we have
	\begin{math}
	\Prob{ \cardin{ X - \Ex{X} } \geq s} \leq 2\exp \pth{ -
		{2\,s^2}/{\sum_{i=1}^r (b_i - a_i)^2} }.
	\end{math}
\end{lemma}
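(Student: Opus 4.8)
The plan is the standard exponential-moment (Chernoff) method. First I would handle the upper tail: for any $t > 0$, Markov's inequality applied to the nonnegative random variable $e^{t(X - \Ex{X})}$ gives
\[
   \Prob{X - \Ex{X} \geq s} \;\leq\; e^{-ts}\, \Ex{ e^{t(X - \Ex{X})} }
   \;=\; e^{-ts} \prod_{i=1}^{r} \Ex{ e^{t(X_i - \Ex{X_i})} },
\]
where the product decomposition uses the independence of $X_1,\dots,X_r$. Thus it suffices to control each factor $\Ex{e^{t Y_i}}$, where $Y_i \deq X_i - \Ex{X_i}$ is a mean-zero variable taking values in an interval of width $b_i - a_i$.

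The crux is Hoeffding's lemma: if $Y$ has $\Ex{Y} = 0$ and $Y \in [a,b]$, then $\Ex{e^{tY}} \leq e^{t^2 (b-a)^2 / 8}$. I would prove this by convexity of $y \mapsto e^{ty}$: for $y \in [a,b]$ one has $e^{ty} \leq \tfrac{b-y}{b-a} e^{ta} + \tfrac{y-a}{b-a} e^{tb}$, and taking expectations with $\Ex{Y} = 0$ collapses the right-hand side to $e^{\phi(u)}$, where $u = t(b-a)$, $p = -a/(b-a) \in [0,1]$, and $\phi(u) = -pu + \ln\pth{1 - p + p e^{u}}$. A direct computation gives $\phi(0) = \phi'(0) = 0$ and $\phi''(u) = x(1-x)$ with $x = \tfrac{p e^u}{1-p+pe^u} \in [0,1]$, hence $\phi''(u) \leq \tfrac14$ for all $u$; Taylor's theorem with Lagrange remainder then yields $\phi(u) \leq u^2/8$, which is the claimed bound.

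Combining the two displays, $\Prob{X - \Ex{X} \geq s} \leq \exp\pth{ -ts + \tfrac{t^2}{8} \sum_{i=1}^r (b_i - a_i)^2 }$ for every $t > 0$. Setting $t = 4s / \sum_i (b_i - a_i)^2$ minimizes the exponent and gives $\Prob{X - \Ex{X} \geq s} \leq \exp\pth{ -2s^2 / \sum_{i=1}^r (b_i - a_i)^2 }$. Running the identical argument on $-X$ (whose summands $-X_i$ lie in intervals of the same widths $b_i - a_i$) bounds the lower tail $\Prob{X - \Ex{X} \leq -s}$ by the same quantity, and a union bound over the two tails produces the factor $2$ in the statement.

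The only genuine obstacle is Hoeffding's lemma, and inside it the estimate $\phi'' \leq 1/4$ (equivalently $x(1-x) \leq 1/4$); everything else is routine bookkeeping. Alternatively, since the paper already defers these concentration bounds to Dubhashi and Panconesi~\cite{dp-cmara-09}, one could simply cite the lemma, but the convexity argument above is short and self-contained.
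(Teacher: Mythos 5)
Your proof is correct: it is the standard exponential-moment argument via Hoeffding's lemma, with the convexity bound, the computation $\phi''(u)=x(1-x)\leq 1/4$, the optimal choice $t = 4s/\sum_i (b_i-a_i)^2$, and the union bound over the two tails all carried out properly. The paper gives no proof of its own and simply defers to Dubhashi and Panconesi~\cite{dp-cmara-09}, whose treatment is essentially the argument you wrote out, so there is nothing to compare beyond noting that your self-contained derivation matches the cited source.
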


\begin{lemma}[{{Chernoff-Hoeffding inequality \protect\cite[Theorem
			1.1]{dp-cmara-09}}}]%
	\lemlab{chernoff}%
	Let $X_1, \dots, X_r$ be $r$ independent random variables with
	$0 \leq X_i \leq 1$, and let $X = \sum_{i=1}^r X_i$. For
	$\mu = \Ex{X}$, let $\ell$ and $u$ be real numbers such that
	$\ell \leq \mu \leq u$. Then, we have that \smallskip%
	\begin{compactenumA}[itemsep=-0.3ex]
		\item \itemlab{Chernoff:A} %
		For any $\Delta >0$, we have
		\begin{math}
		\Prob{X \leq \ell - \Delta} \leq%
		\exp\pth{-2\Delta^2/r}
		\end{math}
		and $\Prob{X \geq u + \Delta} \leq \exp\pth{-2\Delta^2/r}$.
		
		\item \itemlab{Chernoff:B} %
		For any $0 \leq \delta < 1$, we have
		\begin{math}
		\Prob{X \leq (1-\delta) \mu}%
		\leq%
		\exp \pth{-{\mu \delta^2}/{2}}.
		\end{math}
		\item \itemlab{Chernoff:C} %
		For any $0 \leq \delta \leq 1$, we have
		\begin{math}
		\Prob{X \geq (1+\delta) \mu}%
		\leq%
		\exp \pth{-{\mu \delta^2}/{3}}.
		\end{math}
	\end{compactenumA}
\end{lemma}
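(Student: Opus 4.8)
The plan is to obtain part~\itemref{Chernoff:A} as an immediate corollary of Hoeffding's inequality (\lemref{Hoeffding:inequality}) and parts~\itemref{Chernoff:B} and~\itemref{Chernoff:C} by the standard exponential-moment (Chernoff) argument. For part~\itemref{Chernoff:A}: since $X_i \in [0,1]$, I would take $a_i = 0$ and $b_i = 1$ in \lemref{Hoeffding:inequality}, so $\sum_{i=1}^r (b_i-a_i)^2 = r$; the same proof gives the one-sided tails $\Prob{X - \Ex{X} \ge s} \le \exp(-2s^2/r)$ and $\Prob{X - \Ex{X} \le -s} \le \exp(-2s^2/r)$ (without the factor $2$ of the two-sided statement). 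Because $\ell \le \mu = \Ex{X}$, we have $\{X \le \ell - \Delta\} \subseteq \{X \le \mu - \Delta\}$, so $\Prob{X \le \ell - \Delta} \le \exp(-2\Delta^2/r)$, and symmetrically $\Prob{X \ge u + \Delta} \le \exp(-2\Delta^2/r)$ using $u \ge \mu$.

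For parts~\itemref{Chernoff:B} and~\itemref{Chernoff:C}, write $p_i = \Ex{X_i}$, so $\mu = \sum_i p_i$. Convexity of $x \mapsto e^{tx}$ on $[0,1]$ gives the pointwise bound $e^{t X_i} \le 1 - X_i + X_i e^{t}$, hence $\Ex{e^{t X_i}} \le 1 + p_i(e^t - 1) \le \exp\bigl(p_i(e^t-1)\bigr)$, and by independence $\Ex{e^{tX}} = \prod_{i=1}^r \Ex{e^{t X_i}} \le \exp\bigl(\mu(e^t - 1)\bigr)$. Applying Markov's inequality to $e^{tX}$ with $t = \ln(1+\delta) \ge 0$ yields
\[
   \Prob{X \ge (1+\delta)\mu} \;\le\; e^{-t(1+\delta)\mu}\,\Ex{e^{tX}} \;\le\; \left( \frac{e^{\delta}}{(1+\delta)^{1+\delta}} \right)^{\!\mu},
\]
and applying Markov to $e^{-tX}$ with $t = \ln\frac{1}{1-\delta} \ge 0$ yields $\Prob{X \le (1-\delta)\mu} \le \bigl( e^{-\delta}/(1-\delta)^{1-\delta} \bigr)^{\mu}$.

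The only place that requires any computation — and hence the "hard" part, though it is entirely routine — is the pair of elementary inequalities $e^{\delta}/(1+\delta)^{1+\delta} \le e^{-\delta^2/3}$ for $0 \le \delta \le 1$ and $e^{-\delta}/(1-\delta)^{1-\delta} \le e^{-\delta^2/2}$ for $0 \le \delta < 1$; after taking logarithms each reduces to checking that a single-variable function is nonpositive on the stated interval, which follows by expanding $\ln(1\pm\delta)$ as a power series (or differentiating twice). Since all of this is standard material, in the writeup I would simply cite the treatment of Dubhashi and Panconesi~\cite{dp-cmara-09}; the one point worth making explicit is that the hypotheses $\ell \le \mu \le u$ are exactly what allow part~\itemref{Chernoff:A} to be stated in terms of arbitrary over/under-estimates $\ell, u$ of the mean rather than $\mu$ itself, which is the form used later in the paper.
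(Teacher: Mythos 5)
Your proposal is correct: part~\itemref{Chernoff:A} does follow from the one-sided Hoeffding bounds together with $\ell \le \mu \le u$, and the moment-generating-function argument with the two elementary inequalities gives parts~\itemref{Chernoff:B} and~\itemref{Chernoff:C} for general $[0,1]$-valued (not just Bernoulli) variables. The paper itself offers no proof and simply cites Dubhashi and Panconesi~\cite{dp-cmara-09}, which is exactly what you propose to do in the writeup, so your treatment matches the paper's.
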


We need a version of Azuma's inequality that takes into account a rare
bad event -- the following is a restatement of Theorem 8.3 from
\cite{cl-cimis-06} in a simplified form (that is sufficient for our
purposes).

\begin{lemma}[{\cite{cl-cimis-06}}]
	\lemlab{azuma:cond}%
	Let $f$ be any function of $r$ independent random variables
	$Y_1,\ldots,Y_r$, and let
	$X_i = \ExCond{f(Y_1, \ldots, Y_r)}{Y_1, \ldots, Y_i}$, for
	$i\in \IRX{r}$, and $X_0 = \Ex{f(Y_1, \ldots, Y_r)}$.  Say that a
	sequence $Y_1, \ldots, Y_r$ is \emph{bad} if there exists an index
	$i$ such that $\cardin{X_i - X_{i-1}} > c_i$, where
	$c_1, \ldots, c_r$ are some nonnegative numbers. Let $\BadEvent$
	be the event that a bad sequence happened, and let
	$S = \sum_{i=1}^r c_i^2$. We have that
	\begin{math}
	\Prob{ \bigl. \cardin{X_r - X_0} \geq \lambda }%
	\leq%
	2\exp \pth{-\NiceFrac{\lambda^2}{2S}} + \Prob{\BadEvent}.
	\end{math}
\end{lemma}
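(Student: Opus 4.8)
The sequence $X_0, X_1, \ldots, X_r$ is the Doob martingale of $f$ with respect to the filtration $\mathcal{F}_i = \sigma(Y_1, \ldots, Y_i)$: each $X_i$ is $\mathcal{F}_i$-measurable and $\ExCond{X_i}{\mathcal{F}_{i-1}} = X_{i-1}$. If the bad event $\BadEvent$ never occurred -- i.e.\ if $\cardin{X_i - X_{i-1}} \le c_i$ held surely for every $i$ -- then the claim would be immediate from the Azuma--Hoeffding inequality (see \cite{dp-cmara-09}), whose proof bounds $\ExCond{e^{t(X_i - X_{i-1})}}{\mathcal{F}_{i-1}} \le e^{t^2 c_i^2/2}$ via Hoeffding's lemma at each step, iterates, and applies Markov's inequality, yielding $\Prob{\cardin{X_r - X_0} \ge \lambda} \le 2\exp\pth{-\lambda^2/(2S)}$ with $S = \sum_i c_i^2$. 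So the only thing to handle is the rare event that some increment is large.

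To deal with it, the plan is to pass to a modified martingale $\widetilde{X}_0, \ldots, \widetilde{X}_r$, adapted to the same filtration, that (i) has globally bounded increments $\cardin{\widetilde{X}_i - \widetilde{X}_{i-1}} \le c_i$ for all $i$, and (ii) satisfies $\widetilde{X}_i = X_i$ for all $i$ on the event $\NotBadEvent$. Informally one wants to ``freeze'' the Doob martingale at the first step $i$ where $\cardin{X_i - X_{i-1}} > c_i$; since whether a realization has become bad by step $i$ is $\mathcal{F}_i$-measurable, and since the value of $f$ may be redefined freely on the (bad) realizations where freezing occurs, this can be arranged while keeping $(\widetilde{X}_i)$ a martingale -- and no modification ever takes place along a realization in $\NotBadEvent$, which gives (ii). This is exactly the content of \cite[Theorem~8.3]{cl-cimis-06} specialized to our situation, so I would simply invoke it rather than reconstruct the modification here.

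Given such a $\widetilde{X}$, the conclusion follows in two lines. The plain Azuma--Hoeffding bound applied to the bounded-difference martingale $\widetilde{X}_0, \ldots, \widetilde{X}_r$ gives $\Prob{\cardin{\widetilde{X}_r - X_0} \ge \lambda} \le 2\exp\pth{-\lambda^2/(2S)}$. Splitting on $\BadEvent$ and using property (ii),
\begin{align*}
    \Prob{\cardin{X_r - X_0} \ge \lambda}
    &\le \Prob{\brc{\cardin{X_r - X_0} \ge \lambda} \cap \NotBadEvent} + \Prob{\BadEvent}\\
    &= \Prob{\brc{\cardin{\widetilde{X}_r - X_0} \ge \lambda} \cap \NotBadEvent} + \Prob{\BadEvent}\\
    &\le 2\exp\pth{-\lambda^2/(2S)} + \Prob{\BadEvent},
\end{align*}
as required.

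The main obstacle is the construction in the second step: checking that the Doob martingale can be ``frozen at the first bad step'' without destroying the martingale identity -- the delicate point being that the first bad step is a stopping time whereas one really wants to stop just before it, which forces the redefinition of $f$ on the bad realizations to be carried out carefully (this is precisely where \cite{cl-cimis-06} does the work, and why it is cleanest to quote their theorem). Everything else -- the exponential-moment estimate behind Azuma--Hoeffding and the closing union bound over $\BadEvent$ -- is standard.
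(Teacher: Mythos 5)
The paper gives no proof of this lemma at all -- it is stated verbatim as a simplified restatement of \cite[Theorem~8.3]{cl-cimis-06} -- and your argument ultimately rests on the same citation for the one genuinely nontrivial step (the construction of the bounded-difference martingale agreeing with the Doob martingale off $\BadEvent$), so the two are consistent. Your outline of how that citation is used (freeze/modify, apply Azuma--Hoeffding, union bound with $\Prob{\BadEvent}$) is correct, and you rightly flag that the naive ``stop just before the first bad increment'' does not preserve the martingale property, which is exactly why the construction is deferred to \cite{cl-cimis-06}.
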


\subsection{Importance sampling}

Importance sampling is a technique for estimating a sum of
terms. Assume that for each term in the summation, we can cheaply and
quickly get an initial, coarse estimate of its value. Furthermore,
assume that better estimates are possible but expensive. Importance
sampling shows how to sample terms in the summation, then acquire a
better estimate {\em only for the sampled terms}, to get a good
estimate for the full summation. In particular, the number of samples
is bounded independently of the original number of terms, depending
instead on the coarseness of the initial estimates, the probability of
success, and the quality of the final output estimate.

\begin{lemma}[Importance Sampling]
	\lemlab{importance}%
	Let $U = \brc{u_1, \ldots, u_r}$ be a set of numbers, all
	contained in the interval $[\alpha/b, \alpha b]$, for $\alpha > 0$
	and $b \geq 1$. Let $\gamma, \eps > 0$ be parameters. Consider the
	sum $\Gamma = \sum_{i=1}^r u_i$.  For an arbitrary
	\begin{math}
	t \geq \frac{b^4}{2\eps^2} \bigl( 1 + \ln \frac{1}{\gamma}
	\bigr),
	\end{math}
	and $i=1,\ldots, t$, let $X_i$ be a random sample chosen uniformly
	(and independently) from the set $U$ (i.e., let $j_i$ be uniformly
	and randomly picked from $\IRX{r}$, and let $X_i =
	u_{j_i}$). Then, the estimate $Y = (r/t)\sum_{i=1}^t X_i$ for the
	value of $\Gamma$ satisfies
	\begin{math}
	\Prob{\cardin{Y - \Gamma} \geq \eps \Gamma } \leq \gamma.
	\end{math}
\end{lemma}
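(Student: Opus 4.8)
The plan is to apply a Hoeffding-type concentration bound to the sum $\sum_{i=1}^t X_i$, where each $X_i$ is a uniform independent sample from $U$. First I would record the elementary facts about a single sample $X := X_1$. Its expectation is $\Ex{X} = \frac{1}{r}\sum_{i=1}^r x_i = \Gamma/r$, so $\Ex{Y} = (r/t)\cdot t \cdot \Ex{X} = \Gamma$; the estimator is unbiased, which is the point of uniform sampling. Since every $x_i$ lies in $[\alpha/b, \alpha b]$, each $X_i$ takes values in this same bounded interval, of width $\alpha b - \alpha/b = \alpha(b - 1/b) \le \alpha b$. Also I would want a lower bound on $\Gamma$ in terms of the interval: $\Gamma = \sum x_i \ge r \alpha / b$, equivalently $\alpha \le b\Gamma/r$, so the interval width is at most $\alpha b \le b^2 \Gamma / r$. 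This is the inequality that lets us convert an absolute error bound into a relative one.

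Next I would invoke Hoeffding's inequality (\lemref{Hoeffding:inequality}) applied to the $t$ independent variables $X_1, \dots, X_t$, each supported in an interval of width at most $w := \alpha b - \alpha/b \le b^2\Gamma/r$. Writing $S = \sum_{i=1}^t X_i$, so that $Y = (r/t) S$ and $\Ex{S} = t\Gamma/r$, we get, for any $s > 0$,
\begin{math}
   \Prob{\cardin{S - \Ex{S}} \ge s} \le 2\exp\pth{-2s^2 / (t w^2)}.
\end{math}
The event $\cardin{Y - \Gamma} \ge \eps\Gamma$ is the same as $\cardin{S - \Ex{S}} \ge (t/r)\eps\Gamma$, so I would set $s = (t/r)\eps\Gamma$. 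Plugging in and using $w \le b^2\Gamma/r$, the exponent becomes
\begin{math}
   \frac{2 s^2}{t w^2} = \frac{2 (t/r)^2 \eps^2 \Gamma^2}{t w^2}
   \ge \frac{2 t \eps^2 \Gamma^2 / r^2}{t \cdot b^4 \Gamma^2 / r^2} \cdot t
   = \frac{2 t \eps^2}{b^4},
\end{math}
wait — more carefully, $\frac{2s^2}{tw^2} = \frac{2t\eps^2\Gamma^2/r^2}{w^2} \ge \frac{2t\eps^2\Gamma^2/r^2}{b^4\Gamma^2/r^2} = \frac{2t\eps^2}{b^4}$. So the failure probability is at most $2\exp\pth{-2t\eps^2/b^4}$.

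Finally I would check that the stated bound on $t$ makes this at most $\gamma$. With $t \ge \frac{b^4}{2\eps^2}\pth{1 + \ln\frac{1}{\gamma}}$ we get $\frac{2t\eps^2}{b^4} \ge 1 + \ln\frac1\gamma$, hence $2\exp\pth{-2t\eps^2/b^4} \le 2\exp\pth{-1-\ln\frac1\gamma} = \frac{2}{e}\gamma \le \gamma$ since $2/e < 1$. The main thing to be careful about — the only real "obstacle," though it is minor — is the chain of inequalities bounding the interval width $w$ in terms of $\Gamma/r$: one must use $\Gamma \ge r\alpha/b$ rather than anything involving $b$ in the numerator, so that the $b$'s accumulate in the denominator of the exponent to give the $b^4$ that matches the hypothesis on $t$. (Using $b\Gamma$ slack on both the $s^2$ numerator via $\Gamma$ and the $w^2$ denominator is what produces $b^4$ rather than $b^2$.) Everything else is routine substitution into Hoeffding.
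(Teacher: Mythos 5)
Your proposal is correct and follows essentially the same route as the paper's proof: both apply Hoeffding's inequality to $\sum_{i=1}^t X_i$ and use the lower bound $\Gamma \geq r\alpha/b$ to turn the relative-error event into an absolute deviation, arriving at the same failure bound $2\exp(-2t\eps^2/b^4) \leq \gamma$. The only (immaterial) difference is that the paper pushes the $b$'s into the deviation threshold via $\eps\Gamma \geq \eps r\alpha/b$ while keeping the interval width at $\alpha b$, whereas you keep the threshold as $(t/r)\eps\Gamma$ and absorb the slack into the width bound $w \leq b^2\Gamma/r$; these are algebraically equivalent.
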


\begin{proof}
	Observe that $r(\alpha/b) \leq \Gamma \leq r \alpha b$, and
	\begin{equation*}
		\mu%
		=%
		\Ex{Y}%
		=%
		\Ex{\Bigl. \smash{(r/t)\sum_{i=1}^t X_i}}%
		=%
		(r/t)\sum_{i=1}^t
		\Ex{X_i}%
		=%
		\frac{r}{t} \cdot t \cdot \frac{\Gamma}{r}= \Gamma.
	\end{equation*}    
	Furthermore, we have $Z = ({t}/{r})Y = \sum_{i=1}^t X_i$,
	$\Ex{Z} = (t/r)\Gamma$, and for the length, $\Delta_i$, of the
	interval containing $X_i$, we have
	\begin{math}
	\Delta_i^2 = (\alpha b - \alpha/b )^2 \leq \alpha^2 b^2.
	\end{math}
	
	Using $r \alpha/b \leq \Gamma$, by \lemref{Hoeffding:inequality},
	we have
	\begin{align*}
		\Prob{\cardin{Y - \Gamma} \geq \eps \Gamma \Bigr.}%
		&=%
		\Prob{\cardin{\frac{t}{r}Y - \frac{t}{r}\Gamma}
			\geq \frac{t \eps \Gamma}{r} }%
		\leq%
		\Prob{\cardin{\sum_{i=1}^t X_i - \frac{t}{r}\Gamma}
			\geq \frac{t \eps r\alpha/b}{r} }%
		\\&
		=%
		\Prob{\cardin{\sum_{i=1}^t X_i - \Ex{Z}}
			\geq \frac{t \eps \alpha}{b} }%
		\leq%
		2\exp
		\pth{ - \frac{2\,({t \eps \alpha}/{b})^2}{\sum_{i=1}^t
				\Delta_i^2 }}%
		\leq%
		2\exp \pth{ - \frac{2\,({t \eps \alpha}/{b})^2}{t \alpha^2 b^2}}%
		\\&
		=%
		2\exp\pth{ - \frac{2 t \eps^2  }{b^4  }}%
		\leq%
		\gamma.
	\end{align*}
\end{proof}

The above lemma enables us to reduce a summation with many numbers
into a much shorter summation (while introducing some error,
naturally). The list/summation reduction algorithm we need is
described next.

\begin{lemma}[Summation reduction]
	\lemlab{importance:alg}%
	Let $(\Entity_1, w_1,e_1), \ldots, (\Entity_r,w_r,e_r)$ be given,
	where $\Entity_i$'s are some structures, and $w_i$ and $e_i$ are
	numbers, for $i=1, \ldots, r$. Every structure $\Entity_i$ has an
	associated unknown cost $\wX{\Entity_i} \geq 0$. The quantity of
	interest, that we would like to compute/approximate is
	\begin{equation*}
		\Gamma = \sum_i w_i \cdot \wX{\Entity_i}.
	\end{equation*}
	To this end, we have parameters $\epsA > 0$, $\BadProb$, $b$, and
	$M$, such that: \smallskip%
	\begin{compactenumi}[leftmargin=3em]
		\item $\forall i \quad w_i,e_i \geq 1$, \smallskip%
		\item $\forall i \quad e_i/b \leq \wX{\Entity_i} \leq e_i
		b$, %
		and \smallskip%
		\item $\Gamma \leq M$
	\end{compactenumi}
	\smallskip%
	Then, one can compute a new (hopefully shorter) sequence of
	triples
	$(\Entity_1', w_1', e_1'), \ldots, (\Entity_t',w_t', e_t')$ (the
	new sequence is a subsequence of the original sequence with
	reweighting). The new sequence complies with the above conditions,
	and furthermore, the estimate
	\begin{equation*}
		Y = \sum_{i=1}^t w_i' \wX{\Entity_i'}        
	\end{equation*}
	is a multiplicative $(1\pm \epsA)$-approximation to $\Gamma$, with
	probability $\geq 1- \BadProb$.  The running time of the algorithm
	is $O( r)$, and size of the output sequence is
	\begin{math}
	t = O\pth{ b^4 \epsA^{-2} (\log \log M + \log \BadProb^{-1})
		\log M }.
	\end{math}
\end{lemma}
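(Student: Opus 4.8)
The plan is to reduce the statement to the importance-sampling bound of \lemref{importance} after first grouping the input terms by magnitude. Although the true values $\wX{\Entity_i}$ are not given to us, we \emph{are} given the product $w_i e_i$, and by condition (ii) it satisfies $w_i e_i / b \le w_i\wX{\Entity_i} \le w_i e_i\, b$; thus $w_i e_i$ is a known proxy for the true term $w_i\wX{\Entity_i}$, accurate up to a factor of~$b$. Moreover, since $w_i, e_i \ge 1$ we have $w_i e_i \ge 1$, and $w_i e_i \le b\, w_i\wX{\Entity_i} \le b\Gamma \le bM$, so every proxy value lies in $[1,bM]$.

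First I would partition $\IRX{r}$ into $N = O(\log(bM)) = O(\log M)$ buckets, placing index $i$ into bucket $I_j$ when $w_i e_i \in [2^{j-1}, 2^{j})$; this costs $O(r)$ time. Fix a bucket $I_j$ with $r_j \defeq \cardin{I_j}$. For $i \in I_j$ the \emph{true} term $w_i\wX{\Entity_i}$ lies in $[2^{j-1}/b,\ 2^{j} b)$, hence in an interval of the form $[\alpha/b', \alpha b']$ with $\alpha \defeq 2^{j-1}$ and $b' \defeq 2b$. Put $\gamma_j \defeq \BadProb/N$ and $t_j \defeq \lceil (b')^4 \epsA^{-2}(1 + \ln(1/\gamma_j))/2 \rceil$. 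If $r_j \le t_j$ I keep the triples of $I_j$ unchanged. Otherwise I draw indices $i_1,\ldots,i_{t_j}$ uniformly and independently from $I_j$, and for each sampled $i_\ell$ I output the triple $(\Entity_{i_\ell},\ (r_j/t_j)\,w_{i_\ell},\ e_{i_\ell})$: I keep the structure and its coarse estimate unchanged, rescaling only the weight by the subsampling factor $r_j/t_j \ge 1$.

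Correctness within a bucket is precisely \lemref{importance} applied to the multiset $\brc{w_i\wX{\Entity_i} : i \in I_j}$ with interval parameters $\alpha, b'$, accuracy $\epsA$, and failure probability $\gamma_j$: the requirement on $t_j$ holds by construction, so the estimate $Y_j \defeq (r_j/t_j)\sum_\ell w_{i_\ell}\wX{\Entity_{i_\ell}} = \sum_\ell w_{i_\ell}'\wX{\Entity_{i_\ell}'}$ is a $(1\pm\epsA)$-approximation to $\Gamma_j \defeq \sum_{i\in I_j} w_i\wX{\Entity_i}$ with probability at least $1-\gamma_j$ (and $Y_j = \Gamma_j$ exactly for the small buckets). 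The new triples themselves satisfy conditions (i)--(iii): the structures and their $e$-values are unchanged, so (ii) is inherited; the rescaled weights only grew, so $w' \ge 1$; and $e' \ge 1$. A union bound over the $N$ buckets gives that, with probability at least $1 - N\gamma_j = 1 - \BadProb$, every $Y_j$ is within a $(1\pm\epsA)$ factor of $\Gamma_j$, whence $Y \defeq \sum_j Y_j$ is within $(1\pm\epsA)$ of $\Gamma = \sum_j \Gamma_j$; on this event $\Gamma' = Y \le (1+\epsA)\Gamma \le (1+\epsA)M$, so condition (iii) holds up to the immaterial factor $1+\epsA$ (equivalently, one runs the argument with $M$ pre-inflated by $(1+\epsA)^{-1}$). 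The output size is $\sum_{r_j \le t_j} r_j + \sum_{r_j > t_j} t_j \le \sum_j t_j = N\cdot O\pth{b^4\epsA^{-2}(\log\log M + \log\BadProb^{-1})} = O\pth{b^4\epsA^{-2}(\log\log M + \log\BadProb^{-1})\log M}$. The running time is $O(r)$: we may assume $t \le r$ (else output the input verbatim, for which $Y = \Gamma$ and there are $\le t$ triples), and bucketing plus sampling are linear.

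The one step that genuinely needs care is the parameter matching with \lemref{importance}: one must verify that bucketing by the \emph{proxy} $w_i e_i$ — rather than the unknown true value — still confines each bucket's true terms to a multiplicative window of width $O(b^2)$, which is exactly what forces the factor $(b')^4 = O(b^4)$ in $t_j$ and hence in the final bound. The rest — splitting $\BadProb$ evenly over the $O(\log M)$ buckets, checking that the reweighting keeps all quantities $\ge 1$, and the cosmetic adjustment keeping $\Gamma' \le M$ — is routine bookkeeping.
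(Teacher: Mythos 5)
Your proposal is correct and follows essentially the same route as the paper: bucket the triples by the known proxy $w_i e_i$ into $O(\log M)$ dyadic groups, observe that condition (ii) confines each group's true terms $w_i\wX{\Entity_i}$ to a multiplicative window of width $O(b^2)$, subsample each large group per \lemref{importance} with failure probability $\BadProb/O(\log M)$, rescale the weights by the subsampling ratio, and union-bound. Your extra care in checking that the output triples still satisfy conditions (i)--(iii) (including the cosmetic $(1+\epsA)$ slack in (iii)) is a detail the paper leaves implicit, but it does not change the argument.
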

\begin{proof}
	We break the interval $[1, M]$ into $\log M$ intervals in the
	natural way, where the $j$\th interval is
	$J_j = \bigl[2^{j-1}, 2^j\bigr)$, for
	$j=1, \ldots, h = \ceil{\log M}$, except if $M$ is a power of 2,
	in which case the last interval is closed and also includes
	$2^h=M$.  Input triples are sorted into $h$ groups
	$U_1, \ldots, U_h$, where an input triple $(\Entity, w, e)$ is in
	$U_j$, if $e w \in J_j$. This mapping can be done in $O(r)$ time.
	
	Let
	$\alpha = O\bigl( b^4 \epsA^{-2}\bigl[ 1+ \ln (h/\BadProb) \bigr]
	\bigr) $.  For $j=1, \ldots, h$, if $\cardin{U_j} \leq \alpha$,
	then set $\Sample_j = U_j$, otherwise compute a sample $\Sample_j$
	from $U_j$ of size $\alpha$.  We associate weight
	$W_j = \cardin{U_j} / \cardin{\Sample_j}$ with $\Sample_j$. If a
	triple $(\Entity, w, e) \in U_j$, then we have that
	$w\cdot \wX{\Entity} \in \bigl[2^{j-1}/b, 2^j b\bigr]$.

	For all $j\in \IRX{h}$, let
	\begin{math}
	\Gamma_j = \sum_{(\Entity,w,e) \in U_j} w\cdot
	\wX{\Entity}\Bigr.
	\end{math}
	be the total weight of structures in the $j$\th group.  By
	\lemref{importance}, we have, with probability
	$\geq 1 - \BadProb/h$, that
	\begin{equation*}
		Y_j = \Bigl( W_j \sum\nolimits_{(\Entity,w,e) \in \Sample_j}
		\wX{\Entity}\cdot w \Bigr)%
		\in%
		\bigl[(1-\epsA) \Gamma_j, (1+\epsA)\Gamma_j\bigr].
	\end{equation*}
	Summing these inequalities over all $j\in \IRX{h}$, implies that
	$Y$ is the desired approximation with probability
	$\geq 1 -\BadProb$.
	
	Specifically, the output sequence is constructed as follows. For
	all $j\in \IRX{h}$, and for every triple
	$(\Entity, w, e) \in \Sample_j$, we add $(\Entity, w\cdot W_j, e)$
	to the output sequence. Clearly, the output sequence has
	\begin{math}
	t = h \alpha%
	=%
	O\bigl(%
	b^4 \epsA^{-2} (\log \log M + \log \BadProb^{-1}) \log M
	\bigr)
	\end{math}
	elements.
\end{proof}

\begin{remark:unnumbered}
	(A) The algorithm of \lemref{importance:alg} does not use the
	entities $\Entity_i$ directly at all. In particular, the
	$\Entity_i'$s are just copies of some original structures. The
	only thing that the above lemma uses is the estimates
	$e_1, \ldots, e_r$ and the weights $w_1, \ldots, w_r$.
	
	(B) The sampling size used in \lemref{importance:alg} can probably
	be improved by a polylog factors by sampling directly from all
	$\log M$ classes simultaneously.
\end{remark:unnumbered}

\begin{remark}
	\remlab{L:2}%
	We are going to use \lemref{importance:alg}, with
	$\epsA = O(\eps/ \log n)$, $\BadProb = 1/n^{O(1)}$,
	$b = O(\log n)$, and $M = n^2$. As such, the size of the output
	list is
	\begin{equation*}
		\Llen%
		=%
		O\pth{ \log^4 n \cdot \eps^{-2} \log^2 n \cdot (\log
			\log n + \log n) \log n }%
		=%
		O( \eps^{-2} \log^8 n ).
	\end{equation*}
\end{remark}

\subsection{Estimating subset size via membership oracle %
	queries}
\seclab{subset:size}%

We present here a standard tool for estimating the size of a subset
via membership oracle queries. This is well known, but we provide the
details for the sake of completeness.

\begin{lemma}
	\lemlab{estimate}%
	Consider two (finite) sets $\SC \subseteq \SA$, where
	$n = \cardin{\SA}$.  Let $\eps\in (0,1)$ and
	$\BadProb \in (0,1/2)$ be parameters.  Let $g > 0$ be a
	user-provided guess for the size of $\cardin{\SC}$. Consider a
	random sample $\Sample$, taken with replacement from $\SA$, of
	size $r = \ceil{\cE \eps^{-2} (n/g) \log \BadProb^{-1}}$, where
	$\cE$ is sufficiently large. Next, consider the estimate
	$Y = (n/r) \cardin{ \Sample \cap \SC}$ to $\cardin{\SC}$. Then, we
	have the following: \smallskip%
	\begin{compactenumA}
		\item If $Y < g/2$, then $\cardin{\SC} < g$,
		\item If $Y \geq g/2$, then
		$(1-\eps)Y \leq \cardin{\SC} \leq (1+\eps)Y$.
	\end{compactenumA}
	\smallskip%
	Both statements above hold with probability $\geq 1-\BadProb$.
\end{lemma}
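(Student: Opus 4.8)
The plan is to set $p \deq \cardin{\SC}/n$ and $X \deq \cardin{\Sample \cap \SC}$, so that $X$ is a sum of $r$ independent $\{0,1\}$ random variables, each equal to $1$ with probability $p$; hence $\mu \deq \Ex{X} = rp = r\cardin{\SC}/n$ and $\Ex{Y} = (n/r)\mu = \cardin{\SC}$, i.e.\ $Y$ is unbiased. Since $\cardin{\SC}$ is a fixed (though unknown) number, I would run a case analysis on it, splitting according to whether $\cardin{\SC} > g/(4e)$ or $\cardin{\SC} \le g/(4e)$, and in each case exhibit a single event of probability $\ge 1 - \BadProb$ on which both conclusions (A) and (B) hold. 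The one fact used throughout is that $r \ge \cE\,\eps^{-2}(n/g)\log\BadProb^{-1}$, so $rg/n \ge \cE\,\eps^{-2}\log\BadProb^{-1}$; thus whenever $\cardin{\SC}$ is a constant fraction of $g$, the mean $\mu$ is at least a large constant times $\eps^{-2}\log\BadProb^{-1}$.

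\textbf{Large case ($\cardin{\SC} > g/(4e)$).} Here $\mu > rg/(4en) \ge (\cE/4e)\,\eps^{-2}\log\BadProb^{-1}$, which for $\cE$ large is at least any prescribed constant times $\log\BadProb^{-1}$. I would apply the multiplicative Chernoff bounds \itemref{Chernoff:B} and \itemref{Chernoff:C} of \lemref{chernoff} with $\delta = \eps/2 \in (0,1]$, getting $\Prob{X \le (1-\eps/2)\mu} \le \exp(-\mu\eps^2/8)$ and $\Prob{X \ge (1+\eps/2)\mu} \le \exp(-\mu\eps^2/12)$, each $\le \BadProb/2$ once $\cE$ is large enough. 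So with probability $\ge 1 - \BadProb$ we have $(1-\eps/2)\cardin{\SC} \le Y \le (1+\eps/2)\cardin{\SC}$, and since $(1+\eps)(1-\eps/2) \ge 1$ and $(1-\eps)(1+\eps/2) \le 1$ for $\eps \in (0,1)$, this rearranges to $(1-\eps)Y \le \cardin{\SC} \le (1+\eps)Y$, which is (B). For (A): if $\cardin{\SC} \ge g$ then $Y \ge (1-\eps/2)g \ge g/2$, so the hypothesis of (A) fails and (A) is vacuous; and if $g/(4e) < \cardin{\SC} < g$, the conclusion $\cardin{\SC} < g$ of (A) already holds.

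\textbf{Small case ($\cardin{\SC} \le g/(4e)$).} Here (A) is immediate since $\cardin{\SC} < g$, and (B) holds vacuously once I show $\Prob{Y \ge g/2} \le \BadProb$. As $X$ is integer-valued, $\{Y \ge g/2\} = \{X \ge k\}$ with $k \deq \ceil{rg/(2n)}$, and a union bound over the $k$-subsets of the $r$ samples gives $\Prob{X \ge k} \le \binom{r}{k}p^k \le (erp/k)^k = (e\mu/k)^k$. The case hypothesis forces $\mu = r\cardin{\SC}/n \le rg/(4en) \le k/(2e)$, hence $e\mu/k \le 1/2$ and $\Prob{X \ge k} \le 2^{-k} \le 2^{-rg/(2n)} \le \BadProb$, the last step because $rg/(2n) \ge (\cE/2)\log\BadProb^{-1} \ge \log\BadProb^{-1}$.

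\textbf{Main obstacle.} The delicate point is the small case: the additive (Hoeffding-type) bound \itemref{Chernoff:A} is too weak there, because it ignores that the variance of $X$ is only about $rp$ when $p$ is small, and so cannot push $\Prob{Y \ge g/2}$ below $\BadProb$ with only $r = O((n/g)\eps^{-2}\log\BadProb^{-1})$ samples; one must instead use a multiplicative/large-deviation estimate, such as the elementary $\binom{r}{k}p^k$ bound above (equivalently, the Chernoff upper tail for deviation ratios above $2$). The remaining work is purely bookkeeping of constants: the split threshold $g/(4e)$ and ``$\cE$ sufficiently large'' are chosen so that, simultaneously, $\mu$ is a large multiple of $\eps^{-2}\log\BadProb^{-1}$ in the large case and $rg/(2n) \ge \log\BadProb^{-1}$ in the small case.
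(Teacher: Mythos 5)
Your proof is correct, and it follows essentially the same plan as the paper's: a case split on whether $\cardin{\SC}$ is at least a constant fraction of $g$, two-sided multiplicative Chernoff bounds in the large case, and a separate upper-tail bound showing $\Prob{Y \geq g/2} \leq \BadProb$ in the small case. Two minor points of divergence are worth noting. In the small case the paper first stochastically dominates $X$ by a Binomial$(r, g/(4n))$ variable $X'$ and then applies \lemref{chernoff}\itemref{Chernoff:C} with $\delta = 1$ (the domination step is needed precisely because \itemref{Chernoff:C} is stated only for $\delta \leq 1$); you instead use the elementary bound $\Prob{X \geq k} \leq \binom{r}{k}p^k \leq (e\mu/k)^k$, which is self-contained and avoids the domination step --- both are fine, and your diagnosis that the additive bound \itemref{Chernoff:A} is too weak here is accurate. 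Second, you are more careful than the paper on a small point: the paper's case 2 actually establishes $(1-\eps)\cardin{\SC} \leq Y \leq (1+\eps)\cardin{\SC}$ and silently identifies this with the stated conclusion $(1-\eps)Y \leq \cardin{\SC} \leq (1+\eps)Y$, whereas your use of $\delta = \eps/2$ together with the inequalities $(1+\eps)(1-\eps/2) \geq 1$ and $(1-\eps)(1+\eps/2) \leq 1$ makes the conversion explicit. You also derive part (A) in the large case as a corollary of the concentration event rather than via a separate application of \itemref{Chernoff:B} as the paper does; this is equivalent and slightly cleaner.
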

\begin{proof}
	
	(A) The bad scenario here is that $\cardin{\SC} \geq g$, but
	$Y < g/2$. Let $X_i=1$ $\iff$ the $i$\th sample element is in
	$\SC$. We have that $Y = (n/r) X$, where $X= \sum_{i=1}^r X_i$.
	By assumption, we have
	\begin{equation}
	\mu = %
	\Ex{X}%
	=%
	\frac{r \cardin{\SC}}{n}%
	\geq%
	\frac{rg}{n}%
	\geq %
	\cE \frac{  \log \BadProb^{-1}}{ \eps^2 }.
	\eqlab{mu}%
	\end{equation}
	As such, by Chernoff's inequality (\lemref{chernoff}
	\itemref{Chernoff:B}), we have that
	\begin{math}
	\Prob{ Y < g/2} %
	=%
	\Prob{ X < rg/(2n) } %
	=%
	\ProbChar \bigl[ X < (1-1/2) \Ex{X} \bigr]%
	\leq%
	\exp\pth{ -\mu/8 } %
	\leq%
	\BadProb^{\cE /(8\eps^2\ln 2)}
	\end{math}
	and this is $\leq \BadProb$ for $\cE$ a sufficiently large
	constant.
	
	\medskip\noindent%
	(B) We have two cases to consider.  First suppose that $|B|< g/4$.
	In this case, if $X=\sum_{i=1}^r X_i$ is the random variable as
	described part (A), then each $X_i$ is an indicator variable with
	probability $p=|B|/n < g/(4n)$ and
	$\Prob{Y\ge g/2}=\Prob{X\ge rg/(2n)}\le \Prob{X'\ge rg/(2n)}$
	where $X'$ is the sum of $r$ independent Bernoulli trials with
	success probability $g/(4n)$.  Now
	$\Ex{X'}=\frac{rg}{4n}\ge \cE \frac{\log \BadProb^{-1}}{4\eps^2}$
	so
	\begin{equation*}
		\Prob{Y\ge g/2}%
		\leq%
		\Prob{X'\ge rg/(2n)}%
		=%
		\Prob{X'\ge (1+1) \Ex{X'}\bigr.}
		\leq%
		\exp\pth{-\Ex{X'}/3}\le \BadProb^{\cE/(12\eps^2\ln 2)}        
	\end{equation*}
	by Chernoff's inequality (\lemref{chernoff} \itemref{Chernoff:C})
	and again this is $\leq \BadProb$ for $\cE$ a sufficiently large
	constant.
	
	For the second case, suppose that $|B|\ge g/4$.  Then,
	$\Ex{X}\ge \Ex{X'}\ge \cE \frac{\log \BadProb^{-1}}{4\eps^2}$ and,
	since $Y$ is a fixed multiple of $X$, by Chernoff's inequality
	(\lemref{chernoff} \itemref{Chernoff:B}), we have
	\begin{equation*}
		\Prob{\bigl. Y < (1-\eps)\Ex{Y}} %
		=%
		\Prob{ \bigl. X < (1-\eps)\Ex{X} } %
		\leq%
		\exp\pth{ -\Ex{X} \eps^2/2 } %
		\leq%
		\BadProb^{\cE/(8\ln 2)}
	\end{equation*}
	which is $\le \BadProb/2$ for $\cE \geq 16\ln 2$.  Similarly, by
	Chernoff's inequality (\lemref{chernoff} \itemref{Chernoff:C}),
	\begin{equation*}
		\Prob{\bigl. Y > (1+\eps)\Ex{Y}} %
		=%
		\Prob{ \bigl. X > (1+\eps)\Ex{X} } %
		\leq%
		\exp\pth{ -\Ex{X} \eps^2/3 } %
		\leq%
		\BadProb^{\cE/(12\ln 2)}
	\end{equation*}
	which is $\le \BadProb/2$ for $\cE \geq 24 \ln 2$, as
	$\BadProb \leq 1/2$.  Adding these two failure probabilities
	together gives a bound of at most $\BadProb$ as required.
\end{proof}

\begin{lemma}
	\lemlab{est:set}%
	Consider two sets $\SC \subseteq \SA$, where $n = \cardin{\SA}$.
	Let $\epsA, \BadProb \in (0,1)$ be parameters, such that
	$\BadProb < 1/ \log n$. Assume that one is given an access to a
	membership oracle that, given an element $x \in \SA$, returns
	whether or not $x \in \SC$. Then, one can compute an estimate $s$,
	such that
	$(1-\epsA)\cardin{\SC}\leq s \leq (1+\epsA)\cardin{\SC}$, and
	computing this estimate requires
	$O( (n/\cardin{\SC}) \epsA^{-2} \log \BadProb^{-1})$ oracle
	queries. The returned estimate is correct with probability
	$\geq 1 - \BadProb$.
\end{lemma}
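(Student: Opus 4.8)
The plan is to bootstrap \lemref{estimate}, which estimates $\cardin{\SC}$ given a \emph{guess} $g$, into a guess-free estimator by running a geometric search over guesses. Set $h = \ceil{\log n}$ and $\BadProb' \deq \BadProb/(2(h+1))$, and (replacing $\epsA$ by $\min(\epsA,1/2)$, which only changes the eventual query bound by a constant) assume $\epsA \le 1/2$. For $j = 0,1,\dots,h$ in increasing order, try the guess $g_j = n/2^j$: invoke \lemref{estimate} with guess $g_j$, accuracy $\epsA$, and failure probability $\BadProb'$, drawing a fresh sample $\Sample_j$ and forming $Y_j = (n/\cardin{\Sample_j})\cardin{\Sample_j \cap \SC}$. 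Stop at the first index $j^*$ with $Y_{j^*} \ge g_{j^*}/2$ and output $s = Y_{j^*}$ (if this never happens, output $s = 0$, which is correct on the good event below and only arises when $\cardin{\SC}$ is tiny; if $\cardin{\SC}=0$ every $Y_j$ is $0$ and we output $0$).

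For correctness, condition on the event $\mathcal{G}$ that all $h+1$ invocations of \lemref{estimate} are "good", meaning their parts (A) and (B) hold; a union bound gives $\Prob{\mathcal{G}} \ge 1 - (h+1)\BadProb' \ge 1 - \BadProb$. On $\mathcal{G}$, if the search stops at level $j^*$ then part (B) at that level yields $(1-\epsA)Y_{j^*} \le \cardin{\SC} \le (1+\epsA)Y_{j^*}$; rearranging (using $\epsA \le 1/2$) gives $(1-2\epsA)\cardin{\SC} \le s \le (1+2\epsA)\cardin{\SC}$, so running the search with accuracy $\epsA/2$ delivers the claimed $(1\pm\epsA)$ bound. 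It remains to check that the search actually halts, and at the right scale: let $j_0$ be the smallest index with $g_{j_0} \le \cardin{\SC}$ (well-defined when $\cardin{\SC}\ge 1$ since $g_h \le 1$), so also $g_{j_0} > \cardin{\SC}/2$. Since $\cardin{\SC} \ge g_{j_0} \ge g_{j_0}/4$, the $\cardin{\SC}\ge g/4$ case in the proof of \lemref{estimate}(B) applies at level $j_0$, and on $\mathcal{G}$ we get $Y_{j_0} \ge (1-\epsA)\cardin{\SC} \ge (1-\epsA)g_{j_0} \ge g_{j_0}/2$. Hence the search stops at some $j^* \le j_0$.

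For the query count, level $j$ costs $O((n/g_j)\,\epsA^{-2}\log(\BadProb')^{-1}) = O(2^j\,\epsA^{-2}\log(\BadProb')^{-1})$ queries, so the total over levels $0,\dots,j^*$ telescopes geometrically to $O(2^{j^*}\,\epsA^{-2}\log(\BadProb')^{-1})$. Since $j^* \le j_0$ and $g_{j_0} > \cardin{\SC}/2$, we have $2^{j^*} \le 2^{j_0} = n/g_{j_0} < 2n/\cardin{\SC}$, giving $O((n/\cardin{\SC})\,\epsA^{-2}\log(\BadProb')^{-1})$ queries. Finally $\log(\BadProb')^{-1} = \log\bigl(2(h+1)/\BadProb\bigr) = O(\log\log n + \log\BadProb^{-1})$, and the hypothesis $\BadProb < 1/\log n$ forces $\log\BadProb^{-1} > \log\log n$, so $\log(\BadProb')^{-1} = O(\log\BadProb^{-1})$ and the bound becomes $O((n/\cardin{\SC})\,\epsA^{-2}\log\BadProb^{-1})$ as required. (Both the estimate's correctness and this per-level bound, hence the total, hold on $\mathcal{G}$; one can truncate the search to make the query bound worst-case, but the high-probability form is all we need.)

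The only substantive point — everything else is a union bound and a geometric sum — is showing the search does not overshoot, i.e., that it terminates once $g_j$ reaches $\Theta(\cardin{\SC})$; this is exactly where the two-sided guarantee of \lemref{estimate}(B) is used (its lower-bound half, via the $\cardin{\SC} \ge g/4$ case), as opposed to the mere safety guarantee (A). The second thing to keep an eye on is that the $\log\log n$ overhead from union-bounding over the $O(\log n)$ guesses must not appear in the final query bound; the hypothesis $\BadProb < 1/\log n$ is precisely what lets it be absorbed into $\log\BadProb^{-1}$.
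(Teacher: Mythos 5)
Your proposal is correct and follows essentially the same route as the paper: a doubling search over guesses $g_j = n/2^j$ built on \lemref{estimate}, a union bound over the $O(\log n)$ invocations (with the $\log\log n$ overhead absorbed into $\log\BadProb^{-1}$ via $\BadProb < 1/\log n$), a geometric sum for the query count dominated by the stopping level, and the same appeal to the lower-tail concentration inside the proof of \lemref{estimate}(B) to show the search halts by the time $g_j$ reaches $\Theta(\cardin{\SC})$. The only (asymptotically immaterial) difference is that the paper runs the search phase at constant accuracy $\eps = 1/2$ and then makes one final call at accuracy $\epsA$, whereas you run every level at accuracy $\epsA/2$ and output the first accepting level directly.
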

\begin{proof}
	Let $g_i = n/2^{i+2}$. For $i=1,\ldots, \log n$, use the algorithm
	of \lemref{estimate} with $\eps = 0.5$, with the probability of
	failure being $\BadProb / (8\log n)$, and let $Y_i$ be the
	returned estimate. The algorithm stops this loop as soon as
	$Y_i \geq 4 g_i$. Let $I$ be the value of $i$ when the loop
	stopped. The algorithm now calls \lemref{estimate} again with
	$g_I$ and $\eps=\epsA$, and returns the value of $Y$, as the
	desired estimate.

	Overall, for $T = 1 + \ceil{ \log n }$, the above makes $T$ calls
	to the subroutine of \lemref{estimate}, and the probability that
	any of them to fail is $T \BadProb / (8 \log n) <
	\BadProb$. Assume that all invocations of \lemref{estimate} were
	successful. In particular, \lemref{estimate} guarantees that if
	$Y > 4g_I \geq g_I/2$, then the estimate returned is
	$( 1\pm \eps)$-approximation to the desired quantity.
	
	Computing $Y_i$ requires
	\begin{math}
	r_i = O((n/g_i) \log (\log n/\BadProb )) = O(2^i \log
	\BadProb)
	\end{math}
	oracle membership queries.  As such, the number of membership
	queries performed by the algorithm overall is
	\begin{align*}
		\smash{\sum_i} r_i + O( (n/g_I) \eps^{-2} \log (\log n/\BadProb ) )%
		=%
		O( (n/\cardin{\SC}) \eps^{-2} \log \BadProb ).
	\end{align*}
\end{proof}

\subsubsection{Estimating subset size via emptiness oracle %
	queries}

Consider the variant where we are given a set $\SX \subseteq \SA$.
Given a query set $\SQ \subseteq \SA$, we have an \emph{emptiness
	oracle} that tells us whether $\SQ \cap \SX$ is empty. Using an
emptiness oracle, one can get a $(1\pm \eps)$-approximate the size of
$\SX$ using relatively few queries. The following result is implied by
the work of Aronov and Har-Peled \cite[Theorem 5.6]{ah-adrp-08} and
Falahatgar \etal \cite{fjops-endgt-16} -- the latter result has better
bounds if the failure probability is not required to be polynomially
small.

\begin{lemma}[\cite{ah-adrp-08,fjops-endgt-16}]
	\lemlab{est:set:emptiness}%
	Consider a set $\SX \subseteq \SA$, where $n = \cardin{\SA}$.  Let
	$\eps \in (0,1)$ be a parameter. Assume that one is given an
	access to an emptiness oracle that, given a query set
	$\SQ \subseteq \SA$, returns whether or not
	$\SX \cap \SQ \neq \varnothing$.  Then, one can compute an
	estimate $s$ such that
	$(1-\eps)\cardin{\SX}\leq s \leq (1+\eps)\cardin{\SX}$, using
	$O( \eps^{-2} \log n)$ emptiness queries. The returned estimate is
	correct with probability $\geq 1 - 1/n^{\Omega(1)}$.
\end{lemma}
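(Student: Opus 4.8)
The plan is to give a self-contained version of the argument behind \lemref{est:set:emptiness}, reducing everything to estimating the bias of a tunable Bernoulli random variable. The key observation is that \emph{subsampling} converts an emptiness query into such a trial: for a parameter $p\in(0,1]$, let $\SA_p\subseteq\SA$ include each element of $\SA$ independently with probability $p$, and query whether $\SX\cap\SA_p=\varnothing$. The answer is ``empty'' with probability exactly $q(p):=(1-p)^{\cardin{\SX}}$, and independent samples $\SA_p^{(1)},\SA_p^{(2)},\dots$ yield independent $\mathrm{Bernoulli}(q(p))$ trials. Since knowing $q(p)$ for a single $p$ determines $\cardin{\SX}=\ln q(p)/\ln(1-p)$ exactly, the whole task splits into: (i) picking a rate $p$ for which $q(p)$ is bounded away from $0$ and $1$, and (ii) estimating $q(p)$ accurately.

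For step (i) --- the \emph{coarse phase} --- I would search the geometric scales $p_i=2^{-i}$, $i=1,\dots,\ceil{\log n}$. The sequence $q(p_i)$ is increasing in $i$, it starts at $q(p_1)=2^{-\cardin{\SX}}$ and ends at $q(p_{\ceil{\log n}})\geq(1-1/n)^{\cardin{\SX}}\geq 1/e$; moreover $\cardin{\ln q(p_i)}=\cardin{\SX}\cdot\bigl(-\ln(1-2^{-i})\bigr)$ is a decreasing sequence whose consecutive ratios lie strictly between $2$ and $2.42$, while the target window $\cardin{\ln q}\in[\ln\tfrac43,\ln 4]$ spans a factor larger than $4.8$, so there is some index $i^\star\geq 1$ with $q(p_{i^\star})\in[1/4,3/4]$. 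Estimating each $q(p_i)$ from $O(\log n)$ trials and taking a union bound over the $O(\log n)$ scales, one locates an index whose true bias lies (say) in $[1/16,15/16]$ with probability $\geq 1-1/n^{O(1)}$; if the full-set query ($p=1$) already answers ``empty'' we output $s=0$.

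For step (ii) --- the \emph{refinement phase} --- fix $p:=p_{i^\star}$, so $q:=q(p)$ is bounded away from $0$ and $1$, make $t=\Theta(\eps^{-2}\log n)$ independent queries at rate $p$, and let $\hat q$ be the fraction answering ``empty''. By \lemref{Hoeffding:inequality} (equivalently \lemref{chernoff}), $\cardin{\hat q-q}\leq \eps/C$ with probability $\geq 1-1/n^{O(1)}$ for any chosen constant $C$, so $\min(q,\hat q)=\Theta(1)$. Output $s:=\ln\hat q/\ln(1-p)$. Then $\cardin{\ln\hat q-\ln q}\leq \cardin{\hat q-q}/\min(q,\hat q)=O(\eps)$, while $\cardin{\SX}\cdot\cardin{\ln(1-p)}=\cardin{\ln q}=\Theta(1)$, hence
\[
\cardin{s-\cardin{\SX}}=\frac{\cardin{\ln\hat q-\ln q}}{\cardin{\ln(1-p)}}\leq \frac{O(\eps)}{\Theta(1)/\cardin{\SX}}=O(\eps)\cdot\cardin{\SX},
\]
so choosing $C$ large enough gives $(1-\eps)\cardin{\SX}\leq s\leq(1+\eps)\cardin{\SX}$. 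The refinement phase costs $O(\eps^{-2}\log n)$ queries. The main obstacle is that the coarse phase as described above costs $O(\log^2 n)$ queries, which exceeds the claimed bound when $\eps$ is constant; closing this gap requires a more careful coarse search --- e.g.\ a noisy binary search over the $O(\log n)$ scales exploiting monotonicity of $i\mapsto q(p_i)$, which can even be made noiseless by using a single shared random permutation so that the prefix-emptiness answers are genuinely monotone --- or else directly invoking the optimized constructions of \cite{ah-adrp-08,fjops-endgt-16}. Everything else is routine Chernoff bookkeeping.
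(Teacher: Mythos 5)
Your proposal follows essentially the same route as the paper, which itself does not prove this lemma but cites \cite{ah-adrp-08,fjops-endgt-16} and only sketches the idea: subsample at rate $p$ so that the emptiness answer is a Bernoulli trial with bias $(1-p)^{\cardin{\SX}}$, locate a scale where this bias is bounded away from $0$ and $1$, then estimate the bias with $\Theta(\eps^{-2}\log n)$ trials and invert. Your refinement phase and the error propagation through the logarithm are correct. The one gap you flag yourself is real: scanning all $\ceil{\log n}$ geometric scales with $O(\log n)$ trials each costs $O(\log^2 n)$ queries, which exceeds the stated bound for constant $\eps$. This is precisely the part the cited works optimize --- e.g., Falahatgar \etal localize the correct scale with a doubling search over $\log\cardin{\SX}$ using only $O(\log\log n)$ stages (plus the final $O(\eps^{-2}\log(1/\delta))$ estimation), and Aronov and Har-Peled use an approximate binary search --- so deferring that step to the references, as the paper does, is the appropriate resolution rather than a flaw in your argument.
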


We sketch the basic idea of the algorithm used in the above lemma.
For a guess $g$ of the size of $\SX$, consider a random sample $\SQ$
where every element of $\SA$ is picked with probability $1/g$. The
probability that $\SQ$ avoids $\SX$ is
$\alpha(g) = (1-1/g)^{\cardin{\SX}}$. The function $\alpha(g)$ is:
\begin{enumerate*}[label=(\roman*)]
	\item monotonically increasing,
	\item close to zero when $g \ll \cardin{\SX }$,
	\item $\approx 1/e$ for $g=\cardin{\SX}$, and
	\item close to $1$ if $g \gg \cardin{\SX}$.
\end{enumerate*}
One can estimate the value $\alpha(g)$ by repeated random sampling and
checking if the random sample intersects $\SX$ using emptiness
queries. Given such an estimate one can then perform an approximate
binary search for the value of $g$ such that $\alpha(g) = 1/e$, which
corresponds to $g= \cardin{\SX}$.  See
\cite{ah-adrp-08,fjops-endgt-16} for further details.

\section{Edge sparsification by random coloring}
\seclab{sparsification}%

In this section, we present and prove that coloring vertices, and
counting only edges between specific color classes provides a reliable
estimate for the number of edges in the graph. This is distinct from
standard graph sparsification algorithms which usually sparsify the
edges of the graph directly (usually, by sampling edges).

We need the following technical lemma.
\begin{lemma}
	\lemlab{silly}%
	Let $C$ be a set of $r$ elements, colored randomly by $k$ colors
	-- specifically, for every element $x \in C$, one chooses randomly
	(independently and uniformly) a color for it from the set
	$\IRX{k}$. For $i \in \IRX{k}$, let $\ColX{i}$ be the number of
	elements of $C$ with color $i$.  Let $n$ be a positive integer and
	$c > 1$ be an arbitrary constant. Then:
	\medskip%
	\begin{compactenumA}
		\item For any color $i \in \IRX{k}$, we have
		\begin{math}
		\Prob{ \smash{ | \ColX{i} - r/k | > \sqrt{(c r/2) \ln n }}
			\,\bigr.} \leq 2/n^c.
		\end{math}
		
		\smallskip
		\item For any two distinct colors $i,j \in \IRX{k}$, we have
		\begin{math}
		\Prob{ \smash{ | \ColX{i} - \ColX{j} | > \sqrt{2 c r \ln n
			}}\, \bigr.} \leq 4/n^c.
		\end{math}
		
		\smallskip%
		\item For any two distinct colors $i,j \in \IRX{k}$, we have
		$\Ex{\bigl. \cardin{ \ColX{i} - \ColX{j} } } \leq \sqrt{
			2r/k}$.
	\end{compactenumA}
\end{lemma}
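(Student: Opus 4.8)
The plan is to express each color count $\ColX{i}$ as a sum of independent indicator variables and apply Hoeffding's inequality (\lemref{Hoeffding:inequality}); part (B) will then follow from part (A) by a union bound, and part (C) from a one-line second-moment estimate. For part (A): for each $x \in C$ let $X_x$ be the indicator that $x$ receives color $i$. The variables $\{X_x\}_{x \in C}$ are independent, each supported on $[0,1]$, and $\ColX{i} = \sum_{x\in C} X_x$ with $\Ex{\ColX{i}} = r/k$. Since $\sum_{x \in C}(1-0)^2 = r$, Hoeffding's inequality gives $\Prob{\cardin{\ColX{i} - r/k} \ge s} \le 2\exp(-2s^2/r)$ for every $s > 0$, and taking $s = \sqrt{(cr/2)\ln n}$ makes the exponent equal to $-c\ln n$, so the right-hand side is $2/n^c$, as required.

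For part (B), I would apply part (A) to each of the two colors $i$ and $j$ and take a union bound: with probability at least $1 - 4/n^c$, both $\cardin{\ColX{i} - r/k}$ and $\cardin{\ColX{j} - r/k}$ are at most $\sqrt{(cr/2)\ln n}$, and then the triangle inequality yields $\cardin{\ColX{i} - \ColX{j}} \le 2\sqrt{(cr/2)\ln n} = \sqrt{2cr\ln n}$. (Alternatively, one can apply Hoeffding directly to $\ColX{i} - \ColX{j} = \sum_{x\in C}(X_x - X_x')$, where $X_x$ and $X_x'$ are the indicators that $x$ receives color $i$, resp.\ $j$; the summands are independent and lie in $[-1,1]$, which even gives the sharper bound $2/n^c$.)

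For part (C), write $Z = \ColX{i} - \ColX{j} = \sum_{x\in C}(X_x - X_x')$ with $X_x, X_x'$ as above. The summands are independent, $\Ex{X_x - X_x'} = 0$, and $\Ex{(X_x - X_x')^2} = \Ex{X_x^2} + \Ex{(X_x')^2} - 2\Ex{X_x X_x'} = 1/k + 1/k - 0 = 2/k$, since a vertex cannot simultaneously carry colors $i$ and $j$. Hence $\Ex{Z} = 0$ and $\Var{Z} = \Ex{Z^2} = 2r/k$, and by Jensen's inequality (equivalently Cauchy--Schwarz), $\Ex{\cardin{Z}} \le \sqrt{\Ex{Z^2}} = \sqrt{2r/k}$. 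None of these steps is really an obstacle; the only place needing a moment of care is this second-moment computation, where one must exploit the negative correlation between $\ColX{i}$ and $\ColX{j}$ — equivalently, the vanishing cross term $\Ex{X_x X_x'} = 0$ — to obtain the clean bound $2r/k$ rather than a weaker estimate such as $4r/k$ that would result from treating the two counts as independent.
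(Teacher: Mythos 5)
Your proposal is correct and follows essentially the same route as the paper: part (A) is the same Chernoff--Hoeffding bound on the sum of color-$i$ indicators, part (B) is the same triangle-inequality-plus-union-bound reduction to (A), and part (C) is the identical second-moment computation with the $\pm 1$ (here $X_x - X_x'$) variables, the vanishing cross term, and Cauchy--Schwarz. No gaps.
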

\begin{proof}
	(A) For $\ell\in\IRX{r}$, let $X_\ell\big.$ be the indicator
	variable that is $1$ with probability $1/k$ and $0$ otherwise.
	For $X = \sum_{\ell=1}^{r} X_\ell$, notice that $n_i$ is
	distributed identically to $X$, and that
	$\Ex{X} = \Ex{n_i} = r/k$. Using Chernoff's inequality
	(\lemref{chernoff} \itemref{Chernoff:A}), we have
	\begin{equation*}
		\Prob{ \cardin{ X - \Ex{X}} >
			\sqrt{(c r/2) \ln n }}%
		\leq%
		2 \exp\pth{  - \tfrac{2}{r}  \cdot  \tfrac{cr}{2} \ln n}%
		\leq%
		2/n^c,
	\end{equation*}

	(B) Observe that
	$\cardin{\ColX{i} -\ColX{j}} \leq \cardin{\ColX{i} - r/k} +
	\cardin{r/k -\ColX{j}}$, and the claim follows from (A).
	
	\smallskip%
	(C) For $t=1,\ldots,r$, let $X_t =1$ if the $t$\th element of $C$
	is colored by color $i$, and let $X_t = -1$ if this element is
	colored by color $j$. Otherwise, set $X_t=0$.  Clearly, the
	desired quantity is $\mu = \Ex{\bigl.|X|}$, where
	$X = \sum_{t=1}^r X_t\Bigr.$.  We have that
	$\Prob{X_t = 1 } =\Prob{X_t = -1} = 1/k$, $\Ex{X_t} = 0$, and that
	$\Ex{X_t^2} = 2/k$.  As such, by the independence of the $X_i$s,
	we have
	\begin{math}
	\Ex{X^2}%
	=%
	2\sum_{i<j} \Ex{ X_{i}X_j } + \sum_{i=1}^r \Ex{X_i^2} =%
	r \frac{2}{k}.\Bigr.
	\end{math}
	Finally, we have
	\begin{math}
	\Var{|X|}%
	=%
	\Ex{\bigl. |X|^2} - \mu^2 %
	\geq 0.
	\end{math}
	As such,
	$\mu = \Ex{\bigl.|X|} \leq \sqrt{ \Ex{\bigl.|X|^2}} = \sqrt{
		\Ex{\bigl.X^2}} \leq \sqrt{2r/k}$.
\end{proof}

\begin{lemma}
	\lemlab{sparsification_general}%
	(A) There exists an absolute constant $\varsigma$ such that the
	following holds. For every $n$, let $\Graph = (\IRX{n},\Edges)$ be
	a graph with $\nEdges$ edges. For any
	$1 \leq k \leq \lfloor n/2 \rfloor$, let $\SA_1,\dots,\SA_{2k}$ be
	a uniformly random partition of $\IRX{n}$. Then,
	\begin{equation*}
		\ProbChar \biggl[%
		\Bigl|\displaystyle \frac{\nEdges}{2k} - \displaystyle
		\sum_{i=1}^{k} \mY{\SA_i}{ \SA_{k+i}} \Bigr|%
		\geq \ctau \displaystyle \sqrt{\nEdges} \log n%
		\biggr] %
		\leq%
		\displaystyle \frac{1}{n^{4}}, %
		\quad\text{and }\quad
		\ProbChar{\biggl[%
			{\Bigl|\frac{\nEdges}{2k} - \sum_{i=1}^{2k} \mX{\SA_i} \Bigr|}
			\geq \ctau
			\sqrt{\nEdges}\log n
			\biggr]}
		\leq  \frac{1}{n^{4}}.
	\end{equation*}
	
	(B) There exists an absolute constant $\varsigma$ such that the
	following holds. Similarly, for every $n$, disjoint sets
	$\SA,\SB \subseteq \IRX{n}$ and $k$ such that
	$2\leq k\leq \max\{|\SA|, |\SB|\}$, let $\SA_1,\dots,\SA_k$,
	$\SB_1,\dots,\SB_k$ be uniformly random partitions of $\SA$ and
	$\SB$, respectively. Then,
	\begin{equation*}
		\ProbChar { \biggl[%
			\Bigl| \mY{\SA}{\SB} - k\sum_{i=1}^{k} \mY{\SA_i}{\SB_{i}}
			\Bigr|%
			\geq%
			\ctau k \sqrt{  \mY{\SA}{\SB} } \log n \biggr] }%
		\leq%
		\displaystyle {1}/{n^{4}}.
	\end{equation*}
\end{lemma}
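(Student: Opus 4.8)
Each of the two sums in part~(A), and the sum in part~(B), has the form $f=\sum_{\tm\in\Edges'}Z_{\tm}$, where $\Edges'$ is a fixed set of edges ($\Edges'=\Edges$ in part~(A), $\Edges'=\EdgesY{\SA}{\SB}$ in part~(B)) and $Z_{\tm}$ indicates that the two endpoints of $\tm$ receive colors that are ``compatible'' under a fixed pairing of the palette in which every color has exactly one partner: for $\sum_{i=1}^k\mY{\SA_i}{\SA_{k+i}}$ the partner of $i$ is $k+i$; for $\sum_{i=1}^{2k}\mX{\SA_i}$ each color is its own partner; and in part~(B) color $i$ of $\SA$ is compatible only with color $i$ of $\SB$. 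Conditioning on the color of one endpoint of $\tm$ gives $\Ex{Z_{\tm}}=p$, where $p=1/(2k)$ in part~(A) (palette size $2k$) and $p=1/k$ in part~(B) (each side has a palette of size $k$); hence $\Ex{f}=p\,\nEdges$, respectively $p\,\mY{\SA}{\SB}$. So both inequalities of part~(A) are exactly the statement $\Prob{\,|f-\Ex{f}|\ge\ctau\sqrt{N}\,\log n\,}\le 1/n^4$ with $N=\nEdges$, and part~(B) is that statement with $N=\mY{\SA}{\SB}$, multiplied through by $k$. We may assume $N\ge 1$, as otherwise the claim is vacuous.

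The plan is to prove this single concentration bound via the bad-event form of Azuma's inequality, \lemref{azuma:cond}. Take the independent variables to be the vertex colors, exposed in an arbitrary fixed order; at step $j$ let $v_j$ be the revealed vertex, $Y_j$ its color, and set $X_j=\ExCond{f}{Y_1,\dots,Y_j}$, so $X_0=\Ex{f}$ and $X_r=f$ (here $r=n$ in part~(A) and $r=|\SA|+|\SB|\le n$ in part~(B)). The crucial computation is the increment $X_j-X_{j-1}$: for an edge $\tm\in\Edges'$ the conditional expectation of $Z_{\tm}$ changes when $Y_j$ is revealed only if $v_j$ is an endpoint of $\tm$, and even then, if the other endpoint is still unexposed, both values equal $p$ (its color is still uniform and independent). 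Therefore only the neighbors of $v_j$ along $\Edges'$ that were exposed before step $j$ contribute, and
\[
  X_j-X_{j-1}\ =\ \sum_{a<j,\ \{v_a,v_j\}\in\Edges'}\bigl(\mathbf{1}[\,Y_a,Y_j\text{ compatible}\,]-p\bigr)\ =\ N_j-p\,d_j^-,
\]
where $d_j^-$ is the number of such earlier neighbors and $N_j$, conditionally on $Y_j$ and hence also unconditionally, is distributed as $\mathrm{Binomial}(d_j^-,p)$. Two facts make the argument close: each increment is a binomial deviation from its own mean, and $\sum_j d_j^- = N$ \emph{exactly}, since every edge of $\Edges'$ is counted once, at its later endpoint.

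Now apply \lemref{azuma:cond} with the (deterministic) constants $c_j=\sqrt{(d_j^-/2)\,C\ln n}$ for a large absolute constant $C$. By the two-sided Chernoff bound \lemref{chernoff}\itemref{Chernoff:A} applied to $N_j$, $\Prob{|X_j-X_{j-1}|>c_j}\le 2\exp(-2c_j^2/d_j^-)=2n^{-C}$, so the bad event $\BadEvent$ satisfies $\Prob{\BadEvent}\le 2n^{1-C}$; moreover $S=\sum_j c_j^2=(C\ln n/2)\sum_j d_j^-=(C/2)\,N\ln n$. Plugging $\lambda=\ctau\sqrt{N}\log n$ into \lemref{azuma:cond} yields
\[
  \Prob{\,|f-\Ex{f}|\ge\ctau\sqrt{N}\log n\,}\ \le\ 2\exp\!\Bigl(-\tfrac{\ctau^2\log^2 n}{C\ln n}\Bigr)+2n^{1-C},
\]
and since $\log^2 n/\ln n=\Theta(\log n)$, choosing $C$ and then $\ctau$ to be sufficiently large absolute constants makes this at most $1/n^4$. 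This is precisely both inequalities of part~(A); multiplying by $k$ gives part~(B) (there, the compatible neighbors of a vertex are the already-exposed vertices adjacent to it on the opposite side, so again $\sum_j d_j^-=\mY{\SA}{\SB}$, and one takes $\lambda=\ctau\sqrt{\mY{\SA}{\SB}}\log n$).

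The step I expect to be most delicate is the increment computation, specifically verifying that \emph{unexposed} neighbors contribute nothing to $X_j-X_{j-1}$ --- their $Z_{\tm}$-expectation stays pinned at $p$ --- so that only already-exposed neighbors enter and hence $S=\sum_j c_j^2=\Theta(N\log n)$ holds \emph{with the right constant}, rather than with something that scales in $\max_j\deg(v_j)$. It is exactly this identity that produces a deviation of order $\sqrt{N}\log n$; a naive bounded-differences argument fails for high-degree vertices, and isolating the rare bad event via \lemref{azuma:cond} is what rescues it.
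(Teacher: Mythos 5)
Your proof is correct and follows essentially the same route as the paper: both expose the vertex colors one at a time, form the Doob martingale $X_j=\ExCond{f}{Y_1,\dots,Y_j}$, and apply \lemref{azuma:cond} with per-step bounds $c_j=\Theta(\sqrt{\deg\cdot\log n})$ so that $S=\Theta(N\log n)$ and the deviation is $O(\sqrt{N}\log n)$. The one place you diverge is the increment analysis, and yours is cleaner: you observe that unexposed neighbors cancel exactly, so $X_j-X_{j-1}=N_j-p\,d_j^-$ with $N_j\sim\mathrm{Binomial}(d_j^-,p)$ and a direct Chernoff bound suffices, whereas the paper bounds $|X_j-X_{j-1}|\le\max_{i,i'}|g(i)-g(i')|$ and controls the color-class count differences among all of $N(v_j)$ (including future neighbors) via a separate balls-in-bins lemma (\lemref{silly}), arriving at the same $S=\Theta(N\log n)$.
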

\begin{proof}
	(A) Consider the random process that colors vertex $t$, at time
	$t \in \IRX{n}$, with a uniformly random color $Y_t \in
	\IRX{2k}$. The colors correspond to the partition of $\IRX{n}$
	into classes $\SA_1,\ldots, \SA_{2k}$. Define
	\begin{equation*}
		f(Y_1,\ldots,Y_n) = \sum_{i=1}^{k} \mY{\SA_i}{\SA_{k+i}}.        
	\end{equation*}
	The probability of a specific edge $uv$ to be counted by $f$ is
	$1/(2k)$. Indeed, fix the color of $u$, and observe that there is
	only one choice of the color of $v$, such that $uv$ would be
	counted. As such, $\Ex{f} = \nEdges/(2k)$ and
	$0\leq f(Y_1,\ldots,Y_n) \leq \nEdges$.
	
	Consider the Doob martingale $X_0,X_1, \ldots, X_n$, where
	$X_t = \ExCond{f(Y_{\IRX{n}})}{Y_{\IRX{t}}}$, where
	$Y_{\IRX{t}} \equiv Y_1, \ldots, Y_t$.  We are interested in
	bounding the quantity
	\begin{math}
	\cardin{X_{t} - X_{t-1}}.
	\end{math}
	To this end, fix the value of $Y_{\IRX{t-1}}$, and let
	\begin{equation*}
		g(\alpha ) = \ExCond{f(Y_{\IRX{n}})}{Y_{\IRX{t-1}} \cap (Y_t =
			\alpha)} .        
	\end{equation*}
	We have that
	\begin{math}
	X_{t-1} = \ExCond{f(Y_{\IRX{n}})}{Y_{\IRX{t-1}}} %
	=%
	\sum_{\alpha=1}^{2k} g(\alpha)/2k.
	\end{math}
	Namely, the value of $X_{t-1}$ is an average of the values in
	$\mathcal{G} = \brc{g(1), g(2), \ldots, g(2k)}$. Clearly,
	$X_t \in \mathcal{G}$. As such, we have that
	$\cardin{X_t - X_{t-1}} \leq \max_{i,j} |g(i) - g(j)|$.

	Let $N(t)$ be the set of neighbors of $t$ in the graph and
	$\degX{t}=|N(t)|$ be the degree of $t$. Let
	\begin{math}
	N_{<t} = N(t) \cap \IRX{t-1}
	\end{math}
	and
	\begin{math}
	N_{>t} = N(t) \cap \IRY{t+1}{n}
	\end{math}
	be the before/after set of neighbors of $t$, respectively. Let
	$\ColLTY{t}{i}$ (resp. $\ColGTY{t}{i}$) be the number of neighbors
	of $t$ in $N_{<t}$ (resp. $N_{>t}$) colored with color $i$.  For a
	color $i \in \IRX{2k}$, let $\pi(i) = 1 + ((k + i - 1) \bmod 2k)$
	be its matching color.
	
	Fix two distinct colors $i,j \in \IRX{2k}$, and let
	\begin{equation*}
		\Delta_t%
		= %
		\cardin{ \bigl. g(i) - g(j) }
		=%
		\cardin{%
			\ColLTY{t}{ \pi(i)}
			+ \ExCond{ \smash{\ColGTY{t}{ \pi(i)} } \bigr.\! }{Y_t=i}
			- \ColLTY{t}{\pi(j)}
			- \ExCond{ \smash{ \ColGTY{t}{\pi(j) }} \bigr.\! }{Y_t=j}
		}.
	\end{equation*}
	To see why the above is true, observe that any edge involving two
	vertices in $\IRX{t-1}$ has the same contribution to $g(i)$ and
	$g(j)$. Similarly, an edge with a vertex in $\IRX{t-1}$, and a
	vertex in $\IRY{t+1}{n}$, has the same contribution to both terms.
	The same argument holds for an edge involving vertices with
	indices strictly larger than $t$. As such, only the edges adjacent
	to $t$ have a different contribution, which is as stated.
	Rearranging, we have by \lemref{silly} with $C=N(t)$ and
	$r=\degX{t}$, with probability at least $1-\beta$ for
	$\beta = 4/n^c$ for any constant $c>1$, that
	\begin{equation*}
		\Delta_t%
		=%
		\cardin{%
			\ColLTY{t}{ \pi(i)}
			+ \Ex{ \smash{\ColGTY{t}{ \pi(i)} } \bigr.}
			- \ColLTY{t}{\pi(j)}
			- \Ex{ \smash{ \ColGTY{t}{\pi(j) }}  \bigr. }
		}
		\leq%
		\cardin{ \ColLTY{t}{ \pi(i)} - \ColLTY{t}{\pi(j)}}
		+ %
		\Ex{ \Bigl.\cardin{ \smash{\ColGTY{t}{ \pi(i)} }
				-  \smash{ \ColGTY{t}{\pi(j) }}  \bigr. }}
		\leq%
		c_t,
	\end{equation*}
	for
	\begin{math}
	c_t =%
	\sqrt{2c\;\degX{t} \ln n } + \sqrt{\degX{t}/k}%
	\leq%
	3c \sqrt{ \degX{t} \ln n }.
	\end{math}
	Let $\BadEvent$ be the event that any $\Delta_t$ (for any choice
	of $i$, $j$ or $t$) exceeds $c_t$, and observe that we can choose
	a constant $c>1$ such that
	$\Prob{\BadEvent} \leq (2k)^2 n \beta \leq 1/n^{10}$.
	
	Let
	\begin{math}
	S%
	=%
	\sum_{t=1}^n c_t^2 =%
	\sum_{t=1}^n 9 c^2 \degX{t} \ln n =%
	O( \nEdges \ln n),
	\end{math}
	and
	\begin{math}
	s%
	=%
	\ctau \sqrt{ \nEdges} \ln n,
	\end{math}
	Applying \lemref{azuma:cond} to $X_{\IRX{n}}$, we have
	\begin{equation*}
		\Prob{ \bigl. \cardin{f - {m}/{2k}} > s}
		=%
		\Prob{ \bigl. \cardin{X_n - X_0} > s}%
		\leq %
		2\exp\pth{ - s^2/ 2S } + \Prob{\BadEvent}
		\leq%
		2/n^{10} + 1/n^{10}%
		\leq%
		1/n^4,
	\end{equation*}
	for $\ctau$ sufficiently large.
	
	For the second claim in part (A), a nearly-identical argument
	works, with $f(Y_1,\ldots,Y_n) = \sum_{i=1}^{2k} \mX{\SA_i}$. Part
	(B) also follows by a similar argument as part (A),
	e.g. identifying $\SB_i$ with $\SA_{k+i}$ throughout.
\end{proof}

\begin{remark}%
	\remlab{threshold:coloring}%
	Given an induced bipartite graph $\Graph = (\SA, \SB, \Edges)$
	with $\nEdges$ edges, coloring it with $k$ colors, and taking the
	bipartite subgraphs of the resulting matching of the coloring, as
	done in \lemref{sparsification_general}, results in $k$ new
	disjoint bipartite (induced) subgraphs,
	$\Graph_i = (\SA_i, \SB_i, \Edges_i )$, for $i=1,\ldots, k$, with
	total number of edges $\Gamma = \sum_{i=1}^k \mX{\Graph_i}
	$. Furthermore, we have that $k \cdot \Gamma$ is a
	$(1\pm \epsA)$-approximation to $\mX{\Graph}$, where
	\begin{math}
	\epsA = \bigl( \ctau k \sqrt{ \nEdges } \log n \bigr) /
	\nEdges,
	\end{math}
	with high probability.  For our purposes, we need
	\begin{equation*}
		\epsA \leq \frac{\eps}{8 \log n}%
		\iff%
		\frac{\bigl( \ctau k \sqrt{ \nEdges } \log n \bigr)}%
		{\nEdges}
		\leq
		\frac{\eps}{8 \log n}
		\iff%
		\frac{8\bigl( \ctau k  \log^2 n \bigr)}{\eps}%
		\leq
		\sqrt{\nEdges}
		\iff%
		\nEdges = \Omega(  k^2 \eps^{-2}  \log^4 n ).
	\end{equation*}
	
	Setting $k=4$, the above implies that one can apply the refinement
	algorithm of \lemref{sparsification_general} if
	$\nEdges = \Omega(\eps^{-2} \log^4 n )$. With high probability,
	the number of edges in the new $k$ subgraphs (i.e., $\Gamma$),
	scaled by $k$, is a good estimate (i.e., within a
	$1 \pm \eps/(8 \log n)$ factor) for the number of edges in the
	original graph, and furthermore, the number of edges in the new
	subgraphs is small (formally, $\Ex{\Gamma} \leq \nEdges / 4$, and
	with high probability $\Gamma \leq \nEdges/2$).
\end{remark}

\section{Edge estimation using \BIS queries}
\seclab{bisq}

Here we show how to get exact and approximate count for the number of
edges in a graph using \BIS queries.

\subsection{Exactly counting edges using \BIS queries}
\seclab{bis:exact}

\begin{lemma}
	\lemlab{bisq:exact}%
	Given two disjoint sets $\SA,\SB \subseteq \IRX{n}$, one can
	(deterministically) compute $\EdgesY{\SA}{\SB}$, and thus
	$\mY{\SA}{\SB} = \cardin{\EdgesY{\SA}{\SB}}$, using
	$O( 1 + \mY{\SA}{\SB} \log n )$ \BIS queries.  Alternatively,
	given a parameter (informally, a query budget)
	$t = \Omega( \log n)$, one can decide if the given graph has
	$\leq t / \log n$ edges (or more) using $O(t)$ \BIS queries.
\end{lemma}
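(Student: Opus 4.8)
The plan is a standard recursive divide-and-conquer on the pair of sets $(\SA, \SB)$, using a \BIS query as the "is there any edge here?" test. First, I would make one \BIS query on $(\SA, \SB)$; if it reports $\mY{\SA}{\SB} = 0$ we return the empty set immediately (this is the $O(1)$ base contribution). Otherwise, if $|\SA| = |\SB| = 1$, the single \BIS query has already certified that the unique pair is an edge, so we return it. In the general recursive step, split whichever of $\SA, \SB$ is larger (say $\SA$) into two halves $\SA', \SA''$ of sizes $\lceil |\SA|/2 \rceil$ and $\lfloor |\SA|/2 \rfloor$, and recurse on the two subproblems $(\SA', \SB)$ and $(\SA'', \SB)$; the answer is the union of the two returned edge sets. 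Correctness is immediate since $\EdgesY{\SA}{\SB} = \EdgesY{\SA'}{\SB} \cup \EdgesY{\SA''}{\SB}$ disjointly.

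For the query bound, the key observation is that every recursive call that is made either (i) returns at the \BIS test because its bipartite subgraph is edge-free — and each such call is a child of a call whose subgraph \emph{does} contain an edge, so the number of such "empty" leaves is at most twice the number of "non-empty" nodes — or (ii) has a nonempty edge set. So it suffices to bound the number of recursive nodes whose subgraph contains at least one edge. Consider any single edge $uv \in \EdgesY{\SA}{\SB}$: the nodes of the recursion tree whose subgraph contains $uv$ form a root-to-leaf path, and along that path the size $|\SA_{\text{node}}| + |\SB_{\text{node}}|$ of the larger-halved side roughly halves at each step (more precisely, the larger of the two sets halves, so after two levels the total shrinks by a constant factor), bottoming out at $|\SA_{\text{node}}| = |\SB_{\text{node}}| = 1$. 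Hence each edge is "charged" for $O(\log n)$ nodes, and since a node with $\geq 1$ edge is charged by at least one edge, a simple counting argument gives that the total number of non-empty nodes is $O(\mY{\SA}{\SB} \log n)$, and thus the total query count is $O(1 + \mY{\SA}{\SB}\log n)$. I would phrase this either as an explicit charging argument or by a short induction: letting $Q(\SA,\SB)$ be the query count, one shows $Q(\SA,\SB) \le 1$ if $\mY{\SA}{\SB}=0$, and $Q(\SA,\SB) \le 1 + Q(\SA',\SB) + Q(\SA'',\SB)$ otherwise, and checks this recursion solves to the claimed bound when the recursion depth is $O(\log n)$.

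For the second statement, run exactly the same recursive procedure but maintain a running counter of \BIS queries issued; abort and report "$> t/\log n$ edges" the moment the counter exceeds $c t$ for the appropriate constant $c$ from the first part, and otherwise (when the recursion completes) report the exact count, which is $\le t/\log n$ by the contrapositive of the query bound. This uses $O(t)$ queries by construction, and the requirement $t = \Omega(\log n)$ just ensures the threshold $t/\log n$ is a meaningful (at least constant) number of edges so that the $O(1 + m\log n)$ bound is dominated by the $m \log n$ term.

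The main obstacle — really the only non-routine point — is making the charging/recursion-depth argument clean: one has to be slightly careful that splitting only the \emph{larger} side still forces the total size $|\SA_{\text{node}}|+|\SB_{\text{node}}|$ to drop geometrically (it does, over any two consecutive levels, which is enough for the $O(\log n)$ depth bound), and that "empty" leaves are correctly absorbed into the constant factor by pairing each with its non-empty parent. Everything else is bookkeeping.
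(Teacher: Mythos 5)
Your proposal is correct and follows essentially the same approach as the paper: a divide-and-conquer recursion using \BIS queries as emptiness tests, with the query count bounded by charging each visited node to an edge whose root-to-leaf path it lies on, plus the observation that empty nodes are children of non-empty ones. The only (immaterial) difference is that you halve the larger side to get a binary recursion tree, whereas the paper splits both sides simultaneously into a quadtree with four children; the depth bound and charging argument are the same either way.
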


\begin{proof}
	We use a recursive divide-and-conquer approach, which intuitively
	builds a quadtree over the pair $(\SA,\SB)$.
	Specifically, consider the incidence matrix $M$ of size
	$\cardin{\SA}\times\cardin{\SB}$, where a column corresponds to
	an element of $\SB$, and a row to an element of $\SA$. An entry
	in the matrix is equal to one if there is an edge between the
	corresponding nodes in the original graph, and it is zero
	otherwise. The task at hand as such is to count the number of
	ones in the matrix. A \BIS query then corresponds to deciding
	if an induced submatrix is all zero. We now conceptually build
	a tree (i.e., a quadtree), by partitioning the matrix into four
	submatrices of the same dimensions (in the natural way), and
	recursively build a quadtree for each submatrix. Intuitively,
	the algorithm counts the $1$s in the matrix, by tracking each
	of the $1$s to their corresponding leaf node in the quadtree.

	To this end, the algorithm first issues the query
	\BIS{}$(\SA,\SB)$. If the return value is false, then there are no
	edges between $\SA$ and $\SB$, and the algorithm sets
	$\mY{\SA}{\SB}$ to zero, and returns. If $|\SA| = |\SB| =1$, then
	this also determines if $\mY{\SA}{\SB}$ is $0$ or $1$ in this
	case, and the algorithm returns. The remaining case, is that
	$\mY{\SA}{\SB} \neq 0$, and the algorithm recurses on the four
	children of $(\SA,\SB)$, which will correspond to the pairs
	$(\SA_1,\SB_1), (\SA_1,\SB_2),(\SA_2,\SB_1)$, and $(\SA_2,\SB_2)$,
	where $\SA_1,\SA_2$ and $\SB_1,\SB_2$ are equipartitions of $\SA$
	and $\SB$, respectively. We are using here the identity
	\begin{equation*}
		\mY{\SA}{\SB} 
		=%
		\mY{\SA_1}{\SB_1} + \mY{\SA_1}{\SB_2} + \mY{\SA_2}{\SB_1}
		+ \mY{\SA_2}{\SB_2}.
	\end{equation*}
	
	If $\mY{\SA}{\SB} = 0$ holds, then the number of queries is
	exactly equal to $1$, and the lemma is true in this case. For the
	rest of the proof we assume that $\mY{\SA}{\SB} \geq 1$.  To bound
	the number of queries, imagine building the whole quadtree for the
	adjacency matrix of $\SA \times \SB$ with entries for
	$\EdgesY{\SA}{\SB}$. Let $X$ be the set of 1 entries in this
	matrix, and let $k = |X|$ (i.e., $X$ corresponds to set of leaves
	that are labeled 1 in the quadtree). The height of the quadtree is
	$h = O( \max\{\log |\SA|, \log |\SB| \} )$. Let $X_1$ be the set
	of nodes in the quadtree that are either in $X$ or are ancestors
	of nodes of $X$. It is not hard to verify that
	\begin{math}
	\cardin{X_1} = O\bigl( k + k \log( |\SA| |\SB| ) \bigr)%
	=%
	O( k \log n).
	\end{math}
	Finally, let $X_2$ be the set of nodes in the quadtree that are
	either in $X_1$, or their parent is in $X_1$. Clearly, the
	algorithm visits only the nodes of $X_2$ in the recursion, thus
	implying the desired bound.
	
	As for the budgeted version, run the algorithm until it has
	accumulated $T = O(t/\log n)$ edges in the working set, where
	$T > t$. If this never happens, then the number of edges of the
	graph is at most $T$, as desired, and the above analysis
	applies. Otherwise, the algorithm stops, and applying the same
	argument as above, we get that the number of \BIS queries is
	bounded by $O( T \log n)=O(t)$.
\end{proof}

\begin{remark:unnumbered}
	The number of \BIS queries made by the algorithm of
	\lemref{bisq:exact} is at least $\max\{\mY{\SA}{\SB},1\}$, since
	every edge with one endpoint in $\SA$ and the other in $\SB$ is
	identified (on its own, explicitly) by such a query.
\end{remark:unnumbered}

Though we do not need it in sequel for our algorithms to estimate the
number of edges in a graph, we can use the above algorithm to exactly
identify the edges of an \emph{arbitrary} graph using \BIS queries
with a cost of $O(\log n)$ overhead per edge.

\begin{lemma}
	\lemlab{adjacent:edges}%
	Given a vertex $v \in \IRX{n}$, one can compute all the edges
	adjacent to $v$ in $\Graph$ using $O(1+\degX{v} \log n)$ queries.
\end{lemma}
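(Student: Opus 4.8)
The plan is to reduce this to the edge-counting primitive of \lemref{bisq:exact}. Apply that lemma with $\SA = \{v\}$ and $\SB = \IRX{n} \setminus \{v\}$, which are disjoint. Every edge incident to $v$ has $v$ as one endpoint and some vertex of $\SB$ as the other, so $\EdgesY{\{v\}}{\SB}$ is exactly the set of edges adjacent to $v$, and $\mY{\{v\}}{\SB} = \degX{v}$. Thus the algorithm of \lemref{bisq:exact} computes the desired set of edges, using $O(1 + \mY{\{v\}}{\SB} \log n) = O(1 + \degX{v} \log n)$ \BIS queries. This already establishes the weaker of the two bounds claimed; the work is in tightening the $\log n$ factor to $\log(n/\degX{v})$.

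For the sharper bound, I would revisit the recursion in the proof of \lemref{bisq:exact} specialized to $|\SA| = 1$. In this case the quadtree over $\SA \times \SB$ degenerates into an ordinary binary tree $T$ over the $|\SB| = n-1$ elements of $\SB$ (the first coordinate can never be split), of height $h = \ceil{\log(n-1)}$, whose $1$-labelled leaves are exactly the $\degX{v}$ neighbours of $v$. The recursion visits a node of $T$ only if its parent's set contains at least one neighbour, so the total number of \BIS queries is $O(1)$ (the top-level query, which also covers the case $\degX{v}=0$) plus a constant times the number of nodes of $T$ lying on a root-to-leaf path that ends at a marked leaf. Writing $d = \degX{v} \ge 1$, at depth $i$ there are at most $\min(2^i, d)$ such nodes (each marked leaf has a unique depth-$i$ ancestor), and summing the split at $2^i \approx d$,
\[
\sum_{i=0}^{h} \min(2^i, d) = O\pth{ d + d \log(n/d) }.
\]
Hence the total is $O\pth{ 1 + \degX{v} + \degX{v} \log(n/\degX{v}) }$, which is in turn $O(1 + \degX{v} \log n)$.

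There is no serious obstacle here: the only real content is the refined node-count of the binary tree (the $\min(2^i, d)$ estimate and its summation) together with the observation that the quadtree of \lemref{bisq:exact} collapses to a binary tree when one side is a singleton. The points needing a little care are the degenerate case $\degX{v} = 0$ (absorbed into the additive $1$) and stating the node-visiting rule of the recursion precisely enough that the query count is genuinely proportional to the number of visited nodes. One could alternatively give a self-contained binary-splitting argument over $\IRX{n}\setminus\{v\}$ using \BIS queries of the form $\BIS(\{v\}, \SQ)$, but routing through \lemref{bisq:exact} is cleaner and reuses the analysis already in place.
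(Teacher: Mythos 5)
Your proposal is correct and follows essentially the same route as the paper: a binary splitting of $\IRX{n}\setminus\{v\}$ driven by queries of the form $\BIS(\{v\},\SQ)$, recursing only below sets that contain a neighbour of $v$. The only difference is bookkeeping --- you bound the visited nodes level by level via $\sum_{i}\min(2^i,\degX{v})$, whereas the paper counts degree-one versus degree-two nodes of the ``true'' subtree with a swapping argument; both yield the same $O\bigl(1+\degX{v}+\degX{v}\log(n/\degX{v})\bigr)$ bound.
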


\begin{proof}
	Let $\SB = \{v\}$, and $\SA = \IRX{n} \setminus \SB$, and observe
	that $\degX{v} = \mY{\SB}{\SA}$.  The algorithm of
	\lemref{bisq:exact} can now be used, observing that it can be
	modified to report all the edges found, thus implying the result.
\end{proof}

\begin{lemma}
	\lemlab{lem:bfs}%
	Given a vertex $v \in \IRX{n}$, and a graph
	$\Graph = (\IRX{n}, \Edges)$, let $\CC$ be the connected component
	of $v$ in $\Graph$. The set of edges in $\CC$ (i.e.,
	$\EdgesX{\CC}$) can be computed using $O( 1 + \mX{\CC} \log n )$
	\BIS queries, where $\mX{\CC}$ is the number of edges in $\CC$.
\end{lemma}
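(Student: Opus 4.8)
The plan is to run a standard breadth-first search from $v$, using \lemref{adjacent:edges} as the neighbor-enumeration primitive. Concretely, I would maintain a set $\VRT$ of discovered vertices (initialized to $\{v\}$), a FIFO queue of vertices whose incident edges have not yet been explored (initialized to contain $v$), and an accumulator set of edges (initially empty). While the queue is nonempty, pop a vertex $u$, invoke \lemref{adjacent:edges} to obtain the full list of edges incident to $u$ in $\Graph$, add all of them to the accumulator, and for every endpoint $w \neq u$ of such an edge with $w \notin \VRT$, insert $w$ into $\VRT$ and push it onto the queue. When the queue empties, output the accumulator; this is the claimed algorithm for computing $\EdgesX{\CC}$.

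Correctness is the usual BFS invariant: at termination $\VRT$ is exactly the set of vertices reachable from $v$, i.e.\ $\CC$, and every edge of $\CC$ is incident to some vertex of $\CC$, hence is discovered when that endpoint is popped (each edge is in fact found twice, once from each endpoint, and duplicates are simply discarded). No vertex outside $\CC$ is ever queried, since \lemref{adjacent:edges} is only invoked on vertices already placed in $\VRT \subseteq \CC$.

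For the query bound, \lemref{adjacent:edges} is called exactly once per vertex $u \in \CC$, at a cost of $O(1 + \degX{u}\log n)$ \BIS queries, so the total is $O\bigl(\sum_{u\in\CC}(1 + \degX{u}\log n)\bigr) = O\bigl(|\CC| + (\log n)\sum_{u\in\CC}\degX{u}\bigr)$. By the handshake identity, $\sum_{u\in\CC}\degX{u} = 2\mX{\CC}$, and since a connected graph on $|\CC|$ vertices has at least $|\CC|-1$ edges, $|\CC| \leq \mX{\CC}+1$; combining, the total is $O(\mX{\CC} + 1 + \mX{\CC}\log n) = O(1 + \mX{\CC}\log n)$, as claimed. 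When $\CC = \{v\}$ is a single isolated vertex, the algorithm makes only the $O(1)$ queries of one call to \lemref{adjacent:edges}, matching the bound with $\mX{\CC}=0$.

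This proof is essentially routine; the only point needing a moment's care is the accounting of the additive ``$+1$'' term incurred per explored vertex. The naive worry is that $|\CC|$ could dwarf $\mX{\CC}$, but connectivity forces $|\CC| = O(1 + \mX{\CC})$, so those terms are absorbed into the stated bound. I do not expect any genuine obstacle beyond this bookkeeping: unlike the estimation algorithms elsewhere in the paper, here the BFS frontier is driven entirely by the exact incident-edge lists returned by \lemref{adjacent:edges}, so no coarse estimator, subsampling, or adaptivity beyond ordinary graph search is required.
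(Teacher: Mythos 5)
Your proposal is correct and follows essentially the same route as the paper: a BFS from $v$ that invokes \lemref{adjacent:edges} on each newly discovered vertex, with the query count bounded via the handshake identity $\sum_{u\in\CC}\degX{u}=2\mX{\CC}$. Your extra care with the additive per-vertex terms (using $|\CC|\leq \mX{\CC}+1$ by connectivity) is a detail the paper glosses over but is handled correctly.
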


\begin{proof}
	Do a \BFS in $\Graph$ starting from $v$. Whenever reaching a
	vertex for the first time, compute its adjacent edges using
	\lemref{adjacent:edges}. Clearly, the \BFS visits all the vertices
	in $\CC$, and therefore computes all the edges in this connected
	component. The bound on the number of queries readily follows by
	observing that $\sum_{v \in \VX{\CC}} d(v) \log n$ is
	$O(\mX{\CC} \log n)$.
\end{proof}

\begin{lemma}
	\lemlab{bisq:exact:all}%
	For a graph $\Graph=(\IRX{n},\Edges)$, one can deterministically
	compute $\mX{\Edges}$ exactly, using at most
	$O( \log n + |\Edges| \log n )$ \BIS queries.  Alternatively,
	given a parameter (informally, a query budget)
	$t = \Omega( \log n)$, one can decide whether the given graph has
	at most $t / \log n$ edges, or more than this number, using $O(t)$
	\BIS queries.
\end{lemma}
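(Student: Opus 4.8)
The plan is to harvest the edges one connected component at a time, via the breadth-first search routine of \lemref{lem:bfs}: started from a vertex $v$, it returns all $\mX{\CC}$ edges of $v$'s component $\CC$ using $O(1+\mX{\CC}\log n)$ \BIS queries. The one hazard is that $\Graph$ may have $\Theta(n)$ isolated vertices (or trivial components) while $\nEdges$ is tiny, so the algorithm must never inspect such vertices one at a time; doing so would already cost $\Omega(n)$ queries, far above the target $O(\log n+\nEdges\log n)$.

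The enabling primitive is: for any $\SA\subseteq\IRX{n}$, decide with $\ceil{\log n}$ \BIS queries whether $\SA$ is independent, and if not, exhibit one edge inside $\SA$ with $O(\log n)$ further queries. Label each vertex of $\SA$ by its $\ceil{\log n}$-bit index; for each bit position $j$ let $\SA^0_j,\SA^1_j\subseteq\SA$ hold the vertices whose $j$-th bit is $0$ and $1$, and issue $\BIS(\SA^0_j,\SA^1_j)$. An edge $uv$ of $\SA$ has $u\ne v$, so the labels of $u,v$ differ in some bit $j^\ast$, whence that query catches the edge; conversely, if all $\ceil{\log n}$ queries report no edge, then every pair of distinct vertices of $\SA$ is separated by some queried partition, so $\SA$ is independent. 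When $\SA$ is not independent, take a queried partition $(\SA^0_j,\SA^1_j)$ with a crossing edge, binary-search within $\SA^0_j$ (halve it, query each half against $\SA^1_j$, recurse into a half that still crosses) to isolate a vertex $u$ on a crossing edge, then binary-search within $\SA^1_j$ (query $\{u\}$ against halves) for a neighbor $v$ of $u$; this costs $O(\log n)$ queries and returns an edge $uv$ with $u,v\in\SA$.

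The algorithm then maintains the set $\SA$ of still-unassigned vertices, initially $\IRX{n}$. While $\SA$ is not independent, it finds an edge $uv$ with $u,v\in\SA$ as above, runs \lemref{lem:bfs} on $\Graph[\SA]$ from $u$ to get $u$'s component $\CC$ and all its edges, adds $\mX{\CC}$ to a running total, and deletes $V(\CC)$ from $\SA$; once $\SA$ is independent it outputs the total. Since $\SA$ is always $\IRX{n}$ minus a union of full components of $\Graph$ (immediate by induction), the component of $u$ in $\Graph[\SA]$ is a genuine component of $\Graph$, so no edge is double-counted or missed and the output is $\nEdges$. For the query bound: if $\nEdges=0$ the first independence test stops the loop after $O(\log n)$ queries; otherwise the components with an edge, say $\CC_1,\dots,\CC_p$, satisfy $p\le\nEdges$, the $i$-th iteration costs $O(\log n)$ for the independence test and edge-finding plus $O(1+\mX{\CC_i}\log n)$ for \lemref{lem:bfs}, and summing over $i$ yields $O(p\log n)+O(\nEdges\log n)=O(\nEdges\log n)$; with the final $O(\log n)$ test this is $O(\log n+\nEdges\log n)$.

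For the budgeted version, run the same algorithm but abort and answer ``more than $t/\log n$ edges'' as soon as either the number of \BIS queries made reaches $Ct$ or the number of discovered edges exceeds $t/\log n$, where $C$ absorbs the constant in the bound just proved. Since $t=\Omega(\log n)$: if $\nEdges\le t/\log n$, the un-aborted run makes $O(\log n+(t/\log n)\log n)=O(t)$ queries and finds at most $t/\log n$ edges, so for $C$ large enough neither trigger fires and the exact value (which is $\le t/\log n$) is returned; and if a trigger fires, then $\nEdges>t/\log n$ — directly in the edge-count case, and in the query-count case because otherwise the run would have halted strictly below $Ct$ queries. Either way only $O(t)$ \BIS queries are used. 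The genuinely delicate step is the one flagged at the outset: discarding trivial components in bulk through the $O(\log n)$ independence test instead of vertex by vertex, and locating a start vertex in $O(\log n)$ (rather than $O(\log^2 n)$) queries, which keeps the bound tight when $\nEdges$ is a small constant.
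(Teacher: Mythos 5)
Your proof is correct, and it reaches the same bound by a genuinely different organization than the paper's. Both arguments share the two essential primitives: the $\ceil{\log n}$ bit-induced bipartitions of the vertex indices (which let you detect an edge among the surviving vertices without ever touching isolated vertices individually), and the component-harvesting BFS of \lemref{lem:bfs}. The difference is in the outer loop. The paper iterates over the $\log n$ bit positions: in iteration $i$ it runs the full quadtree counting algorithm of \lemref{bisq:exact} on the bipartition $(A_i\cap \SD_i, B_i\cap \SD_i)$, expands every vertex so discovered into its whole component via BFS, and removes those components; correctness then rests on a ``first differing bit'' argument showing every edge is caught in some iteration. You instead iterate over components: each round spends $O(\log n)$ queries on a one-shot independence test of the surviving set (all bit-bipartitions at once), binary-searches for a single witness edge, and hands the rest to BFS, charging the round's $O(\log n)$ overhead to the at-least-one edge of the component just harvested. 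Your invariant --- that the surviving set is always $\IRX{n}$ minus a union of full components --- makes correctness immediate and arguably cleaner than the paper's case analysis, at the price of up to $\nEdges$ outer iterations instead of $\log n$; since each iteration's overhead is amortized against an edge, the totals coincide. Your budgeted version matches the paper's. One cosmetic remark: you invoke \lemref{lem:bfs} on $\Graph[\SA]$, whereas the lemma is stated for $\Graph$; by your invariant the component of $u$ is the same in both graphs, so you may as well run it on $\Graph$ directly and avoid any question of whether the lemma applies to induced subgraphs.
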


\begin{proof}
	If we are given a set $Z$ that contains at least one vertex in
	each connected component of $\Graph$, then the result follows
	readily by applying the algorithm of \lemref{lem:bfs} to the
	vertices of $Z$ in order, deleting the vertices of each
	connected component as it is being discovered from $Z$. The
	total number of \BIS queries is $O( |\Edges| \log n )$, as the
	edges of each connected component are discovered in different
	invocations of \lemref{lem:bfs}.
	
	We remain with the task of computing $Z$. Let
	$\SA_0 = \IRX{n}$. For $i = 1, \ldots, T = \ceil{\log_2 n}$,
	let $A_i$ be the elements of $\SA_{i-1}$ whose $i$\th bit in
	their binary representation is $1$. Let
	$B_i = \SA_{i-1} \setminus A_i$.  Compute all the edges in
	$\Edges_i = \EdgesY{A_i}{B_i}$ using the algorithm of
	\lemref{bisq:exact}. This requires $O( 1+ |\Edges_i| \log n )$
	queries. Let $\VRT_i = \VX{\Edges_i}$. We add $\VRT_i$ to $Z$,
	and set $\SA_i = \SA_{i-1}\setminus \VRT_i$.
	
	Observe that every edge $e$ in $\Graph$ has an index $i$ such
	that its two vertices differ in the $i$\th bit. Note that either
	one of the endpoints of $e$ was already added to $Z$ before the
	$i$\th iteration, or it would be discovered and its endpoints
	added to $Z$ in the $i$\th iteration.  As such, the set $Z$ is
	computed correctly.  Since $\Edges_1, \ldots, \Edges_T$ are
	disjoint sets, it follows that computing $Z$ requires
	$O( T + \sum_i \cardin{\Edges_i}) = O( \log n + |\Edges| \log n
	)$ \BIS queries.
	
	\medskip
	
	For the budgeted version, we run the algorithm until
	$\tau = \Omega(t)$ \BIS queries have been performed. If this does not
	happen, then the graph has at most $\tau$ edges, and they were
	reported by the algorithm. Otherwise, we know that the graph must
	have at least $\tau/\log n$ edges, as desired.
\end{proof}

\subsection{The Coarse Estimator algorithm}
\seclab{coarse}

Let $\Graph = \Graph(\IRX{n}, \Edges)$ be a graph and let
$\SA, \SB \subseteq \IRX{n}$ be disjoint subsets of the vertices.  The
task at hand is to estimate $\mY{\SA}{\SB}$, using polylog $\BIS$
queries.

For a subset $S \subseteq \IRX{n}$, define $N(S)$ to be the union of
the neighbors of all the vertices in $S$. For a vertex $v$, let
$\deg_S(v)$ denote the number of neighbors of $v$ that lie in $S$. For
$i \in \IRX{\log n}$, define the set of vertices in $\SA$ with degree
between $2^i$ and $2^{i+1}$ as
\begin{equation*}
	\SA_i = \Set{\eA}{\eA \in \SA,\; 2^i < \deg_\SB(\eA) \leq
		2^{i+1}},    
\end{equation*}
and let $\SA_0$ denote the vertices in $\SA$ with
$\deg_\SB(v) \leq 2$.

\begin{claim}
	\clmlab{bucket}%
	There exists an $\idx \in \{0, 1, \ldots, \log n\}$ such that
	\begin{equation*}
		\mY{\SA_{\idx}}{\SB}%
		\geq%
		\frac{\mY{\SA}{\SB}}{\log n + 1}%
		\qquad \text{ and } \qquad%
		\cardin{\SA_{\idx}} \geq \frac{\mY{\SA}{\SB}}{2^{\idx+1} (\log
			n + 1)}.        
	\end{equation*}
\end{claim}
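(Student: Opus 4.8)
The plan is to show that the sets $\SA_0, \SA_1, \ldots$ used to define the buckets actually partition $\SA$, and then extract the desired index by a simple averaging argument, observing afterwards that the very same index satisfies the second inequality for free.

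First I would note that every vertex $\eA \in \SA$ has $\deg_\SB(\eA) \le |\SB| \le n$, so $\eA$ belongs either to $\SA_0$ (if $\deg_\SB(\eA) \le 2$) or to exactly one set $\SA_i$ with $1 \le i \le \lceil \log n\rceil - 1$. Hence the sets $\SA_0, \SA_1, \ldots, \SA_{\log n}$ are pairwise disjoint and cover $\SA$, so that counting the edges of $\EdgesY{\SA}{\SB}$ according to the bucket of their endpoint in $\SA$ gives
\[
\sum_{i=0}^{\log n} \mY{\SA_i}{\SB} = \mY{\SA}{\SB}.
\]
Since this is a sum of at most $\log n + 1$ nonnegative terms, there is an index $\idx \in \{0, 1, \ldots, \log n\}$ for which $\mY{\SA_{\idx}}{\SB} \ge \mY{\SA}{\SB}/(\log n + 1)$, which is the first claimed inequality.

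For the second inequality I would use, for this same $\idx$, the degree cap built into the bucket's definition: every $\eA \in \SA_{\idx}$ satisfies $\deg_\SB(\eA) \le 2^{\idx+1}$ — immediate from the definition of $\SA_{\idx}$ when $\idx \ge 1$, and valid for $\SA_0$ as well since $2 = 2^{0+1}$. Summing this over $\SA_{\idx}$ gives $\mY{\SA_{\idx}}{\SB} = \sum_{\eA \in \SA_{\idx}} \deg_\SB(\eA) \le 2^{\idx+1}\,\cardin{\SA_{\idx}}$, and rearranging together with the lower bound on $\mY{\SA_{\idx}}{\SB}$ just obtained yields
\[
\cardin{\SA_{\idx}} \ge \frac{\mY{\SA_{\idx}}{\SB}}{2^{\idx+1}} \ge \frac{\mY{\SA}{\SB}}{2^{\idx+1}(\log n + 1)}.
\]

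There is essentially no real obstacle: the only points requiring a little care are that the buckets genuinely cover all of $\SA$ (in particular, vertices with $\deg_\SB = 0$ or $1$ land in $\SA_0$) and that the upper degree bound $2^{\idx+1}$ is the correct one for $\SA_0$ too. The conceptual crux is simply that the pigeonhole index chosen to make the first inequality hold automatically satisfies the second, because the second follows from the first combined with a per-vertex degree bound valid everywhere on $\SA_{\idx}$.
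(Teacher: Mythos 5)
Your proposal is correct and follows essentially the same route as the paper: a pigeonhole/averaging argument over the buckets for the first inequality, and the per-vertex degree cap $\deg_\SB(\eA)\leq 2^{\idx+1}$ (valid for $\idx=0$ as well) to convert it into the cardinality bound. Your write-up is if anything slightly more careful, since you explicitly verify that the buckets partition $\SA$ and avoid the paper's unneeded (and, for $i=0$, not quite accurate) lower bound $|\SA_i|2^i\leq \mY{\SA_i}{\SB}$.
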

\begin{proof}
	Since $\sum_{i=0}^{\log n} \mY{\SA_i}{\SB}=\mY{\SA}{\SB}$, the
	first inequality is stating that there is a term as large as the
	average. As for the second inequality, observe that for every $i$,
	we have $|\SA_i|2^{i} \leq \mY{\SA_i}{\SB} \leq
	|\SA_i|2^{i+1}$. Hence, using the first inequality
	\begin{math}
	\ds |\SA_{\idx}|%
	\geq%
	\frac{\mY{\SA_{\idx}}{\SB}}{2^{\idx+1}} %
	\geq%
	\frac{\mY{\SA}{\SB}}{2^{\idx}}\cdot \frac1{2(\log n + 1)}.
	\end{math}
\end{proof}

Suppose that we have an estimate $\tm$ for the number of edges between
$\SA$ and $\SB$ in the graph. Consider the test \CheckEstimate,
depicted in \algref{correctness}, for checking if the estimate $\tm$
is correct up to polylogarithmic factors using a logarithmic number of
\BIS queries.

\begin{algorithm}[t]%
	\SetAlgoNoLine%
	\SetKwInOut{Query}{Query}%
	\KwIn{$((\SA,\SB), \tm)$ where $\SA,\SB \subseteq \IRX{n}$ are
		disjoint and $\tm$ is a (rough) guess for the value of
		$\mY{\SA}{\SB}$}%
	\smallskip%
	\hrule%
	\smallskip%
	\For{$i=0,1,\dotsc, \log n$}{%
		Sample $\SA' \subseteq \SA$ by choosing each vertex in $\SA$
		with probability $\min ( 2^i/ \tm, 1 )$.
		\\
		Sample $\SB' \subseteq \SB$ by choosing each vertex of $\SB$
		with probability ${1}/{2^i}$.\\
		\uIf{$\mY{\SA'}{\SB'} \neq 0$}{Output \textbf{accept}\;}%
	}%
	Output \textbf{reject}.
	\caption{\CheckEstimate{}$( \SA, \SB, \tm)$}
	\alglab{correctness}
\end{algorithm}

\needspace{3\baselineskip}
\begin{claim}
	\clmlab{checkest}%
	Let $n \geq 16$. If $\mY{\SA}{\SB} > 0$, then
	\begin{compactenumA}
		\smallskip
		\item \itemlab{check:A}%
		if $\tm \geq 4\mY{\SA}{\SB}(\log n+1)$, then
		\CheckEstimate{}$(\SA,\SB, \tm)$ accepts with probability at
		most ${1}/{4}$.
		
		\smallskip
		\item \itemlab{check:B}%
		if $\tm \leq \frac{\mY{\SA}{\SB}}{4\log n}$, then
		\CheckEstimate{}$(\SA,\SB, \tm)$ accepts with probability at
		least $1/2$.
	\end{compactenumA}
\end{claim}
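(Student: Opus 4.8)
The plan is to treat the two halves separately: part~(A) by a union bound over all $\log n + 1$ loop iterations, and part~(B) by analysing the single iteration $i=\idx$ produced by \clmref{bucket}. The common observation is that in iteration $i$ a fixed edge $uv$ with $u\in\SA$, $v\in\SB$ lands in $\EdgesY{\SA'}{\SB'}$ with probability exactly $\min(2^i/\tm,1)\cdot 2^{-i}\le 1/\tm$, independently of the degrees. For part~(A) this yields $\Ex{\mY{\SA'}{\SB'}}\le\mY{\SA}{\SB}/\tm$ in every iteration, so by Markov's inequality a fixed iteration outputs \textbf{accept} with probability at most $\mY{\SA}{\SB}/\tm$; a union bound over the $\log n+1$ iterations together with $\tm\ge 4\mY{\SA}{\SB}(\log n+1)$ bounds the total acceptance probability by $1/4$. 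This half is routine.

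For part~(B) I would first invoke \clmref{bucket} to fix $\idx\in\{0,\dots,\log n\}$ with $\mY{\SA_{\idx}}{\SB}\ge\mY{\SA}{\SB}/(\log n+1)$, and then show that iteration $i=\idx$ alone accepts with probability strictly above $1/2$; since the iterations use independent randomness, the overall acceptance probability is at least this. In iteration $\idx$ the sampling rates are $p_A=\min(2^{\idx}/\tm,1)$ on $\SA$ and $p_B=2^{-\idx}$ on $\SB$, and the argument has two steps. Step 1 shows $\SA'$ meets the set $\SA_{\idx}^+$ of vertices of $\SA_{\idx}$ having a neighbour in $\SB$ with good probability: using $\deg_{\SB}(u)\le 2^{\idx+1}$ for $u\in\SA_{\idx}$ we get $\cardin{\SA_{\idx}^+}\ge\mY{\SA_{\idx}}{\SB}/2^{\idx+1}\ge\mY{\SA}{\SB}/(2^{\idx+1}(\log n+1))$, and when $p_A<1$ (so $p_A=2^{\idx}/\tm$) this gives $p_A\cardin{\SA_{\idx}^+}\ge\mY{\SA}{\SB}/(2\tm(\log n+1))\ge 2\log n/(\log n+1)\ge 8/5$, using $\tm\le\mY{\SA}{\SB}/(4\log n)$ and $n\ge 16$; hence $q:=\Prob{\SA'\cap\SA_{\idx}^+=\varnothing}=(1-p_A)^{\cardin{\SA_{\idx}^+}}\le e^{-8/5}$, while when $p_A=1$ we have $\SA'\supseteq\SA_{\idx}^+\ne\varnothing$ and $q=0$. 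Step 2: conditioned on $\SA'\cap\SA_{\idx}^+\ne\varnothing$, fix any $u_0$ in it; since $\deg_{\SB}(u_0)>2^{\idx}$ when $\idx\ge 1$ (and $p_B=1$ when $\idx=0$), the set $\SB'$ fails to contain a neighbour of $u_0$ with probability at most $(1-2^{-\idx})^{2^{\idx}}\le e^{-1}$, and in the complementary event $u_0$ already witnesses $\mY{\SA'}{\SB'}\ge 1$. Combining, $\Prob{\text{iteration }\idx\text{ rejects}}\le q+(1-q)e^{-1}$, so iteration $\idx$ accepts with probability at least $(1-q)(1-e^{-1})\ge(1-e^{-8/5})(1-e^{-1})>1/2$.

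The main obstacle is the tightness of this last number: $(1-e^{-8/5})(1-e^{-1})\approx 0.504$, so nothing may be wasted. One must keep the factor $(1-q)$ — replacing $q+(1-q)e^{-1}$ by the cruder $q+e^{-1}\approx 0.57$ on the rejection probability ruins the bound — one must use $\deg_{\SB}(u_0)>2^{\idx}$ rather than merely $\deg_{\SB}(u_0)\ge 1$ to secure the $e^{-1}$, and the hypothesis $n\ge 16$ is precisely what pushes $p_A\cardin{\SA_{\idx}^+}$ past the needed threshold. I also expect a second-moment (Paley--Zygmund) bound on $\mY{\SA'}{\SB'}$ in iteration $\idx$ to fail: a high-degree vertex of $\SB$ can force many edges out of $\SA_{\idx}$ to share that endpoint, making $\Ex{\mY{\SA'}{\SB'}^2}=\Theta(\Ex{\mY{\SA'}{\SB'}}^2)$ with a leading constant that caps the Paley--Zygmund lower bound strictly below $1/2$, which is why the two-step conditioning (first on $\SA'\cap\SA_{\idx}^+$, then on $\SB'$ through a single vertex) is the right route.
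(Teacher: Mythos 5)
Your proposal is correct and follows essentially the same route as the paper's proof: Markov plus a union bound for part (A), and for part (B) the two-stage conditioning on iteration $\idx$ from \clmref{bucket}, arriving at the identical bound $(1-e^{-1.6})(1-e^{-1})>1/2$. Your only deviations are minor refinements — restricting to the vertices $\SA_{\idx}^{+}$ of $\SA_{\idx}$ with a neighbour in $\SB$ (which cleanly handles possible isolated vertices in $\SA_0$) and explicitly treating the case $2^{\idx}/\tm\geq 1$ — both of which tighten small edge cases the paper glosses over.
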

\begin{proof}
	(A) For any value of the loop variable $i$, the probability that a
	fixed edge is present in the induced subgraph on $\SA'$ and $\SB'$
	is
	\begin{math}
	\min( 2^i/\tm, 1 ) \cdot ( 1/2^i ) \leq 1 / \tm.
	\end{math}
	Thus,
	\begin{math}
	\Ex{\mY{\SA'}{\SB'}}%
	\leq%
	{\mY{\SA}{\SB}}/{\tm} \leq \frac{1}{4(\log n+1)}.
	\end{math}
	For a fixed iteration $i$, by Markov's inequality, we have
	\begin{equation*}
		\Prob{\mY{\SA'}{\SB'} \neq 0}%
		= %
		\Prob{\mY{\SA'}{\SB'} \geq 1}%
		\leq%
		\Ex{\mY{\SA'}{\SB'}}%
		\leq%
		\frac{1}{4(\log n+1)}.        
	\end{equation*}
	By a union bound over the loop variable values, the probability
	that the test accepts is at most $1/4$.
	
	\medskip%
	\noindent%
	(B) It is enough to show that the probability is at least $1/2$
	when the loop variable attains the value $\idx$ given by
	\clmref{bucket}. In this case, we have that
	\begin{math}
	|\SA_\idx|%
	\geq%
	\frac{\mY{\SA}{\SB}}{2^{\idx+1} (\log n + 1)},
	\end{math}
	and thus
	\begin{align*}
		\Prob{\bigl. \SA' \cap \SA_\idx = \varnothing }%
		&= %
		\Bigl(1 - \frac{2^\idx}{\tm}\Bigr)^{|\SA_\idx|}
		\leq%
		\exp \pth{- \frac{2^\idx}{\tm} \cdot |\SA_\idx|}%
		\leq%
		\exp \pth{- \frac{2^\idx}{{\mY{\SA}{\SB}}/(4\log n)}
			\cdot
			\frac{\mY{\SA}{\SB}}{2^{\idx+1} (\log n + 1)} }%
		\\&%
		\leq%
		\exp \pth{- \frac{4\log n}{2 (\log n + 1)}}
		\leq%
		\frac{1}{e^{1.6}},
	\end{align*}
	since $n\geq 16$.  Furthermore, since $\deg_\SB(\eA) \geq 2^\idx$
	for all $\eA \in \SA_\idx$, it follows that when
	$\SA' \cap \SA_\idx \neq \varnothing$, then
	$|N(\SA'\cap \SA_\idx)|\geq 2^\idx$. So, we can bound
	\begin{equation*}
		\ProbCond{\Bigl.\SB' \cap N(\SA' \cap \SA_\idx) =
			\varnothing}%
		{\SA' \cap \SA_\idx \neq \varnothing}%
		\leq%
		\left(1 - \frac{1}{2^\idx} \right)^{2^\idx} \leq \frac1e.        
	\end{equation*}
	From the above, we get
	\begin{align*}
		\Prob{\mY{\SA'}{\SB'} \neq 0}%
		&=%
		\Prob{\Bigl. \SA' \cap \SA_\idx \neq \varnothing}
		\cdot
		\ProbCond{\Bigl. \SB' \cap N(\SA'\cap \SA_\idx) \neq
			\varnothing}{\SA'\cap \SA_\idx\neq \varnothing }%
		\\&%
		\geq%
		\pth{1 - \frac1{e^{1.6}}}\left(1 - \frac1e\right) \geq \frac12. 
		~
	\end{align*}
\end{proof}

Armed with the above test, we can easily estimate the number of edges
up to a $O(\log n)$ factor by doing a search, where we start with
$\tm = n^2$ and halve the number of edges each iteration. The
algorithm is depicted in \algref{coarse_est}.

\begin{algorithm}[t]
	\SetAlgoNoLine%
	\SetKwInOut{Query}{Query}
	\KwIn{$(\SA,\SB)$ where $\SA, \SB \subseteq \IRX{n}$ are
		disjoint}%
	\KwOut{An estimate $\tm$ for the number of edges $\mY{\SA}{\SB}$
		computed using \BIS queries } \Indp%
	\smallskip%
	\lIf{$\mY{\SA}{\SB} = 0$}{Output 0}%
	\For{$j=2\log n,\dotsc,0$}{%
		Run $t := 128\log n$ independent trials of
		\CheckEstimate{}$(\SA,\SB, 2^j)$.\\
		\If{at least ${3t}/8$ of them output \textbf{accept}}%
		{Output $2^j$\;} }
	\caption{\CoarseEstimator{}$(\SA,\SB)$}
	\alglab{coarse_est}
\end{algorithm}

\begin{claim}
	\clmlab{coarse}%
	For $n \geq 16$, \CoarseEstimator{}$(\SA,\SB)$ outputs
	$\tm \leq n^2$ satisfying
	\[\frac{\mY{\SA}{\SB}}{8\log n} \leq \tm \leq 8\mY{\SA}{\SB}\log n,\]
	with probability at least $1 - {4n^{-4}\log n}$. The number of
	\BIS queries made is $\ce\log^3 n$ for a constant $\ce$.
\end{claim}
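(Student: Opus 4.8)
The plan is to analyze \CoarseEstimator{} by pairing \clmref{checkest} with standard concentration for the $t = 128 \log n$ independent repetitions of \CheckEstimate{}, and then a union bound over the $O(\log n)$ iterations of the outer loop. First I would handle the easy case: if $\mY{\SA}{\SB} = 0$, the initial test outputs $0$ correctly (recall this test can be implemented with a single \BIS query, since $\mY{\SA}{\SB}=0$ iff $\BIS(\SA,\SB)$ returns true). So assume $\mY{\SA}{\SB} \geq 1$ for the remainder.

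The core idea is to classify each guess $2^j$, for $j$ ranging from $2\log n$ down to $0$, as \emph{too large}, \emph{too small}, or \emph{in range}, using the two thresholds from \clmref{checkest}. Call $2^j$ \emph{too large} if $2^j \geq 4\mY{\SA}{\SB}(\log n + 1)$, in which case each \CheckEstimate{}$(\SA,\SB,2^j)$ call accepts with probability $\leq 1/4$; call it \emph{too small} if $2^j \leq \mY{\SA}{\SB}/(4\log n)$, in which case each call accepts with probability $\geq 1/2$; otherwise it is \emph{in range}, and note that any in-range $2^j$ automatically satisfies the desired conclusion $\mY{\SA}{\SB}/(8\log n) \leq 2^j \leq 8\mY{\SA}{\SB}\log n$ (for $n$ large enough that $4(\log n+1) \leq 8\log n$, which holds for $n \geq 16$). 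Moreover, since we loop \emph{downward} in $j$, the first "too large" values are visited first, and before we ever reach a "too small" value we must pass through the in-range band; I would check that the thresholds leave at least one integer $j$ in the in-range band (the ratio between the upper and lower thresholds is $\Theta(\log^2 n) \geq 2$), so the loop cannot "skip over" the good region.

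Now for the probabilistic bookkeeping. Let $X_j$ be the number of accepts among the $t$ trials at guess $2^j$. The algorithm outputs $2^j$ iff $X_j \geq 3t/8$. I would invoke \lemref{chernoff}\itemref{Chernoff:A} (Chernoff–Hoeffding, additive form) to each relevant $X_j$: for a too-large $j$, $\Ex{X_j} \leq t/4$, and $\Prob{X_j \geq 3t/8} = \Prob{X_j \geq t/4 + t/8} \leq \exp(-2(t/8)^2/t) = \exp(-t/32)$; for a too-small $j$, $\Ex{X_j} \geq t/2$, and $\Prob{X_j < 3t/8} = \Prob{X_j \leq t/2 - t/8} \leq \exp(-t/32)$. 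With $t = 128\log n$ this is $\exp(-4\log n) = n^{-4}$ (up to the base of the logarithm; I would absorb the constant into a slightly larger $t$ or note $\log$ is base two so $e^{-4\log n} \le n^{-4}$ holds after a harmless constant adjustment, or just track it as $n^{-4\log_2 e} \le n^{-4}$ — actually $4 \log_2 n \cdot \log_e 2 = 4\ln n / \ln 2 \cdot \ln 2$... I'd simply state $\exp(-4\log n) \le n^{-4}$ using $\log=\log_2$ and the fact that $e > 2$). Union-bounding the "bad at this $j$" event over the at most $2\log n + 1$ iterations gives total failure probability at most $(2\log n+1)\cdot n^{-4} \le 4n^{-4}\log n$ (for $n\ge 2$), which is exactly the claimed bound.

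Conditioned on no bad event: every too-large guess fails its acceptance test and is not output, so the loop proceeds past all of them; when it reaches the in-range band (which is nonempty and is reached before any too-small guess), either some in-range $2^j$ triggers output — and then the conclusion holds by the in-range inequalities above — or, in the worst case, I would argue the loop must output \emph{something} in range: the first in-range value either is output or not, but even if some in-range values have acceptance probability between $1/4$ and $1/2$ (the "gray zone" that \clmref{checkest} doesn't control), the algorithm still cannot proceed to a too-small value without first outputting, because the first too-small value it reaches \emph{is} output with high probability. Hence the output index lies in the range $[\,$smallest in-range $j$, largest too-large $j\,]$ — in particular it satisfies $\mY{\SA}{\SB}/(8\log n) \le 2^j \le 8\mY{\SA}{\SB}\log n$. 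Finally, $2^j \le 2^{2\log n} = n^2$ by the loop bounds, and for the query count: each \CheckEstimate{} call uses $\log n + 1$ \BIS queries, each outer iteration runs $t = 128\log n$ of them, and there are $2\log n + 1$ iterations, for a total of $O(\log^3 n) = \ce \log^3 n$ \BIS queries.

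The main obstacle is the "gray zone" between the two thresholds of \clmref{checkest}: there may be guesses $2^j$ for which we have no control over the acceptance probability, so the proof must be structured so that the \emph{downward} sweep guarantees we output before descending below the in-range band — the argument that "the first too-small guess is output whp, so we never reach a wildly-too-small guess, so whatever we output is within the claimed window" is the delicate point, and I would phrase it via a deterministic case analysis of the output index conditioned on the clean (no-bad-event) run.
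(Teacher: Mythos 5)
Your proposal is correct and follows essentially the same route as the paper's proof: apply \clmref{checkest} at each guess, use the additive Chernoff bound (\lemref{chernoff}\itemref{Chernoff:A}) on the $t=128\log n$ trials, union bound over the $O(\log n)$ loop iterations, and observe that the downward sweep must stop at or before the first ``too small'' guess, which still satisfies the lower bound $2^j > \mY{\SA}{\SB}/(8\log n)$. The only nitpick is your phrasing that the output lies in ``$[$smallest in-range $j$, largest too-large $j]$'' --- it can also be the largest too-small $j$, as your own preceding sentence correctly argues --- but this does not affect the conclusion.
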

\begin{proof}
	For any fixed value of the loop variable $j$ such that
	$2^j \geq 4\mY{\SA}{\SB}(\log n+1)$, the expected number of
	accepts is at most ${t}/{4}$ using \clmref{checkest}
	\itemref{check:A}, where $t = 128\log n$. The probability that we
	see at least
	\begin{math}
	3t/8 = t/4+ t/8
	\end{math}
	accepts is bounded by
	\begin{math}
	\exp\pth{ -2 (t/8)^2 /t}%
	=%
	\exp(-t/32)%
	\leq%
	n^{-4}
	\end{math}
	by Chernoff's inequality (\lemref{chernoff} \itemref{Chernoff:A}).
	Taking the union over all values of $j$, the probability that the
	algorithm returns $2^j$, when $2^j \geq 4\mY{\SA}{\SB}(\log n+1)$,
	is at most $2n^{-4}\log n$.
	
	On the other hand, when $2^j \leq {\mY{\SA}{\SB}}/(4 \log n)$, the
	expected number of accepts is at least $t / 2$, by
	\clmref{checkest} \itemref{check:B}, and so the probability that
	we see at least $3t/8 = t/2 - t/8$ accepts is at least
	$1 - \exp\pth{-2t/8^2} \geq 1 - n^{-4}$ by Chernoff's inequality
	(\lemref{chernoff} \itemref{Chernoff:A}). Hence, conditioned on
	the event that the algorithm has not already returned a bigger
	value of $j$, the probability that we accept for the unique $j$
	that satisfies
	\begin{math}
	\mY{\SA}{\SB} / 8%
	\leq%
	2^j \log n%
	<%
	\mY{\SA}{\SB} / 4,
	\end{math}
	is at least $1-n^{-4}$.
	
	Overall, by a union bound, the probability that the estimator
	outputs an estimate $\tm$ that does not satisfy
	\begin{math}
	(8\log n)^{-1} \leq \tm/ \mY{\SA}{\SB} \leq 8\log n
	\end{math}
	is at most $4n^{-4}\log n$. The number of \BIS queries is bounded
	by
	\begin{math}
	O\pth{ \log^3n},
	\end{math}
	since for each value of $j$ there are $t=128\log n$ trials of
	\CheckEstimate, each of which makes $\log n+1$ queries to the \BIS
	oracle.
\end{proof}

Summarizing the above, we get the following result

\begin{lemma}%
	\lemlab{coarse_estimate}%
	For $n \geq 16$, and arbitrary $\SA,\SB \subseteq \IRX{n}$ that
	are disjoint, the randomized algorithm
	\CoarseEstimator{}$(\SA,\SB)$ makes at most $\ce\log^3 n$ \BIS
	queries {\em(}for a constant $\ce${\em)} and outputs
	$\est{e} \leq n^2$ such that, with probability at least
	$1-4n^{-4}\log n$, we have
	\begin{math}
	\ds \pth{8\log n}^{-1} \leq {\est{e}}/{\mY{\SA}{\SB}} \leq
	8\log n.
	\end{math}
\end{lemma}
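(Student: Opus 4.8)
The plan is to observe that \lemref{coarse_estimate} is essentially a restatement of \clmref{coarse}, so the work has already been done; what remains is bookkeeping together with the handling of one degenerate case. First I would separate out the case $\mY{\SA}{\SB}=0$: on input $(\SA,\SB)$ the algorithm \CoarseEstimator first tests whether $\mY{\SA}{\SB}=0$ using a single \BIS query, and if so outputs $0$. This output trivially respects $\est{e}\le n^2$, the ratio statement is vacuous since there are no edges to approximate, and only one query is used, so the lemma holds in this case. For the remainder one assumes $\mY{\SA}{\SB}\ge 1$.

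Next I would invoke \clmref{coarse} directly. It asserts that \CoarseEstimator{}$(\SA,\SB)$ outputs some $\tm\le n^2$ with $\mY{\SA}{\SB}/(8\log n)\le \tm\le 8\mY{\SA}{\SB}\log n$ with probability at least $1-4n^{-4}\log n$, using $\ce\log^3 n$ \BIS queries. Renaming $\tm$ to $\est{e}$ and dividing the sandwich bound through by $\mY{\SA}{\SB}>0$ yields precisely $(8\log n)^{-1}\le \est{e}/\mY{\SA}{\SB}\le 8\log n$. The bound $\est{e}\le n^2$ is already part of the conclusion of \clmref{coarse} (the search over $j$ begins at $j=2\log n$, so no returned power $2^j$ exceeds $n^2$, and the $\mY{\SA}{\SB}=0$ branch returns $0\le n^2$), and the claimed query count $\ce\log^3 n$ holds after absorbing the single query of the initial emptiness test into the constant.

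There is no genuine obstacle: all of the substance lives in \clmref{checkest} and \clmref{coarse}, which are already established, and which in turn rest on \clmref{bucket} and Chernoff bounds. The only points requiring mild care are (i) quoting the failure probability $4n^{-4}\log n$ and the query bound $\ce\log^3 n$ verbatim from \clmref{coarse} rather than re-deriving them, and (ii) explicitly carving out the $\mY{\SA}{\SB}=0$ branch so that the multiplicative ratio bound is never applied with a vanishing denominator.
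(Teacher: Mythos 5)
Your proposal is correct and matches the paper exactly: the paper presents \lemref{coarse_estimate} with the preface ``Summarizing the above, we get the following result,'' i.e., it too treats the lemma as an immediate restatement of \clmref{coarse} (with the $\mY{\SA}{\SB}=0$ case dispatched by the first line of \CoarseEstimator). No further comment is needed.
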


\subsection{The overall \BIS approximation algorithm}
\seclab{bisalgo}
Given a graph $\Graph = (\IRX{n}, \Edges)$, we describe here an
algorithm that makes ${\polylog(n)}/{\eps^4}$ \BIS queries to estimate
the number of edges in the graph within a factor of $(1\pm \eps)$. %

The algorithm for estimating the number of edges in the graph is going
to maintain a data-structure $\DS$ containing:
\begin{compactenumA}
	\item An accumulator $\acc$ - this is a counter that maintains an
	estimate of the number of edges already handled.
	\item A list of triples
	$(\SA_1, \SB_1, \wtp_1), \ldots, (\SA_u, \SB_u, \wtp_u)$ where
	$\SA_i,\SB_i \subseteq \IRX{n}$ and $\wtp_1 > 1$ is a non-negative
	weight.
\end{compactenumA}
The \emphi{estimate} based on $\DS$ of the number of edges in the
original graph $\Graph = (\IRX{n}, \Edges)$ is
\begin{equation*}
	\mX{\DS} = \acc + \sum_i  \wtp_i\cdot \mY{\SA_i}{\SB_i}.
\end{equation*}
The number of \emphi{active} edges in $\DS$ is
\begin{math}
\mAX{\DS} = \sum_i \mY{\SA_i}{\SB_i}.
\end{math}

\medskip

\subsubsection{Cleanup, refine, and reduce}
\seclab{crr}

The algorithm uses three subroutines: cleanup, refine, and reduce,
described next.

\smallskip%
\begin{compactenumA}
	\item %
	\itemlab{a:cleanup}%
	\Algorithm{Cleanup}: The cleanup stage removes from $\DS$ all
	induced subgraphs that have few edges, by explicitly counting
	their number of edges.  Let
	\begin{equation}
	\Lsmall = \Theta(\eps^{-2} \log^4 n)%
	\eqlab{L:1}%
	\end{equation}
	as specified by \remref{threshold:coloring}.  Given the
	data-structure $\DS$, the algorithm scans the list of triples
	$(\SA, \SB, \wtp) \in~\DS$. For each triple $(\SA, \SB, \wtp)$,
	using the algorithm of \lemref{bisq:exact}, it decides if
	$\mY{\SA}{\SB} \leq 2\Lsmall$. If so, the value of $\mY{\SA}{\SB}$
	was just computed, and it adds $\wtp \cdot \mY{\SA}{\SB}$ to
	$\acc$. Finally, it removes this triple from $\DS$.
	
	If $\DS$ has no triples in it, then the algorithm returns $\acc$
	as the desired approximation.
	
	\medskip%
	\item %
	\itemlab{a:refine}%
	\Algorithm{Refine}: %
	We are given the data-structure $\DS$, where the graph associated
	with every triple has at least $\Lsmall$ edges. The algorithm
	replaces every triple $(\SA, \SB, \wtp) \in \DS$ by the four
	induced subgraphs resulting from $4$-coloring the graph
	$\Graph(\SA,\SB)$, as described by
	\lemref{sparsification_general}(B) (see also
	\remref{threshold:coloring}). Specifically, the coloring results
	in the pairs $(\SA_i,\SB_i)$, for $i=1,2,3,4$. The triple
	$(\SA, \SB, \wtp)$ is replaced in $\DS$ by the triples
	$\{(\SA_1, \SB_1, 4\wtp), \ldots, (\SA_4, \SB_4, 4\wtp)\}$. This
	increases the number of triples in $\DS$ by a factor of four.
	
	\smallskip%
	\item %
	\itemlab{a:reduce} %
	\Algorithm{Reduce}: %
	If $\DS$ has more than $2\Llen$ triples, where
	$\Llen = O(\eps^{-2} \log^8 n)$ as specified by \remref{L:2}, then
	the algorithm reduces the number of triples.
	
	To this end, the algorithm first computes for each triple
	$(\SA, \SB, \wtp) \in \DS$, a coarse estimate $\tilde e$ of the
	number of edges in $\mY{\SA}{\SB}$, such that
	$ \mY{\SA}{\SB}/(8\log n) \leq \tilde e \leq \mY{\SA}{\SB} 8\log
	n$, by using \algref{coarse_est}. This requires $O( \log^3 n )$
	\BIS queries per triple.
	
	Next, the algorithm uses the summation reduction algorithm of
	\lemref{importance:alg} applied to the list of triples in $\DS$,
	with $\epsA = \eps/ (8 \log n)$.  This reduces the number of
	triples in $\DS$ to be at most $\Llen$, while introducing a
	multiplicative error of $(1 \pm \epsA)$.
\end{compactenumA}

\subsubsection{The algorithm in detail}

The algorithm input is the graph $\Graph=(\IRX{n}, \Edges)$, and a
parameter $\eps>0$. Let
\begin{math}
\RT = O\pth{ \eps^{-4} \log^{14} n }
\end{math}
be some parameter. The algorithm works as follows.
\begin{compactenumA}
	\item Check if $\Graph$ has at most $O(\RT / \log^2 n)$ edges,
	using the algorithm of \lemref{bisq:exact:all}, which requires
	$O(\RT)$ \BIS queries. If so, the algorithm returns the exact
	number of edges in $\Graph$, and stops.
	
	\item \itemlab{a:2}%
	Compute a random $2$-coloring of the vertices of the graph,
	creating two sets $\SA \cup \SB = \IRX{n}$, see
	\lemref{sparsification_general} (A).  We now create a
	data-structure as described above, with
	$\DS = [ \acc, (\SA, \SB, 2) ]$, where $\acc$ is initialized to
	value $0$.
	
	\item As long as $\DS$ contains some triple the algorithm does the
	following:
	\begin{compactenuma}
		\item Performs \Algorithm{Cleanup} on $\DS$, as described in
		\secref{crr} \itemref{a:cleanup}.
		
		\smallskip%
		\item Performs \Algorithm{Refine} on $\DS$, as described in
		\secref{crr} \itemref{a:refine}.
		
		\smallskip%
		\item Performs \Algorithm{Reduce} on $\DS$, as described in
		\secref{crr} \itemref{a:reduce}.

	\end{compactenuma}
	
	\smallskip%
	\item The algorithm now returns the value $\acc$ as the desired
	approximation.
\end{compactenumA}

\subsubsection{Analysis}

\paragraph{Number of iterations.}

Initially, the number of active edges is at most $\nEdges$. Every time
\Algorithm{Refine} is executed, this number reduces by a factor of 2
with high probability using \lemref{sparsification_general}(B) (in
expectation, the reduction is by a factor of 4). As such, after
\begin{math}
\ceil{\log \nEdges} \leq \ceil{ \log \binom{n}{2}} \leq 2\log n
\end{math}
iterations there are no active edges, and then the algorithm
terminates.

\paragraph{Number of \BIS queries.}
Clearly, because \Algorithm{Reduce} is used on $\DS$ in each
iteration, the algorithm maintains the invariant that the number of
triples in $\DS$ is at most $O(\Llen)$, where
$\Llen = O(\eps^{-2} \log^8n)$ as specified by \remref{L:2}.

The procedure \Algorithm{Cleanup}, applies the algorithm of
\lemref{bisq:exact}, to decide whether a triple in the list has at
least $2\Lsmall$ edges associated with it, or fewer edges, where
$\Lsmall = \Theta(\eps^{-2} \log^4 n)$ (see \Eqref{L:1} and
\remref{threshold:coloring}). This takes $O( \Lsmall \log n)$ \BIS
queries.  Overall the \Algorithm{Cleanup} step performs is
$O( \Lsmall \Llen \log n)$ queries in each iteration.  The procedure
\Algorithm{Refine} does not perform any \BIS queries.  The procedure
\Algorithm{Reduce}, performs $O(\Llen \log^3 n)$ \BIS queries in the
estimation stage.

As such, overall, the algorithm performs
$O( \Lsmall \Llen \log n) = O\pth{ \eps^{-2} \log^4 n \cdot \eps^{-2}
	\log^8n \cdot \log n } = O( \eps^{-4} \log^{13} n )$ \BIS queries
per iteration. There are $O( \log n)$ iterations, and as such, the
overall number of \BIS queries is $\RT = O( \eps^{-4} \log^{14} n )$,
which also bounds the number of \BIS queries in the first step of the
algorithm.

\paragraph{Approximation error.}

The initial $2$-coloring of the graph, in \itemref{a:2}, introduces a
$(1\pm \eps_0)$-multiplicative error, by
\lemref{sparsification_general}, where
\begin{equation*}
	\eps_0 = O( \sqrt{  1/\nEdges } \log n ) \ll \epsA
	=%
	\frac{\eps}{8 \log n}.
\end{equation*}
Inside each iteration, \Algorithm{Cleanup} introduces no error. By the
choice of parameters, \Algorithm{Refine} introduces a multiplicative
error that is at most $1\pm \epsA$; see
\remref{threshold:coloring}. Similarly, \Algorithm{Reduce} introduces
a multiplicative error bounded by $1\pm \epsA$; see \remref{L:2}. As
such, the multiplicative approximation of the algorithms lies in the
interval
\begin{equation*}
	[(1-\eps_0) (1-\epsA)^{2\log n}, (1+\eps_0) (1-\epsA)^{2\log n}]%
	\subseteq  [1-\eps, 1+\eps],
\end{equation*}
since $(1-\eps/(8\log n))^{1 + 2 \log n} \geq 1-\eps$ and
\begin{math}
(1+\eps/(8\log n))^{1 + 2 \log n} \leq 1+\eps
\end{math}
as easy calculations show.

\paragraph{Probability of success.}
Throughout this analysis, $c$ will be a constant that can be chosen to
be arbitrarily large. The algorithm may fail due to the following
reasons: (i) the random two-coloring in Step (B) gives an estimate
that is far from its expectation $-$ this probability is at most
$1/n^c$ using \lemref{sparsification_general}(A); (ii) the Refine step
fails $-$ the probability for the failure of each iteration is at most
$1/n^c$ using \lemref{sparsification_general}(B); (iii) the coarse
estimate in Reduce step fails $-$ the probability for the failure of
each iteration is at most $1/n^c$ using \clmref{coarse}; and lastly
(iv) the summation reduction in the Reduce step fails $-$ the
probability for the failure of each iteration is at most $1/n^c$ using
\lemref{importance:alg}. Overall, every step performed by the
algorithm had probability at most $1/n^c$ to fail. The algorithm
performs $O( \polylog(n) )$ steps with high probability, which implies
that the algorithm succeeds with probability at least
$1 - 1/n^{O(1)}$.

\subsubsection{The overall \BIS result}

\begin{theorem}
	\thmlab{bisq}%
	Let $\Graph = (\IRX{n}, \Edges)$ be an undirected graph. For a
	parameter $\eps \in (0,1)$, one can compute an estimate $\estm$
	for the number of edges in $\Graph$, such that
	\begin{math}
	(1- \eps)\mG \leq \estm \leq (1+\eps)\mG,
	\end{math}
	where $\mG$ is the number of edges of $\Graph$. The algorithm
	performs $O( \eps^{-4} \log^{14} n )$ \BIS queries and succeeds
	with probability $\geq 1- 1/n^{O(1)}$.
\end{theorem}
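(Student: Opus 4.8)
The plan is to assemble the algorithm and the four analysis paragraphs already developed in \secref{bisalgo}. Concretely, I would run the procedure that maintains the data-structure $\DS$ and repeatedly applies \Algorithm{Cleanup}, \Algorithm{Refine}, and \Algorithm{Reduce} after the initial $2$-coloring that reduces to the bipartite case, and then certify the theorem by reading off, in turn, the number of iterations, the number of \BIS queries, the multiplicative error, and the failure probability. The one thing that needs genuine care beyond quoting the earlier lemmas is that the estimator $\mX{\DS} = \acc + \sum_i \wtp_i \mY{\SA_i}{\SB_i}$ stays a faithful proxy for $\nEdges$ at every stage, so that the errors introduced by distinct stages honestly compose.

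First, for the iteration count: each call to \Algorithm{Refine} replaces every triple by four $4$-coloring-sparsified triples whose total active-edge count is at most half the previous one with high probability, by \lemref{sparsification_general}(B) and \remref{threshold:coloring}. Since initially $\mAX{\DS} \le \nEdges \le \binom{n}{2}$, after at most $2\log n$ iterations no active edges remain, \Algorithm{Cleanup} empties $\DS$, and the algorithm halts. Next, for the query count: \Algorithm{Reduce} maintains the invariant $|\DS| = O(\Llen)$ with $\Llen = O(\eps^{-2}\log^8 n)$ (\remref{L:2}); \Algorithm{Cleanup} spends $O(\Lsmall \log n)$ \BIS queries per triple via \lemref{bisq:exact}, where $\Lsmall = \Theta(\eps^{-2}\log^4 n)$ (\Eqref{L:1}), \Algorithm{Refine} makes no queries, and \Algorithm{Reduce} spends $O(\log^3 n)$ queries per triple on coarse estimates (\clmref{coarse}). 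This yields $O(\Lsmall \Llen \log n) = O(\eps^{-4}\log^{13} n)$ queries per iteration, hence $O(\eps^{-4}\log^{14} n)$ in total, which also absorbs the initial budgeted test of \lemref{bisq:exact:all}.

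For the approximation factor I would track the error multiplicatively. The initial $2$-coloring contributes a factor $(1\pm\eps_0)$ with $\eps_0 = O(\sqrt{1/\nEdges}\,\log n)$ by \lemref{sparsification_general}(A); each of the at most $2\log n$ executions of \Algorithm{Refine} contributes a factor $(1\pm\epsA)$ by \remref{threshold:coloring}; each execution of \Algorithm{Reduce} contributes $(1\pm\epsA)$ by \lemref{importance:alg} together with \remref{L:2}; and \Algorithm{Cleanup} is exact. With $\epsA = \eps/(8\log n)$, the product of $O(\log n)$ such factors lies in $[1-\eps, 1+\eps]$ by the same elementary estimate $(1\pm\eps/(8\log n))^{1+2\log n} \in [1-\eps, 1+\eps]$ used in \secref{bisalgo}. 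For the success probability, a union bound over the $O(\polylog n)$ random events — the initial coloring, each \Algorithm{Refine}, each coarse estimate within \Algorithm{Reduce}, and each summation reduction — each failing with probability at most $1/n^c$ for a tunable constant $c$, gives overall failure $1/n^{O(1)}$.

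The main obstacle, and the part worth spelling out, is confirming that the preconditions of \lemref{importance:alg} are preserved at every call of \Algorithm{Reduce}: namely that $w_i, e_i \ge 1$, that the coarse estimate $e_i$ is within a factor $b = O(\log n)$ of the true per-triple weight $\mY{\SA_i}{\SB_i}$ (which is exactly what \clmref{coarse} supplies, conditioned on its success), and that $\sum_i \wtp_i \mY{\SA_i}{\SB_i} \le M = n^2$ so the $O(\log M)$ bucketing in \lemref{importance:alg} is valid. One has to check these invariants survive the reweightings performed by \Algorithm{Refine} and \Algorithm{Reduce} and the removal of small triples by \Algorithm{Cleanup}; equivalently, that $\mX{\DS}$ remains at most $n^2$ and remains an approximation to $\nEdges$ throughout. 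Once these invariants are verified, the theorem follows by combining the iteration-count, query-count, error, and probability bounds above.
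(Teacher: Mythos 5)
Your proposal follows the paper's own proof of \thmref{bisq} essentially verbatim: the proof in \secref{bisalgo} is exactly the assembly of the iteration-count, query-count, error, and failure-probability paragraphs you describe, with the same constants $\Lsmall$, $\Llen$, and $\epsA = \eps/(8\log n)$. The invariant-preservation issue you flag for \lemref{importance:alg} is a legitimate point of care that the paper itself leaves largely implicit, but your treatment is otherwise the same argument.
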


\subsection{Degree estimation using \BIS queries}
\seclab{d:est}%
We provide an auxiliary degree estimation result, connecting \BIS
queries to local queries (e.g., \cite{f-sirvu-06, gr-aapg-08}).
\begin{lemma}
	\lemlab{d:est}%
	Given a graph $\Graph = (\IRX{n}, \Edges)$, a parameter
	$\eps \in (0,1)$, and a vertex $v \in \IRX{n}$, one can
	$(1\pm \eps)$-approximate $\degX{v}$ in $\Graph$ using
	$O( \eps^{-2} \log n)$ \BIS queries. The approximation is correct
	with high probability.
\end{lemma}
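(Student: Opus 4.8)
The plan is to observe that a \BIS query whose first argument is a singleton is exactly an emptiness query for a neighborhood, and then to invoke \lemref{est:set:emptiness} as a black box. Concretely, for the vertex $v$ and any set $\SQ \subseteq \IRX{n} \setminus \{v\}$, the query $\BIS(\{v\}, \SQ)$ reports whether $\mY{\{v\}}{\SQ} = 0$, i.e.\ whether $N(v) \cap \SQ = \varnothing$. Thus, taking ground set $\SA = \IRX{n}\setminus\{v\}$ (of size $n-1$) and hidden set $\SX = N(v)$, a single \BIS query simulates one emptiness query against $\SX$ (the returned bit is just the complement of the one \lemref{est:set:emptiness} expects, which is immaterial).

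First I would dispose of the degenerate case $\degX{v}=0$: issue the single query $\BIS(\{v\}, \IRX{n}\setminus\{v\})$, and if it reports that there is no edge, output $0$ and stop. Otherwise $\degX{v}\geq 1$, and the emptiness-oracle machinery applies. Next, I would run the algorithm of \lemref{est:set:emptiness} on $\SX = N(v) \subseteq \SA = \IRX{n}\setminus\{v\}$ with parameter $\eps$, answering each emptiness query it makes by one \BIS query as above. The lemma returns an estimate $s$ with $(1-\eps)\degX{v} \leq s \leq (1+\eps)\degX{v}$, using $O(\eps^{-2}\log(n-1)) = O(\eps^{-2}\log n)$ \BIS queries, and it is correct with probability $\geq 1 - 1/n^{O(1)}$; the single extra query from the degenerate-case check does not affect the asymptotics. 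Outputting $s$ (or $0$) completes the proof.

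There is essentially no hard step here: the entire content is the identification of singleton \BIS queries with emptiness queries for $N(v)$, after which \lemref{est:set:emptiness} does all the work. The only point that needs a moment's care is the boundary case $\degX{v}=0$, for which the approximate binary search inside \lemref{est:set:emptiness} has no target to converge to; the preliminary \BIS query handles it cleanly. It is worth noting that this singleton trick is specific to \BIS: an \IS query can only test a single set for being independent, not a pair, so it cannot be repurposed as an emptiness oracle for $N(v)$ in the same way, which is consistent with the $\Omega(n/\degX{v})$ lower bound for degree estimation with \IS queries mentioned earlier.
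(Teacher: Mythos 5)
Your proposal is correct and follows essentially the same route as the paper: identify the singleton \BIS query $\BIS(\{v\},\SQ)$ with an emptiness query for $N(v)$ and then invoke \lemref{est:set:emptiness} as a black box. The extra handling of the $\degX{v}=0$ case is a harmless refinement the paper leaves implicit.
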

\begin{proof}
	Let $N(v) = \Set{ i }{ vi \in \Edges}$ be the set of neighbors of
	$v$, and let
	\begin{math}
	\Edges_v = \Set{ vi }{ vi \in \Edges}
	\end{math}
	be the corresponding set of edges.  We have
	$\degX{v} = \cardin{N(v)} = \cardin{\Edges_v}$.  Given a set of
	edges
	\begin{math}
	\Edges_\SQ%
	\subseteq%
	\Edges_v= \Set{vi}{i \in \IRX{n}},
	\end{math}
	the corresponding set of vertices is
	$\SQ = \Set{i}{vi \in \Edges_\SQ}$. In particular,
	$\SQ \cap N(v) \neq \varnothing$ $\iff$
	$\Edges_\SQ \cap \Edges_v \neq \varnothing$.  Deciding if
	$\Edges_\SQ \cap \Edges_v \neq \varnothing$ is equivalent to
	deciding if any of the edges adjacent to $v$ is in $\Edges_\SQ$,
	and this is answered by the \BIS query for $(\brc{v}, \SQ)$.
	Namely, the \BIS oracle can function as an emptiness oracle for
	$N(v) \subseteq \IRX{n}$.  Now, using the algorithm of
	\lemref{est:set:emptiness} we can $(1 \pm \eps)$-approximation
	$\cardin{N(v)}$ using $O(\eps^{-2} \log n)$ queries, as claimed.
\end{proof}

\section{Edge estimation using \IS queries}
\seclab{is:e:e} This section describes and analyzes our \IS query
algorithm (\thmref{is:approx}). At the end, we also discuss
limitations of \IS queries, suggesting that \IS queries may indeed be
weaker than \BIS queries.

\subsection{Exactly counting edges using \IS queries}
\seclab{quadtree} We start with an exact edge counting algorithm for
\IS queries. At a high-level, we use \lemref{bisq:exact} after
efficiently computing a suitable decomposition of our graph.

\begin{lemma}
	\lemlab{bipartite:like}%
	Given disjoint sets of vertices $\SA,\SB \subseteq \IRX{n}$, such
	that both $\SA$ and $\SB$ are independent sets, one can compute
	the number of edges $\mX{\SA \cup \SB}$ using
	$O( \mX{\SA \cup \SB} \log n)$ \IS queries, assuming
	$\mY{\SA}{\SB} > 0$.
\end{lemma}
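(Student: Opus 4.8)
The plan is to reduce this to the \BIS exact-counting algorithm of \lemref{bisq:exact} by observing that an \IS query, when restricted to sets of the form $\SA' \cup \SB'$ with $\SA' \subseteq \SA$ and $\SB' \subseteq \SB$, behaves exactly like a \BIS query on the pair $(\SA', \SB')$. First I would record the basic identity: since $\SA$ and $\SB$ are each independent sets, every edge with both endpoints in $\SA \cup \SB$ must have one endpoint in $\SA$ and the other in $\SB$; hence $\mX{\SA \cup \SB} = \mY{\SA}{\SB}$, and in particular the hypothesis $\mY{\SA}{\SB} > 0$ says $\mX{\SA\cup\SB} \geq 1$.

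Next I would observe that for any subsets $\SA' \subseteq \SA$ and $\SB' \subseteq \SB$, the sets $\SA'$ and $\SB'$ are themselves independent (being subsets of independent sets), so by the same reasoning $\mX{\SA' \cup \SB'} = \mY{\SA'}{\SB'}$. Therefore the \IS query ``is $\mX{\SA' \cup \SB'} = 0$?'' returns precisely the answer to the \BIS query ``is $\mY{\SA'}{\SB'} = 0$?''. The key structural point to check is that the divide-and-conquer algorithm of \lemref{bisq:exact}, when run on the pair $(\SA,\SB)$, only ever issues \BIS queries on pairs $(\SA_i, \SB_j)$ where $\SA_i$ is obtained by repeated equipartitioning of $\SA$ and $\SB_j$ likewise from $\SB$; in particular $\SA_i \subseteq \SA$ and $\SB_j \subseteq \SB$ throughout the recursion. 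So every \BIS query called for by that algorithm can be answered by a single \IS query on the union $\SA_i \cup \SB_j$.

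I would then simply run the algorithm of \lemref{bisq:exact} on the input pair $(\SA, \SB)$, answering each of its \BIS queries with the corresponding \IS query as above. By \lemref{bisq:exact} this computes $\mY{\SA}{\SB} = \mX{\SA \cup \SB}$ correctly, using $O(1 + \mY{\SA}{\SB} \log n)$ queries; since $\mY{\SA}{\SB} = \mX{\SA\cup\SB} \geq 1$, this bound is $O(\mX{\SA \cup \SB} \log n)$, as desired.

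I do not expect a genuine obstacle here; the only thing requiring care is the invariant that the recursion never mixes vertices of $\SA$ with vertices of $\SB$ on the same side of a pair (so that each queried set decomposes cleanly into an ``$\SA$-part'' and a ``$\SB$-part''), which is immediate from the equipartitioning structure of the quadtree in \lemref{bisq:exact}. The assumption $\mY{\SA}{\SB} > 0$ is used only to absorb the additive ``$+1$'' in the query bound into the $O(\mX{\SA\cup\SB}\log n)$ term.
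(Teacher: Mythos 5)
Your proposal is correct and follows exactly the paper's argument: use $\mX{\SA\cup\SB}=\mY{\SA}{\SB}$ for disjoint independent sets, note that $\IS(\SA'\cup\SB')$ answers $\BIS(\SA',\SB')$ for $\SA'\subseteq\SA$, $\SB'\subseteq\SB$, and run the algorithm of \lemref{bisq:exact} with this substitution. Your extra remark about the quadtree recursion never mixing the two sides is a valid (if implicit in the paper) detail that makes the substitution airtight.
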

\begin{proof}
	Since $\SA$ and $\SB$ are disjoint and independent, we have that
	$\mX{\SA\cup \SB} = \mY{\SA}{\SB}$. Furthermore, for any
	$\SA' \subseteq \SA$ and $\SB' \subseteq \SB$, the query
	$\BIS(\SA',\SB')$ is equivalent to the query
	$\IS(\SA' \cup \SB')$. As such, we can use the algorithm of
	\lemref{bisq:exact}, using the \IS queries as a replacement for
	the \BIS queries, yielding the result.
\end{proof}

The next step is to break the set of interest $\SA$ into independent
sets.

\begin{lemma}
	\lemlab{decompose}%
	Given a set $\SA \subseteq \IRX{n}$, one can decompose it into
	disjoint independent sets $\SB_1, \SB_2, \ldots, \SB_t$, such that
	\begin{compactenuma}
		\item $\SA = \bigcup_{i=1}^t \SB_i$, and
		\item for any $i,j \in \IRX{t}$, with $i < j$, we have
		$\mY{ \SB_i}{\SB_j} > 0$.
	\end{compactenuma}
	Furthermore, computing this decomposition uses only
	$O( 1 + \mX{\SA} \log n )$ \IS queries.
\end{lemma}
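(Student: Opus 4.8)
The plan is to build the decomposition greedily in two phases, each using only \IS queries: first split $\SA$ into a (large) family of disjoint independent sets, then greedily merge that family into a smaller one whose parts are pairwise ``edge-connected''.

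\smallskip
\emph{Phase~1: split into independent sets.} I would run a divide-and-conquer procedure that stops as soon as the current set is found to be independent: on input $\SX$, if $\SX=\varnothing$ return the empty list; query $\IS(\SX)$, and if $\SX$ is independent return the one-element list $(\SX)$; otherwise split $\SX$ into two halves of almost equal size, recurse on each, and return the concatenation. This outputs disjoint independent sets $I_1,\ldots,I_p$ with $\bigcup_k I_k=\SA$. To bound the queries, view the recursion as a balanced binary tree over $\SA$ in which a node is \emph{internal} exactly when its set was queried and found non-independent, i.e. contains an edge. For each edge $uv$ counted by $\mX{\SA}$, every ancestor of the node where $u$ and $v$ first get separated still contains both $u$ and $v$, hence that edge, hence is internal; so the internal nodes are precisely the ancestors of these at most $\mX{\SA}$ ``separating'' nodes, each of depth $O(\log n)$, giving $O(1+\mX{\SA}\log n)$ internal nodes and — since every internal node has two children — the same bound on the number of leaves. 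Every node of the tree costs exactly one \IS query, so Phase~1 costs $O(1+\mX{\SA}\log n)$ queries and produces $p=O(1+\mX{\SA}\log n)$ sets.

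\smallskip
\emph{Phase~2: merge into a pairwise edge-connected family.} Process $I_1,\ldots,I_p$ in order, maintaining a list $\mathcal{O}$ of disjoint independent sets (initially empty). To handle $I_k$, scan the members $O\in\mathcal{O}$ in order and query $\IS(O\cup I_k)$: if some answer says ``independent'', replace the first such $O$ by $O\cup I_k$; if every answer says ``not independent'', append $I_k$ to $\mathcal{O}$. Output $\mathcal{O}$ as $\SB_1,\ldots,\SB_t$. Property~(a) and the independence of every $\SB_i$ are immediate from the construction. For property~(b): whenever a set is appended to $\mathcal{O}$, its defining $I_k$ has — by the decision to append — an edge to every set then in $\mathcal{O}$; hence, given two final sets, the one appended later had, at its appending time, an edge to the (then smaller) set that eventually grew into the other, and since members of $\mathcal{O}$ only grow, that edge still joins the two final sets. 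For the query bound, every \emph{failed} query $\IS(O\cup I_k)$ certifies an edge between the disjoint independent sets $O$ and $I_k$; the failures during the processing of one $I_k$ certify distinct edges (the $O$'s are disjoint), and across different indices each edge counted by $\mX{\SA}$ is certified at most once — only while processing the one of its two $I$-parts that comes later — so there are at most $\mX{\SA}$ failed queries overall; the only remaining queries are at most one ``successful'' query per $I_k$, i.e. at most $p$. Hence Phase~2 also costs $O(1+\mX{\SA}\log n)$ \IS queries, and the total is as claimed.

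\smallskip
The one genuinely delicate point is the Phase~2 invariant — showing that the pairwise-edge property is preserved while sets are repeatedly enlarged; the clean way is the ``later-appended set certifies the edge'' argument above, whereas trying to track a fixed witnessing edge through a chain of merges is messier. Everything else — the Phase~1 tree count and the Phase~2 charging — is routine once organized this way.
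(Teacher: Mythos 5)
Your proof is correct and follows essentially the same two-phase strategy as the paper's: first partition $\SA$ into disjoint independent sets, then greedily merge them into a list whose members are pairwise joined by an edge, charging every failed merge query to a distinct crossing edge. The only difference is in the first phase, where the paper peels off maximal independent prefixes via binary search (yielding at most $1+\mX{\SA}$ pieces) while you use a halving recursion (yielding $O(1+\mX{\SA}\log n)$ pieces); both variants give the stated $O(1+\mX{\SA}\log n)$ bound on the number of \IS queries.
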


\begin{proof}
	Order the elements of $\SA = \brc{ \eA_1, \ldots, \eA_k} $ arbitrarily. The idea is to break $\SA$ into independent
	sets, where each independent set is an interval
	$I_j = \{ \eA_{i_j}, \eA_{i_j+1}, \ldots, \eA_{i_{j+1}-1}
	\}$. This can be done in a greedy fashion from left to right,
	discovering the index where an interval stops being an
	independent set.  Assume inductively that one has computed the
	first $j$ such independent intervals $I_1, \ldots, I_j$, and also assume that
	$I_j \cup \{ \eA_{i_{j+1}} \}$ is not an independent set.
	Next, using binary search on the range
	$\{i_{j+1}+1, \ldots, n\}$, find the maximal $\beta$ such that
	$\{\eA_{i_{j+1}}, \ldots, \eA_\beta\}$ is independent. Set
	$i_{j+2} = \beta+1$,
	$I_{j+1} = \{ \eA_{i_{j+1}}, \ldots, \eA_{i_{j+2}-1} \}$, and
	continue to the next iteration.  Note that each binary search for
	computing an interval uses $O( \log n)$ \IS queries.
	
	For any $j$, we have $\mY{I_j}{I_{j+1}} \geq 1$, which implies
	that the number of computed intervals $\tau$ satisfies $\tau \leq \mX{\SA} +1$.
	As such, this stage uses $O \bigl( (1+\mX{\SA}) \log n\bigr)$
	\IS queries.  This results in a decomposition of $\SA$ into $\tau$
	independent sets $I_1, \ldots, I_\tau$.
	
	\medskip%
	
	In the second stage, starting with the computed collection of
	independent sets, the algorithm greedily tries to merge
	sets. In each step, the algorithm takes two
	independent sets $\SC, \SD$ in the current collection (for
	which it might be possible that their merged set is
	independent), and the algorithm uses an \IS query to check whether
	$\SC \cup \SD$ is an independent set. If it is, then the
	algorithm merges the two sets into one independent
	set (replacing $\SC,\SD$ by the set $\SC \cup \SD$ in the
	current collection of sets). Otherwise, the algorithm marks the
	two sets $\SC$ and $\SD$ as being incompatible with each
	other. Note that if $\SC, \SD$ are incompatible, then for any
	$\SC' \supseteq \SC$ and $\SD' \supseteq \SD$, the sets $\SC'$
	and $\SD'$ are also incompatible. Namely, incompatibility is
	preserved under merger of independent sets, and the algorithm
	can keep track of the incompatible pairs under merger
	(importantly, a merger can not decrease the number of
	incompatible pairs). The algorithm stops when all the current
	sets are pairwise incompatible.
	
	Each merge of two independent sets can be charged to the number
	of independent sets decreasing by one. Each pair of sets that
	is discovered to be incompatible can be charged to the edge
	witnessing that the merged set is \emph{not}
	independent. Since every edge is only charged once by this
	process, it follows that the total number of \IS queries
	performed by the second stage of the algorithm is at most
	$\tau + \mX{\SA} \leq 2 \mX{\SA} + 1$.
	
	The resulting collection of independent sets has the
	desired properties, completing the proof.
\end{proof}

\begin{lemma}
	\lemlab{isq:exact}%
	Given $\SA \subseteq \IRX{n}$, one can deterministically compute
	$\EdgesX{\SA}$, using $O( 1 + \mX{\SA} \log n)$ \IS queries.
	Alternatively, given a budget $t> 0$ and set
	$\SA \subseteq \IRX{n}$, one can decide if $\mX{\SA} > t$ using
	$O( t \log n )$ \IS queries.
\end{lemma}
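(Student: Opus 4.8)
The plan is to combine the decomposition of \lemref{decompose} with the exact bipartite counting of \lemref{bisq:exact}. First I would apply \lemref{decompose} to $\SA$, obtaining a partition $\SA = \SB_1 \cup \cdots \cup \SB_t$ into disjoint independent sets with $\mY{\SB_i}{\SB_j} > 0$ for every $i < j$; this costs $O(1 + \mX{\SA}\log n)$ \IS queries. Since each $\SB_i$ is independent, every edge of $\EdgesX{\SA}$ has its two endpoints in two \emph{distinct} parts, so $\EdgesX{\SA} = \bigcup_{i<j}\EdgesY{\SB_i}{\SB_j}$ is a disjoint union, and in particular $\mX{\SA} = \sum_{i<j}\mY{\SB_i}{\SB_j}$.

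Next, for each pair $i < j$ I would recover $\EdgesY{\SB_i}{\SB_j}$ exactly. Because $\SB_i$ and $\SB_j$ are disjoint independent sets, for any $\SB_i' \subseteq \SB_i$ and $\SB_j' \subseteq \SB_j$ the query $\BIS(\SB_i',\SB_j')$ is answered by $\IS(\SB_i' \cup \SB_j')$ (this is exactly the observation in \lemref{bipartite:like}); hence the deterministic divide-and-conquer algorithm of \lemref{bisq:exact} can be run verbatim with \IS queries standing in for \BIS queries, at a cost of $O(1 + \mY{\SB_i}{\SB_j}\log n)$ \IS queries for that pair. Taking the union of the resulting edge sets over all $\binom{t}{2}$ pairs yields $\EdgesX{\SA}$, hence $\mX{\SA}$.

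It remains to bound the total number of queries. The decomposition step costs $O(1 + \mX{\SA}\log n)$. For the pairwise step, the key point is that, since $\mY{\SB_i}{\SB_j}\geq 1$ for every $i<j$ and the sets $\EdgesY{\SB_i}{\SB_j}$ are disjoint subsets of $\EdgesX{\SA}$, we have $\binom{t}{2} \leq \sum_{i<j}\mY{\SB_i}{\SB_j} = \mX{\SA}$. Therefore the additive $O(1)$ term per pair sums to $O(\mX{\SA})$, while $\sum_{i<j} O\bigl(\mY{\SB_i}{\SB_j}\log n\bigr) = O(\mX{\SA}\log n)$; adding the decomposition cost gives a grand total of $O(1 + \mX{\SA}\log n)$ \IS queries, which is the first statement.

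For the budgeted version, I would run exactly this algorithm while maintaining a running count of the \IS queries issued, aborting the instant this count exceeds $C\,t\log n$, where $C$ is a constant chosen larger than the constant hidden in the $O(1 + \mX{\SA}\log n)$ bound above (assuming $t\geq 1$; if $t<1$, then deciding $\mX{\SA}>t$ just asks whether $\SA$ is non-independent, which one \IS query settles). If the algorithm finishes within the budget we know $\mX{\SA}$ exactly and compare it to $t$; if it aborts, we declare $\mX{\SA}>t$, which is correct because, by the first part, a set with $\mX{\SA}\leq t$ would have been fully processed using at most $C\,t\log n$ queries. Either way only $O(t\log n)$ \IS queries are used. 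The one place needing care — and the only real (if minor) obstacle — is controlling the per-pair overhead: a priori the decomposition could produce many parts, but the guarantee $\mY{\SB_i}{\SB_j}>0$ supplied by \lemref{decompose} forces $\binom{t}{2}\leq \mX{\SA}$, keeping both the overhead and the budgeted cutoff within the claimed bounds.
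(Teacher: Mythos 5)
Your proposal is correct and follows essentially the same route as the paper: decompose $\SA$ into pairwise-adjacent independent sets via \lemref{decompose}, then count edges across each pair by running the divide-and-conquer of \lemref{bisq:exact} with \IS queries standing in for \BIS queries (the content of \lemref{bipartite:like}), charging the per-pair overhead to the guaranteed edge between the parts. Your explicit observation that $\binom{t}{2}\leq \mX{\SA}$, and your treatment of the budgeted version by aborting once the query count exceeds the bound, are exactly the (partly implicit) arguments in the paper's proof.
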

\begin{proof}
	Using the algorithm of \lemref{decompose}, compute the
	decomposition of $\SA$ into independent sets
	$\SB_1,\ldots, \SB_t$.  By construction, for any $i<j$, we have
	that $\mY{\SB_i}{\SB_j} \geq 1$, as some vertex of $\SB_i$ is
	connected to some vertex in $\SB_j$. As such, going over all
	$1\leq i<j\leq t$, compute the set of edges
	$\EdgesY{\SB_i}{\SB_j}$ using the algorithm of
	\lemref{bipartite:like}.  This requires
	$O(\mY{\SB_i}{\SB_j} \log n)$ \IS queries. As such, the total
	number of \IS queries used by this algorithm is
	\begin{math}
	O\bigl( \mX{\SA} \log n + \sum_{i < j} \mY{\SB_i}{\SB_j} \log
	n \bigr)%
	=%
	O( \mX{\SA} \log n).
	\end{math}

	\smallskip%
	
	The budgeted version follows by running the algorithm until
	$\ceil{ c \log n}$ \IS queries have been performed, for $c$ a
	sufficiently large constant. If this happens, then the number
	of edges in the graph is larger than $t$ (as otherwise the
	above implies that the algorithm would have already
	terminated), and the algorithm stops and outputs this fact. %
\end{proof}

\subsection{Algorithms for edge estimation using \IS queries}
\seclab{isq} Our \IS algorithm has two main subroutines. We first
describe and analyze these, then we combine them for the overall
algorithm, which is presented in \thmref{is:approx}.
\subsubsection{Growing Search}

The following is an immediate consequence of \lemref{isq:exact}.

\begin{lemma}
	\lemlab{g:s:small}%
	Let
	\begin{equation*}
		\Lbase = \ceil{\cA \eps^{-4} \log^4 n},
	\end{equation*}
	where $\cA$ is some sufficiently large constant.  Given a set
	$\SA$, one can decide if $\mX{\SA} \leq \Lbase$, and if so get the
	exact value of $\mX{\SA}$, using $O( \eps^{-4} \log^5 n )$ \IS
	queries.
\end{lemma}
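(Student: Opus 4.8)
The plan is to read this off directly from the budgeted version of \lemref{isq:exact}. That lemma provides, for any set $\SA \subseteq \IRX{n}$ and any budget $t > 0$, an algorithm that decides whether $\mX{\SA} > t$ using $O(t \log n)$ \IS queries; crucially, the underlying procedure is just the exact counting algorithm of the first part of \lemref{isq:exact}, run with an early abort the moment more than $t$ edges have been discovered. So the ``decision'' primitive is really ``exact count, with a cutoff,'' which is exactly what we need.

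First I would instantiate this with $t = \Lbase = \ceil{\cA \eps^{-4} \log^4 n}$. Running the exact counting algorithm on $\SA$ while maintaining a running tally of discovered edges, and aborting as soon as the tally exceeds $\Lbase$, exactly one of two things happens: either the algorithm aborts, in which case we correctly report $\mX{\SA} > \Lbase$; or the algorithm terminates without aborting, in which case it has computed $\EdgesX{\SA}$ in full, so $\mX{\SA} \le \Lbase$ and we output this exact value. This establishes the qualitative claim.

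For the query bound: in the non-aborting case the cost is $O(1 + \mX{\SA} \log n) = O(1 + \Lbase \log n)$ by \lemref{isq:exact}, and in the aborting case the analysis of the budgeted version of \lemref{isq:exact} bounds the cost by $O(\Lbase \log n)$ as well. Substituting $\Lbase = O(\eps^{-4}\log^4 n)$ gives $O(\eps^{-4}\log^4 n \cdot \log n) = O(\eps^{-4}\log^5 n)$ \IS queries in both cases, as claimed. The constant $\cA$ plays no role in this argument; it is left free here only so that $\Lbase$ can be taken large enough for the applications in \secref{isq}. There is essentially no obstacle — the only point to verify is that the budgeted primitive of \lemref{isq:exact} genuinely returns the exact count whenever it stays within budget, which it does by construction, so the ``and if so get the exact value'' clause comes for free.
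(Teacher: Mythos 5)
Your proposal is correct and matches the paper's intent exactly: the paper presents this lemma as an immediate consequence of the budgeted version of \lemref{isq:exact}, instantiated with $t = \Lbase$, which is precisely your argument. Your explicit observation that the budgeted primitive returns the exact count whenever it stays within budget is the right (and only) point that needs verifying, and it holds by construction.
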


\begin{lemma}
	\lemlab{next:size}%
	Given parameters $t$, $\eps \in (0,1]$, and a set
	$\SA \subseteq \IRX{n}$, such that
	$\mX{\SA} \geq \max(\Lbase,t^2)$, an algorithm can decide if
	$\mX{\SA} > 2t^2$, or alternatively return a
	$(1\pm \eps)$-approximation to $\mX{\SA}$ if
	$t^2 \leq \mX{\SA} \leq 2t^2$.  The algorithm uses
	$O( \eps^{-1} t \log^2 n)$ \IS queries and succeeds with
	probability $1 - 1/n^{O(1)}$.
\end{lemma}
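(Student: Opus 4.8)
The plan is to sparsify $\SA$ by a random coloring, exactly count the resulting (much sparser) edge set using the exact counter of \lemref{isq:exact} under a running budget, and rescale. First, dispose of the degenerate regime: if $t^2 < \Lbase/2$, then the hypothesis gives $\mX{\SA} \geq \Lbase > 2t^2$, so the algorithm simply reports ``$\mX{\SA} > 2t^2$'' using no queries. So assume $t \geq \sqrt{\Lbase/2}$; since $\Lbase = \ceil{\cA\eps^{-4}\log^4 n}$ with $\cA$ a large constant, this forces $t \geq 4\ctau\log n/\eps$, where $\ctau$ is the constant of \lemref{sparsification_general}. Set $k := \floor{\eps t/(2\ctau\log n)}$, so that $2 \leq k \leq \eps t/(2\ctau\log n)$ and, using $t\leq\sqrt{\mX{\SA}} < n$, also $k \leq \floor{n/2}$. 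Color each vertex of $\IRX{n}$ with an independent uniform color in $\IRX{2k}$, let $\SA_1,\dots,\SA_{2k}$ be the induced partition of $\SA$, and consider the estimator $2k\sum_{i=1}^{k}\mY{\SA_i}{\SA_{k+i}}$. Applying \lemref{sparsification_general}(A) to the graph $\pth{\IRX{n}, \EdgesX{\SA}}$ — whose edges all lie inside $\SA$, so that its two conclusions become statements about $\sum_{i=1}^k\mY{\SA_i}{\SA_{k+i}}$ and $\sum_{j=1}^{2k}\mX{\SA_j}$, each failing with probability $\leq 1/n^4$ — gives that, with probability $\geq 1 - 2/n^4$,
\begin{equation*}
    \Bigl| \mX{\SA} - 2k\sum\nolimits_{i=1}^{k}\mY{\SA_i}{\SA_{k+i}} \Bigr| \;\leq\; 2k\,\ctau\sqrt{\mX{\SA}}\,\log n ,
\end{equation*}
and moreover $\sum_{i=1}^{k}\mY{\SA_i}{\SA_{k+i}} + \sum_{j=1}^{2k}\mX{\SA_j} = O\pth{\mX{\SA}/k + \sqrt{\mX{\SA}}\,\log n}$. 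Call this the \emph{good event} and condition on it henceforth.

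Next, compute $S := \sum_{i=1}^{k}\mY{\SA_i}{\SA_{k+i}}$ via \lemref{isq:exact} and the identity $\mY{\SA_i}{\SA_{k+i}} = \mX{\SA_i\cup\SA_{k+i}} - \mX{\SA_i} - \mX{\SA_{k+i}}$: for each $i$, run the exact counter on $\SA_i\cup\SA_{k+i}$, on $\SA_i$, and on $\SA_{k+i}$. The total number of edges examined over all these calls is $O\pth{S + \sum_{j=1}^{2k}\mX{\SA_j}}$. The crucial twist is to run this under a running budget: as soon as the cumulative number of edges examined exceeds $\Lambda_0 := C\pth{t^2/k + t\log n}$ — for a suitably large absolute constant $C$, calibrated so that on the good event $\mX{\SA}\leq 2t^2$ implies the total is $\leq\Lambda_0$ — the algorithm aborts and outputs ``$\mX{\SA} > 2t^2$''; otherwise it has computed $S$ exactly and outputs $\estm := 2kS$. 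The cost is $O(k)$ base queries (one per subproblem) plus $O(\Lambda_0\log n)$ for the examined edges; since $k \geq \eps t/(4\ctau\log n)$ we have $t^2/k = O(\eps^{-1}t\log n)$, and with $\eps\leq 1$ the total is $O(\eps^{-1}t\log^2 n)$, as required.

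It remains to check correctness, which I would do on the good event. If the algorithm outputs a number, then $S$ equals the true value $\sum_i\mY{\SA_i}{\SA_{k+i}}$, so the displayed bound yields $\cardin{\mX{\SA} - 2kS} \leq 2k\ctau\sqrt{\mX{\SA}}\log n \leq \eps\mX{\SA}$, where the last inequality holds because $2k\ctau\log n \leq \eps t \leq \eps\sqrt{\mX{\SA}}$ — this is exactly where the hypothesis $\mX{\SA}\geq t^2$ enters. Hence $2kS$ is a $(1\pm\eps)$-approximation of $\mX{\SA}$. Conversely, if $\mX{\SA}\leq 2t^2$, then the ``moreover'' bound of the good event together with the choice of $C$ guarantees the cumulative edge count stays $\leq\Lambda_0$, so the algorithm does not abort and does output such a number. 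Contrapositively, since $x \mapsto x/k + \sqrt{x}\log n$ is increasing, exceeding $\Lambda_0$ on the good event forces $\mX{\SA} > 2t^2$, so reporting ``$\mX{\SA} > 2t^2$'' upon an abort is correct. As the good event has probability $\geq 1 - 2/n^4$, the algorithm succeeds with probability $\geq 1 - 1/n^{O(1)}$.

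The step I expect to be the real obstacle is the calibration of $k$. It must be large enough that each of the $\Theta(k)$ matched-pair subproblems carries only $O(\mX{\SA}/k)$ edges, so that \lemref{isq:exact} stays within the budget $O(\eps^{-1}t\log^2 n)$; yet small enough that rescaling by $2k$ inflates the additive sparsification error $\ctau\sqrt{\mX{\SA}}\log n$ to no more than $\eps\mX{\SA}$. The hypothesis $\mX{\SA}\geq t^2$ is precisely what makes the single choice $k = \Theta(\eps t/\log n)$ satisfy both constraints. The remaining care goes into two places: (i) the early-abort logic, where one must argue that an abort \emph{certifies} $\mX{\SA} > 2t^2$ rather than merely signalling that the counter exhausted its budget; and (ii) applying \lemref{sparsification_general}(A) to the auxiliary graph on vertex set $\IRX{n}$ rather than to the induced subgraph on $\SA$, so that the failure probabilities are $1/n^4$ rather than $1/\cardin{\SA}^4$ — the degenerate small-$t$ regime having already been handled trivially via $\mX{\SA}\geq\Lbase$.
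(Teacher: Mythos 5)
Your proposal is correct and follows essentially the same route as the paper: random-coloring sparsification via \lemref{sparsification_general}(A), budgeted exact counting via \lemref{isq:exact} with an early abort certifying $\mX{\SA} > 2t^2$, and rescaling by the number of color classes, with $k = \Theta(\eps t/\log n)$ calibrated exactly as you describe. The only (immaterial) difference is that the paper uses the simpler within-class estimator $k\sum_i \mX{\SA_i}$ rather than your matched-pair estimator $2k\sum_i \mY{\SA_i}{\SA_{k+i}}$, which spares it the inclusion--exclusion step you need to extract bipartite counts from \IS-based exact counting.
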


\begin{proof}
	We color the vertices in $\SA$ randomly using
	$\ncols = \lceil t {\eps} / (\ctau \log n)\rceil$ colors for a
	constant $\ctau$ to be specified shortly, and let
	$\SA_1, \ldots, \SA_\ncols$ be the resulting partition.  By
	\lemref{sparsification_general}, we have for the estimate
	\begin{math}
	\Gamma = \sum_{i=1}^k \mX{\SA_i}
	\end{math}
	that
	\begin{equation*}
		\cardin{\mX{\SA} - \ncols\cdot \Gamma}
		\leq%
		\ctau \ncols \sqrt{ \mX{\SA}} \log n,
	\end{equation*}
	and this holds with probability $\geq 1-n^{-\cC}$, where $\cC$ is
	an arbitrarily large constant, and $\ctau$ is a constant that
	depends only on $\cC$. For this to be a
	$(1\pm \eps)$-approximation, we need that
	\begin{equation*}
		\frac{\ctau k \sqrt{ \mX{\SA}} \log n}{\mX{\SA}}
		\leq%
		\eps.%
	\end{equation*}
	This in turn is equivalent to
	\begin{equation*}
		\mX{\SA} \geq \pth{\frac{\ctau k  \log n}{\eps}}^2
		=%
		t^2,
	\end{equation*}
	which holds because of the assumption that
	$\mX{\SA} \geq \max\{\Lbase,t^2\}$ in the statement.
	
	To proceed, the algorithm starts computing the terms in the
	summation defining $\Gamma$, using the algorithm of
	\lemref{isq:exact}. If at any point in time, the summation exceeds
	$M = 8(t^2/k) = O( \eps^{-1} t \log n) $, then the algorithm stops
	and reports that $\mX{\SA} > 2t^2$. Otherwise, the algorithm
	returns the computed count $\ncols\cdot \Gamma$ as the desired
	approximation. In both cases we are correct with high probability
	by \lemref{sparsification_general}.
	
	We now bound the number of \IS queries. If the algorithm computed
	$\Gamma$ by determining exact edge counts for $m(U_i)$ for all
	$i \in \IRX{k}$, then the number of queries would be
	\begin{math}
	\sum_{i=1}^k O\pth{ 1 + \mX{\SA_i} \log n }.
	\end{math}
	However, the choice of stopping early if the number of queries
	exceeds $M= O( \eps^{-1} t \log n)$ implies that the total number
	of queries is bounded by
	$O( k + M \log n) = O \pth{ \eps^{-1} t \log^2 n}$.
\end{proof}

\begin{lemma}
	Given $\eps \in (0,1]$, and a set $\SA \subseteq \IRX{n}$, one can
	compute a $(1\pm \eps)$-approximation for $\mX{\SA}$. The
	algorithm uses at most
	$O( \eps^{-4} \log^5 n + \eps^{-1}\sqrt{\mX{\SA}} \log^2 n)$ \IS
	queries and succeeds with probability $1-1/n^{O(1)}$.
\end{lemma}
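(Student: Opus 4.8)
The plan is to run a geometric (doubling) search for $\mX{\SA}$, using \lemref{g:s:small} for the base case and \lemref{next:size} to drive the search. First I would invoke \lemref{g:s:small} on $\SA$: if it reports $\mX{\SA}\le\Lbase$ it also returns the exact value and we are done at a cost of $O(\eps^{-4}\log^5 n)$ \IS queries (and then $\eps^{-1}\sqrt{\mX{\SA}}\log^2 n = O(\eps^{-3}\log^4 n)$ is dominated, so the stated bound holds). Otherwise $\mX{\SA} > \Lbase$, and we maintain an integer guess $t$, initialized to $t_0 = \lfloor\sqrt{\Lbase}\rfloor$ (so $t_0^2\le\Lbase$, and $t_0\ge 3$ since $\Lbase\ge\cA$ is a large constant). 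We then repeatedly call \lemref{next:size} with parameters $t$, $\eps$, $\SA$: if it reports $\mX{\SA} > 2t^2$, update $t\leftarrow\lfloor\sqrt{2}\,t\rfloor$ and repeat; if instead it returns an estimate, output that estimate.

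For correctness I would track the invariant that, conditioned on all previous calls to \lemref{next:size} answering correctly, every call is made with $\mX{\SA}\ge\max(\Lbase,t^2)$, so that \lemref{next:size} applies. This holds at $t_0$ because $\mX{\SA} > \Lbase \ge t_0^2$. It is preserved because whenever a call reports $\mX{\SA} > 2t^2$, the next guess $t' = \lfloor\sqrt{2}\,t\rfloor$ satisfies $(t')^2\le 2t^2 < \mX{\SA}$ while still $\mX{\SA}>\Lbase$; moreover $t'>t$ since $t\ge t_0\ge 3$, so the search strictly advances. Hence eventually $2t^2\ge\mX{\SA}$, and since the invariant gives $\mX{\SA}\ge t^2$ at that call, \lemref{next:size} must return a genuine $(1\pm\eps)$-approximation, which the algorithm outputs. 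Termination within $O(\log n)$ steps follows because $t$ grows by at least a constant factor each step ($\lfloor\sqrt{2}\,t\rfloor\ge\sqrt{2}\,t-1$ and $t\ge 3$) while staying below $\sqrt{\mX{\SA}}<n$.

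For the query bound, let $t_0<t_1<\dots<t^{*}$ be the sequence of guesses; since $t_i=\lfloor\sqrt{2}\,t_{i-1}\rfloor\le\sqrt{2}\,t_{i-1}$ and $(t^{*})^2\le\mX{\SA}$ by the invariant, we get $t^{*}\le\sqrt{\mX{\SA}}$ and $\sum_i t_i = O(t^{*}) = O(\sqrt{\mX{\SA}})$ by summing a decreasing geometric series. By \lemref{next:size}, the $i$-th call uses $O(\eps^{-1} t_i\log^2 n)$ \IS queries (this also bounds the cost of the calls that report $\mX{\SA}>2t^2$, which stop early), so the search costs $O(\eps^{-1}\log^2 n\sum_i t_i)=O(\eps^{-1}\sqrt{\mX{\SA}}\log^2 n)$ in total; adding the $O(\eps^{-4}\log^5 n)$ queries of \lemref{g:s:small} gives the claimed bound. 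For the success probability, \lemref{g:s:small} is deterministic and each of the $O(\log n)$ calls to \lemref{next:size} fails with probability $1/n^{O(1)}$, so a union bound yields overall success probability $1-1/n^{O(1)}$.

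The main obstacle is the interplay between the precondition of \lemref{next:size} — which requires $\mX{\SA}\ge\max(\Lbase,t^2)$ for its answer to be meaningful — and the information a "$\mX{\SA}>2t^2$" answer actually supplies: the search ratio must be kept at most $\sqrt{2}$ so the next guess still satisfies $(t')^2\le\mX{\SA}$, yet bounded away from $1$ so only $O(\log n)$ rounds occur and the geometric sum of per-round costs telescopes to $O(\sqrt{\mX{\SA}})$ rather than $O(\mX{\SA})$. The only other point to check carefully is the edge case where $\mX{\SA}$ only slightly exceeds $\Lbase$, which is handled by starting the search at $t_0=\lfloor\sqrt{\Lbase}\rfloor$ (so that $t_0^2\le\Lbase<\mX{\SA}$).
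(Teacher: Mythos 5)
Your proposal is correct and follows essentially the same route as the paper: a base-case check via \lemref{g:s:small} followed by a doubling search on $t$ (starting at $\sqrt{\Lbase}$, growing by $\sqrt{2}$) driven by \lemref{next:size}, with the query cost telescoping as a geometric sum to $O(\eps^{-1}\sqrt{\mX{\SA}}\log^2 n)$ and a union bound over the $O(\log n)$ rounds. Your explicit tracking of the invariant $\mX{\SA}\ge\max(\Lbase,t^2)$ is a slightly more careful justification of why each call to \lemref{next:size} is legitimate, but the argument is the same.
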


\begin{proof}
	The algorithm starts by checking if the number of edges in
	$\mX{\SA}$ is at most $\Lbase = O(\eps^{-4} \log^4 n)$ using the
	algorithm of \lemref{g:s:small}. Otherwise, in the $i$\th
	iteration, the algorithm sets $t_i = \sqrt{2} t_{i-1}$, where
	$t_0 = \sqrt{\Lbase}$, and invokes the algorithm of
	\lemref{next:size} for $t_i$ as the threshold parameter.  If the
	algorithm succeeds in approximating the right size we are
	done. Otherwise, we continue to the next iteration. Taking a union
	bound over the iterations, we have that the algorithm stops with
	high probability before $t_\alpha > 4 \sqrt{ \mX{\SA}}$.  Let
	$\alpha$ be the minimum value for which this holds. The number of
	\IS queries performed by the algorithm is
	\begin{math}
	O( \sum_{i=1}^\alpha t_i \eps^{-1} \log^2 n)%
	=%
	O( \eps^{-1} \sqrt{\mX{\SA}} \log^2 n),
	\end{math}
	since this is a geometric sum.
\end{proof}

\subsubsection{Shrinking Search}
\seclab{edge-existence}%

We are given a graph $\Graph = (\IRX{n}, \Edges)$, and a set
$\SA \subseteq \IRX{n}$. The task at hand is to approximate
$\mX{\SA}$. Let $\nSA = \cardin{\SA}$.

Given an oracle that can answer \IS queries, we can decide if a
specific edge $uv$ exists in the set $\EdgesX{\SA}$, by performing an
\IS query on $\brc{u,v}$. We can treat such \IS queries as membership
oracle queries in the set $\Edges$ of edges in the graph, where the
ground set is the set of all possible edges
$\AllEdges = \binom{\SA }{2} = \Set{ ij }{ i <j \text{ and } i,j \in
	\SA }$, where $\cardin{\AllEdges} = \nSA(\nSA-1)/2$. Invoking the
algorithm of \lemref{est:set} in this case, with
$\BadProb = 1/n^{O(1)}$, implies a $(1\pm \eps)$-approximation to
$\mX{\SA}$ using $O( (\nSA^2 / \mX{\SA}) \eps^{-2} \log n)$ \IS
queries. For our purposes, however, we need a budgeted version of
this.

\begin{lemma}
	\lemlab{is:prev:size}%
	Given parameters $t > 0 $, $\epsA \in (0,1]$, and a set
	$\SA \subseteq \IRX{n}$, with $\nSA = \cardin{\SA}$, an algorithm
	can return either: (a) $\mX{\SA} \leq \nSA^2 / (2t)$, or (b)
	return $(1\pm \epsA)$-approximation to $\mX{\SA}$.  The algorithm
	uses $O( t \log n)$ \IS queries in case (a), and
	$O( t \epsA^{-2} \log n)$ in case (b). The returned result is
	correct with high probability.
\end{lemma}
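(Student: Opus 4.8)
The plan is to regard each \IS query on a pair $\brc{u,v} \subseteq \SA$ as a membership query, in the edge set $\Edges$, over the ground set $\AllEdges = \Set{ij}{i<j \text{ and } i,j \in \SA}$, so that $N \defeq \cardin{\AllEdges} = \binom{\nSA}{2} = \Theta(\nSA^2)$ and $\mX{\SA} = \cardin{\Edges \cap \AllEdges}$. With this view, \secref{subset:size} already supplies the estimators we need; the only new ingredient is to run them in two phases --- a cheap \emph{coarse} phase that decides which of (a) or (b) to report, followed, only when (b) is chosen, by a \emph{fine} phase whose cost is provably within budget. Degenerate regimes (for instance $\nSA \leq 2$, or $t$ below an absolute constant, where the threshold $\nSA^2/(2t) \geq N \geq \mX{\SA}$ makes (a) automatic after $O(\log n)$ \IS queries) are disposed of separately, so I will assume from now on that $1 \leq N/t \leq g \leq N$ for the guess $g \defeq \nSA^2/(2t)$ --- note $N/t \leq g$ because $N = \nSA(\nSA-1)/2 \leq \nSA^2/2$.

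First I would run \lemref{estimate} on the ground set $\AllEdges$, using the \IS oracle as the membership oracle, with guess $g$, error parameter $1/2$, and failure probability $\BadProb = 1/n^{O(1)}$. By \lemref{estimate} this uses $O\pth{(N/g)\log \BadProb^{-1}} = O(t \log n)$ \IS queries and returns a value $Y$ for which, with probability $\geq 1-\BadProb$: (i) if $Y < g/2$ then $\mX{\SA} < g = \nSA^2/(2t)$; and (ii) if $Y \geq g/2$ then $Y/2 \leq \mX{\SA} \leq 3Y/2$. In case (i) the algorithm outputs ``(a)'' and halts, having spent $O(t\log n)$ \IS queries, as required. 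In case (ii) I set $g' \defeq Y/2$; since $Y \geq g/2$ we have $g' \geq g/4$, hence $N/g' \leq 4N/g \leq 4t$, and on the good event also $g' \leq \mX{\SA}$. Now run \lemref{estimate} a second time, on the same ground set, with guess $g'$, error parameter $\epsA$, and failure probability $\BadProb = 1/n^{O(1)}$; this costs $O\pth{\epsA^{-2}(N/g')\log\BadProb^{-1}} = O(t \epsA^{-2}\log n)$ \IS queries --- a bound that holds \emph{deterministically}, since the sample size is computed from the known quantities $N$, $Y$, $\epsA$, $\BadProb$. Calling its output $Y'$, on the intersection of the two good events (of probability $\geq 1-1/n^{O(1)}$ by a union bound), the inequality $\mX{\SA} \geq g'$ forces $Y' \geq g'/2$ via part (A) of \lemref{estimate}, so part (B) yields $(1-\epsA)Y' \leq \mX{\SA} \leq (1+\epsA)Y'$; the algorithm outputs $Y'$ as the ``(b)'' estimate, at total cost $O(t\log n) + O(t\epsA^{-2}\log n) = O(t\epsA^{-2}\log n)$.

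This is essentially bookkeeping on top of \lemref{estimate}, so there is no serious obstacle. The one point that needs a little care is the choice of the pivot $g = \nSA^2/(2t)$: it must be large enough that the coarse phase costs only $O(t\log n)$ queries, yet small enough that, whenever the coarse test ``fails upward'' into case (ii), it certifies $\mX{\SA} = \Omega(N/t)$ and thereby keeps the fine phase within its $O(t\epsA^{-2}\log n)$ budget. Since $g = \nSA^2/(2t) = \Theta(N/t)$, it threads this needle; checking the corner cases where $g$ would drop below a fixed constant, and confirming that the second call's sample size is genuinely bounded deterministically (not merely with high probability), are the only remaining nuisances.
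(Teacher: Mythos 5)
Your proposal is correct and follows essentially the same route as the paper: view \IS queries on pairs as membership queries into $\EdgesX{\SA}\subseteq\binom{\SA}{2}$, make one coarse call to \lemref{estimate} with constant error and guess $g=\Theta(\nSA^2/t)$ to decide between cases (a) and (b), and in case (b) make a second call with error $\epsA$. The paper is terser (it uses $g=\nSA^2/(16t)$ and does not spell out the guess for the second call), whereas you explicitly set $g'=Y/2$ and verify the preconditions and query budget, which is a harmless refinement of the same argument.
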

\begin{proof}
	The idea is to use the sampling as done in \lemref{estimate}, with
	$g= \nSA^2/(16t)$ and $\eps = 1/2$ on the sets of edges
	$\EdgesX{\SA} \subseteq \binom{\SA}{2}$. The sample $\Sample$ used
	is of size $O( (\nSA^2/ g) \log n) = O(t \log n)$, and we check
	for each one of the sampled edges if it is in the graph by using
	an \IS query. If the returned estimate is at most $g/2$, then the
	algorithm returns that it is in case (a).
	
	Otherwise, we invoke the algorithm of \lemref{estimate} again,
	with $\eps = \epsA$, to get the desired approximation, which is
	case (b).
\end{proof}

\subsubsection{The overall \IS Search algorithm}

\begin{theorem}
	\thmlab{is:approx}%
	We are given a graph $\Graph = (\IRX{n}, \Edges)$, with access to
	the edges of the graph via an \IS oracle.  Let
	$\nEdges = \cardin{\Edges}$ be the number of edges in
	$\Graph$. The quantity $\nEdges$ can be $(1\pm \eps)$-approximated
	by an algorithm that uses
	\begin{equation*}
		O\bigl( \eps^{-4} \log^5 n + \min( \sqrt{\nEdges}, n^2 / \nEdges )
		\eps^{-2} \log^2 n \bigr)
	\end{equation*}
	\IS queries, and it succeeds with probability $\geq 1-1/n^{O(1)}$.
\end{theorem}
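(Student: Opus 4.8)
The plan is to build the final algorithm by running the two procedures developed above --- the \emph{Growing Search} (assembled from \lemref{g:s:small} and \lemref{next:size}), which is cheap when $\nEdges$ is small, and the \emph{Shrinking Search} (assembled from \lemref{is:prev:size}), which is cheap when $\nEdges$ is large --- simultaneously, and returning the estimate of whichever one halts first. The reason we cannot just pick the better procedure up front is that $\min(\sqrt{\nEdges},\, n^2/\nEdges)$ changes regime at $\nEdges \approx n^{4/3}$, and \IS queries give us no cheap way to tell which side of this threshold $\nEdges$ lies on (cf.\ \secref{limit}); running both procedures in lockstep sidesteps this.

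First I would verify that each procedure is self-terminating with a controllable cost. Growing Search: after using \lemref{g:s:small} to dispose of the case $\nEdges \le \Lbase$, iterate \lemref{next:size} over scales $t_i = \sqrt{2}\, t_{i-1}$ from $t_0 = \sqrt{\Lbase}$; since consecutive $t_i^2$ differ by a factor of $2$, some $t_i^2$ lands in $[\nEdges, 2\nEdges]$ and \lemref{next:size} returns a $(1\pm\eps)$-approximation there, and the per-scale costs $O(\eps^{-1} t_i \log^2 n)$ form a geometric series summing to $O(\eps^{-1} t_\alpha \log^2 n)$ with $t_\alpha = \Theta(\sqrt{\nEdges})$, so the whole procedure costs $C_G = O(\eps^{-4}\log^5 n + \eps^{-1}\sqrt{\nEdges}\,\log^2 n)$ with high probability. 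Shrinking Search: on $\SA = \IRX{n}$, call \lemref{is:prev:size} with parameter $t = 1, 2, 4, \dots$ and $\epsA = 1/2$; while it reports case~(a) (which only says $\nEdges \le n^2/(2t)$) keep doubling $t$, and by \lemref{estimate} (used inside \lemref{is:prev:size}) the answer switches to case~(b) with high probability once $t = \Theta(n^2/\nEdges)$, yielding a constant-factor estimate $\tilde m$ of $\nEdges$; one final call to \lemref{is:prev:size} with $t^* = \Theta(n^2/\tilde m)$ and $\epsA = \eps$ then returns a $(1\pm\eps)$-approximation. Summing the case-(a) costs $O(t\log n)$ over the geometric sequence and adding the final $O(t^*\eps^{-2}\log n)$ gives total cost $C_S = O\bigl((n^2/\nEdges)\,\eps^{-2}\log n\bigr)$ with high probability.

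Now run the two procedures as coroutines, alternately advancing each by one \IS query, and halt (outputting its estimate) the instant either one terminates. Letting $C_G, C_S$ be the query counts the procedures would use if run to completion, we stop after at most $2\min(C_G, C_S)$ queries; on the high-probability event that both $C_G$ and $C_S$ meet the bounds above, this is
\[
   O\!\pth{\eps^{-4}\log^5 n + \min\!\pth{\eps^{-1}\sqrt{\nEdges}\,\log^2 n,\ (n^2/\nEdges)\,\eps^{-2}\log n}}
   = O\!\pth{\eps^{-4}\log^5 n + \min\!\pth{\sqrt{\nEdges},\ n^2/\nEdges}\,\eps^{-2}\log^2 n},
\]
where the last equality is a two-line case analysis on whether $\nEdges \le n^{4/3}$: in the first case bound the left side by $C_G$ and use $\eps^{-1}\le \eps^{-2}$; in the second case bound it by $C_S$ and use $\log n \le \log^2 n$; and in either case the additive $\eps^{-4}\log^5 n$ term absorbs the situation where it dominates. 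For correctness and the success probability, condition on the event that every one of the $O(\log n)$ internal calls to \lemref{next:size} and \lemref{is:prev:size} made by either procedure succeeds --- this has probability $\ge 1 - 1/n^{O(1)}$ after boosting the hidden constants --- and on this event both procedures would return correct $(1\pm\eps)$-approximations, so in particular the one that halts first does.

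The main obstacle is not any new inequality but pinning down the behavior of the two searches as resumable, self-terminating procedures: verifying that Growing Search's stopping scale is $\Theta(\sqrt{\nEdges})$ and that its cumulative cost is genuinely a geometric series (so the $O(\eps^{-1}\sqrt{\nEdges}\,\log^2 n)$ bound holds without ever guessing $\nEdges$), and verifying that the case-(a)/case-(b) transition in Shrinking Search happens at $t = \Theta(n^2/\nEdges)$ so the doubling search over $t$ finds the right scale within $O(\log n)$ steps. A minor point to handle carefully is the additive $\Theta(\eps^{-4}\log^5 n)$ floor: Growing Search must pay it to rule out $\nEdges \le \Lbase$, but since the coroutine halts as soon as the cheaper procedure does, this floor enters only additively, exactly as in the stated bound.
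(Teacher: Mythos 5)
Your proposal is correct and matches the paper's proof in essence: the paper also dovetails the growing search (\lemref{g:s:small} plus repeated calls to \lemref{next:size}) with the shrinking search (\lemref{is:prev:size}), the only cosmetic difference being that the paper runs both at a common doubling scale $t_i$ within a single loop (one step of each per iteration) rather than interleaving query-by-query, and invokes \lemref{is:prev:size} directly with accuracy $\eps$ instead of first locating the scale with $\epsA=1/2$. The cost analysis via the geometric sum up to $t_I = O(\min(\sqrt{\nEdges},\, n^2/\nEdges))$ and the union bound over the $O(\log n)$ calls are the same as in the paper.
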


\begin{proof}
	Let $t_0 = \ceil{\cA \eps^{-4} \log^4 n}$, for some constant
	$\cA$.  Using the algorithm of \lemref{g:s:small}, we can decide
	if $\nEdges \leq t_0$, and if so, return (the just computed)
	$\nEdges$.
	
	The algorithm now loops for $i=1,2,3,\ldots$, n. In the $i$\th
	iteration, it does the following:
	\begin{compactenumA}
		\smallskip%
		\item If $i=1$ then let $t_1 = \sqrt{t_0}$, otherwise set
		$t_i = 2t_{i-1}$.
		
		\smallskip%
		\item Using the algorithm of \lemref{next:size} decide if
		$\nEdges \leq 2t_i^2$, and if so it returns the desired
		$(1\pm \eps)$-approximation to $\nEdges$. This uses
		$O( t_i \eps^{-1} \log^2 n)$ \IS queries.
		
		\smallskip%
		\item Using the algorithm of \lemref{is:prev:size}, decide if
		$ \nEdges \leq n^2 / (2t_i)$, and if so continue to the next
		iteration. This uses $O( t_i \log n)$ \IS queries.
		
		Otherwise, the algorithm of \lemref{is:prev:size} returned the
		desired $(1\pm\eps)$-approximation, using
		$O( t_i \eps^{-2} \log n )$ \IS queries.
	\end{compactenumA}
	\medskip%
	Combining the two bounds on the \IS queries, we get that the
	$i$\th iteration used $O( t_i \eps^{-2} \log^2 n )$ \IS queries.
	
	The algorithm stopped in the $i$\th iteration, if
	$t_i \geq \sqrt{\nEdges/2}$, or $t_i \geq n^2 / \nEdges$. In
	particular, for the stopping iteration $I$, we have
	$t_I = O( \min( \sqrt{\nEdges}, n^2 / \nEdges ))$.  As such, the
	total number of \IS queries in all iterations except the last one
	is bounded by
	\begin{math}
	O( \sum_{i=1}^{I} t_i\eps^{-2} \log^2 n)%
	=%
	O( t_I \eps^{-2} \log^2 n).
	\end{math}
	The stopping iteration uses $O(t_I \eps^{-2} \log^2 n)$ \IS
	queries. Each bound holds with high probability, and a union bound
	implies the same for the final result.
\end{proof}

\begin{corollary}
	For a graph $\Graph = (\IRX{n}, \Edges)$, with an access to
	$\Graph$ via \IS queries, and a parameter $\eps > 0$, one can
	$(1\pm \eps)$-approximate $\nEdges$ using
	\begin{math}
	O( \eps^{-4} \log^5 n + n^{2/3} \eps^{-2} \log^2 n )
	\end{math}
	\IS queries.
\end{corollary}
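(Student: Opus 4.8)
The plan is to derive this directly from \thmref{is:approx}, whose query bound is
\begin{math}
    O( \eps^{-4} \log^5 n + \min( \sqrt{\nEdges}, n^2 / \nEdges ) \eps^{-2} \log^2 n ).
\end{math}
It therefore suffices to show that $\min( \sqrt{\nEdges}, n^2/\nEdges ) \le n^{2/3}$ regardless of the (unknown) value of $\nEdges$. I would split into two cases according to the size of $\nEdges$ relative to the threshold $n^{4/3}$. If $\nEdges \le n^{4/3}$, then $\sqrt{\nEdges} \le n^{2/3}$, so the first term of the minimum already suffices. If instead $\nEdges \ge n^{4/3}$, then $n^2/\nEdges \le n^2 / n^{4/3} = n^{2/3}$, so the second term suffices. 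In either case the minimum is at most $n^{2/3}$.

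Substituting this bound into the query count of \thmref{is:approx} yields the claimed bound of $O( \eps^{-4} \log^5 n + n^{2/3} \eps^{-2} \log^2 n )$ \IS queries, and the success probability is inherited verbatim from \thmref{is:approx}. Note that the algorithm itself need not know which case holds: the two cases enter only in the analysis of the worst-case query bound, so no modification of the algorithm of \thmref{is:approx} is required.

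There is no genuine obstacle here; the statement is a one-line corollary, and the only content is the elementary observation that the two candidate upper bounds $\sqrt{\nEdges}$ and $n^2/\nEdges$ cross at $\nEdges = n^{4/3}$, where both equal $n^{2/3}$, so their minimum, viewed as a function of $\nEdges$, is globally maximized at exactly $n^{2/3}$.
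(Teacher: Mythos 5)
Your proposal is correct and matches the paper's proof, which likewise deduces the corollary from \thmref{is:approx} via the observation that $\min(\sqrt{\nEdges},\, n^2/\nEdges) \leq n^{2/3}$ for all $\nEdges$; your case split at $\nEdges = n^{4/3}$ just makes that one-liner explicit.
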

\begin{proof}
	Follows readily as
	$\min( \sqrt{\nEdges}, n^2 / \nEdges ) \leq n^{2/3}$, for any
	value of $\nEdges$ between $0$ and $n^2$.
\end{proof}

\subsection{Limitations of \IS queries}
\seclab{limit} \seclab{is-degree}

In this section, we discuss several ways in which \IS queries seem
more restricted than \BIS queries.

\paragraph*{Simulating degree queries with \IS queries.}

A degree query can be simulated by $O(\log n)$ \BIS queries, see
\lemref{d:est}.  In contrast, here we provide a graph instance where
$\Omega\pth{{n} / {\degX{v}}}$ \IS queries are needed to simulate a
degree query. In particular, we show that \IS queries may be no better
than edge existence queries for the task of degree estimation. Since
it is easy to see that $\Omega\pth{ n / \degX{v} }$ edge existence
queries are needed to estimate $\degX{v}$, this lower bound also
applies to \IS queries.

For the lower bound instance, consider a graph which is a clique along
with a separate vertex~$v$ whose neighbors are a subset of the
clique. We claim that \IS queries involving $v$ are essentially
equivalent to edge existence queries. Any edge existence query can be
simulated by an \IS query. On the other hand, any \IS query on the
union of $v$ and at least two clique vertices will always detect a
clique edge. Thus, the only informative \IS queries involve exactly
two vertices.

\paragraph*{Coarse estimator with \IS queries.} 
It is natural to wonder if it is possible to replace the coarse
estimator (\lemref{coarse_estimate}) with an analogous algorithm that
makes $\polylog(n)$ \IS queries. This would immediately imply an
algorithm making $\polylog(n)/\e^4$ \IS queries that estimates the
number of edges. We do not know if this is possible, but one barrier
is a graph consisting of a clique $\SA$ on $O(\sqrt{\nEdges})$
vertices along with a set $\SB$ of $n - O(\sqrt{\nEdges})$ isolated
vertices. We claim that for this graph, the algorithm
\CoarseEstimator{}$(\SA,\SB)$ from \secref{coarse}, using \IS queries
instead of \BIS queries, will output an estimate $\wt \nEdges$ that
differs from $\nEdges$ by a factor of $\Theta(n^{1/3})$. Consider the
execution of \CheckEstimate{}$(\SA, \SB, \wtE)$ from
\algref{correctness}. A natural way to simulate this with \IS queries
would be to use an \IS query on $\SA' \cup \SB'$ instead of a \BIS
query on $(\SA',\SB')$. Assume for the sake of argument that
$\nEdges = n^{4/3}$ and $|\SA| = \sqrt{\nEdges} = n^{2/3}$. Consider
when the estimate $\wtE$ satisfies $\wtE = c n^{5/3}$ for a small
constant $c$. In the \CheckEstimate execution, there will be a value
$i = \Theta( \log n )$ such that, with constant probability,
$\SA' \subseteq \SA$ will contain at least two vertices and
$\SB' \subseteq \SB$ will contain at least one vertex. In this case,
$\mX{\SA'\cup \SB'} \neq 0$ even though $\mY{\SA'}{\SB'} = 0$. Thus,
using \IS queries will lead to incorrectly accepting on such a sample,
and this would lead to the \CoarseEstimator outputting the estimate
$\wtE = \Theta(n^{5/3})$ even though the true number of edges is
$\nEdges = n^{4/3}$.

\section{Conclusions}
\seclab{conclusion}

In this paper, we explored the task of using either \BIS or \IS
queries to estimate the number of edges in a graph. We presented
randomized algorithms giving a $(1+\eps)$-approximation using
$\polylog(n)/\e^4$ \BIS queries and
\begin{math}
\min\left \{n^2/(\e^2 \nEdges), \sqrt{\nEdges}/\e \right
\}\pcdot\polylog(n)
\end{math}
\IS queries. Our algorithms estimate the number of edges by first
sparsifying the original graph and then exactly counting edges
spanning certain bipartite subgraphs. Below we describe a few open
directions for future research.

\subsection{Open directions}

Open questions include using a polylogarithmic number of \BIS queries
to estimate the number of cliques in a graph (see~\cite{ers-ankcs-17}
for an algorithm using degree, neighbor and edge existence queries) or
to sample a uniformly random edge (see~\cite{er-seau-18} for an
algorithm using degree, neighbor and edge existence queries).  In
general, any graph estimation problems may benefit from \BIS or \IS
queries, possibly in combination with standard queries (such as
neighbor queries). Finally, it would be interesting to know what other
oracles, besides subset queries, enable estimating graph parameters
with a polylogarithmic number of queries.

\paragraph{Acknowledgments.}
	We thank the anonymous referees for helpful comments about improving
	the presentation of our paper and for pointing out relevant
	references.
	
	Paul Beame was supported in part by NSF grant CCF-1524246. Sariel Har-Peled was supported in part by NSF AF awards CCF-1421231 and CCF-1217462 and this work was done while visiting the University of Washington on a sabbatical in 2017. 
	Sivaramakrishnan Natarajan Ramamoorthy was supported by the NSF under agreements CCF-1149637, CCF-1420268, CCF-1524251. This work was partially completed while Cyrus Rashtchian was a graduate student at the Paul G. Allen School of CSE, University of Washington, Seattle and was at Microsoft Research, Redmond. During the course of this work, Makrand Sinha was a graduate student at the Paul G. Allen School of CSE, University of Washington, Seattle and was supported by the NSF under agreements CCF-1149637, CCF-1420268, CCF-1524251.

%*flatex input: [./edge_est.bbl]
\newcommand{\etalchar}[1]{$^{#1}$}
 \providecommand{\CNFX}[1]{ {\em{\textrm{(#1)}}}}
  \providecommand{\CNFISAAC}{\CNFX{ISAAC}}

% flatex input end: [./edge_est.bbl]

\end{document}